\theoremstyle{plain}
\newtheorem{theorem}{Theorem}
\newtheorem{lemma}[theorem]{Lemma}
\newtheorem{proposition}[theorem]{Proposition}
\newtheorem{example}[theorem]{Example}
\newcommand{\op}{^{\mathrm{op}}}
\newcommand{\DL}{\mathbf{DL}}
\newcommand{\BDL}{\mathbf{BDL}}
\newcommand{\BA}{\mathbf{BA}}
\newcommand{\RK}{\mathscr{C}}
\newcommand{\SD}{\mathscr{D}}
\newcommand{\Set}{\mathbf{Set}}
\newcommand{\Pos}{\mathbf{Pos}}
\newcommand{\Upsets}{\mathcal{U}}
\newcommand{\pf}{\mathsf{Pf}}
\newcommand{\uf}{\mathsf{Uf}}
\newcommand{\Pow}{\mathsf{P}_c}
\newcommand{\Forg}{\mathsf{U}}
\newcommand{\Free}{\mathsf{F}}
\newcommand{\Id}{\mathsf{Id}}
\newcommand{\two}{\mathbbm{2}}
\newcommand{\inv}{^{-1}}
\newcommand{\epi}{\twoheadrightarrow}
\newcommand{\ce}{^\sigma}
\newcommand{\ced}{^\pi}
\newcommand{\edown}{^{\downarrow}}
\newcommand{\eup}{^{\uparrow}}
\newcommand{\restrict}{\upharpoonright}
\newcommand{\xto}{\xrightarrow}
\newcommand{\dom}{\mathrm{dom}}
\newcommand{\TRL}{T_{\mathrm{RL}}}
\newcommand{\TML}{T_{\mathrm{ML}}}
\newcommand{\LRL}{L_{\mathrm{RL}}}
\newcommand{\LML}{L_{\mathrm{ML}}}
\newcommand{\semML}{\delta^{\mathrm{ML}}}
\newcommand{\semRL}{\delta^{\mathrm{RL}}}
\newcommand{\semRLd}{\delta^{\mathrm{RL^{\partial}}}}
\newcommand{\dia}{\lozenge}
\newcommand{\ari}{\mathrm{ar}}
\newcommand{\lang}{\mathcal{L}}
\newcommand{\fus}{\otimes}
\newcommand{\lRes}{\hspace{2pt}\text{\textbf{--}}\hspace{-3pt}\fus}
\newcommand{\rRes}{\fus\hspace{-3pt}\text{\textbf{--}}\hspace{2pt}}
\newcommand{\ulRes}{-\hspace{-1pt}\fus}
\newcommand{\urRes}{\fus\hspace{-1pt}-}
\newcommand{\dual}{^{\partial}}
\newcommand{\fusd}{\oplus}
\newcommand{\lResd}{\hspace{2pt}\text{\textbf{--}}\hspace{-3pt}\fusd}
\newcommand{\rResd}{\fusd\hspace{-3pt}\text{\textbf{--}}\hspace{2pt}}
\newcommand{\ulResd}{-\hspace{-1pt}\fusd}
\newcommand{\urResd}{\fusd\hspace{-1pt}-}
\begin{document}

\title{Coalgebraic completeness-via-canonicity for \\ distributive substructural logics}

\author{Fredrik Dahlqvist\thanks{Electronic address: \href{mailto:f.dahlqvist@ucl.ac.uk}{f.dahlqvist@ucl.ac.uk}; Corresponding author}\hspace{1ex}}

\author{David Pym\thanks{Electronic address: \href{mailto:d.pym@ucl.ac.uk}{d.pym@ucl.ac.uk}}}
\affil{University College London}
\date{\vspace{-5ex}}

\maketitle
%
%
%

\begin{abstract}
We prove strong completeness of a range of substructural logics with respect to a natural poset-based relational semantics using a coalgebraic version of completeness-via-canonicity. By formalizing the problem in the language of coalgebraic logics, we develop a modular theory which covers a wide variety of different logics under a single framework, and lends itself to further extensions. Moreover, we believe that the coalgebraic framework provides a systematic and principled way to study the relationship between \emph{resource models} on the semantics side, and \emph{substructural logics} on the syntactic side.
\end{abstract}

\section{Introduction} \label{sec:introduction}
This work lies at the intersection of resource semantics/modelling, substructural logics, and the theory of canonical extensions and canonicity. These three areas respectively correspond to the semantic, proof-theoretic, and algebraic sides of the problem we tackle: to give a systematic, modular account of the relation between resource semantics and logical structure. Our approach will mostly be semantically driven, guided by the resource models of separation logic. We will therefore not delve into the proof theory of substructural logics, but rather deal with the equivalent algebraic formulations in terms of residuated lattices (\cite{2003:onoresiduated} and \cite{2007:galatosresiduated} give an overview of the correspondence between classes of residuated lattices and substructural logics).

\textbf{\emph{Resource semantics and modelling}}. Resource interpretations of substructural logics --- see, for example, \cite{Girard87,1999:BI,POY,GMP05,CP2009} --- are well-known and exemplified in the context of program verification and semantics by Ishtiaq and O'Hearn's pointer logic \cite{2001:biIshtiaq} and Reynolds' separation logic 
\cite{2002:reynoldsseparation}, each of which amounts to a model of a specific theory in Boolean BI. Resource semantics and modelling with resources has become an active field of investigation in itself (see, for example, \cite{2012:collinsonDiscipline}). Certain requirements, discussed below, seem natural (and useful in practice) in order to model naturally arising examples of resource. 
\begin{enumerate}
\item We need to be able to compare at least some resources. Indeed, in a completely discrete model of resource (i.e., where no two resources are comparable) it is impossible to model key concepts such as `having enough resources'. On the other hand, there is no reason to assume that \emph{any two} resources be comparable (e.g., heaps). This suggests at least a preorder structure on models. In fact, we take the view that comparing two resources is fundamental and, in particular, if two resources cannot be distinguished in this way then they can be identified. We thus add antisymmetry and work with posets. 
\item We need to be able to combine (some) resources to form new resources (e.g., union of heaps with disjoint domains \cite{2001:biIshtiaq}). We denote the combination operation by $\fus$. An alternative, relational, point of view is that we should be able to specify how resources can be `split up' into pairs of constituent resources. From this perspective, it makes sense to be able to list for a given resource $r$, the pairs $(s_1,s_2)$ of resources which combine to form a resource $s_1\fus s_2\leq r$.
\item All reasonable examples of resources possess `unit' resources with respect to the combination operation $\fus$; that is, special resources that leave other resources unchanged under the combination operation. 
\item The last requirement is crucial, but slightly less intuitive. In the most well-behaved examples of resource models (e.g., heaps or $\mathbb{N}$), if we are given a resource $r$ and a `part' $s$ of $r$, there exists a resource $s'$ that `completes' $s$ to make $r$; that is, we can find a resource $s'$ such that $s\fus s'=r$. More generally, given two resources $r,s$, we want to be able to find the the best $s'$ such that $s\fus s'\leq r$. In a model of resource without this feature, it is impossible to provide an answer to legitimate questions such as `how much additional resource is needed to make statement $\phi$ hold?'. Mathematically, this requirement says that the resource 
composition is a residuated mapping in both its arguments.
\end{enumerate}
The literature on resource modelling, and on separation logic in particular, is vast, but two publications 
-- \cite{2007:calcagnoPOPL} and \cite{2014:Brotherston-Villard} -- are strongly related to this work. Both 
show completeness of `resource logics' by using Sahlqvist formulas, which amounts to using 
completeness-via-canonicity (\cite{2001:ModalLogic,1994:Jonsson}).

\textbf{\emph{Completeness-via-canonicity and substructural logics}.} The logical side of resource modelling is the 
world of substructural logics, such as BI, and of their algebraic formulations; that is, residuated lattices, residuated 
monoids, and related structures. The past decade has seen a fair amount of research into proving the 
completeness of relational semantics for these logics (for BI, for example, \cite{POY,GMP05}), 
using, among other approaches, techniques from the duality theory of lattices. In \cite{2005:GerhkeSubstruct}, 
Dunn \textit{et al.} prove completeness of the full Lambek calculus and several other well-known substructural 
logics with respect to a special type of Kripke semantics by using duality theory. This type of Kripke semantics, 
which is two-sorted in the non-distributive case, was studied in detail by Gehrke in \cite{2006:GerhkeSubstruct}. 
The same techniques have been applied to prove Kripke completeness of fragments of linear logic in 
\cite{2011:coumansrelational}. Finally, the work of Suzuki \cite{2011:Suzuki} explores in much detail 
completeness-via-canonicity for substructural logics. Our work follows in the same vein but with 
with some important differences. Firstly, we use a dual adjunction rather than a dual equivalence to connect 
syntax and semantics. This is akin to working with Kripke frames rather than descriptive general frames in modal 
logics: the models are simpler and more intuitive, but the tightness of the fit between syntax and semantics is not 
as strong. Secondly, we use the topological approach to canonicity of \cite{2001:GehrkeHarding,2004:GehrkeJonsson,2006:VenemaAC} 
because we feel it is the most flexible and modular approach to building canonical (in)equations. Thirdly, we 
only consider distributive structures. This is to some extent a matter a taste. Our choice is driven by the 
desire to keep the theory relatively simple (the non-distributive case is more involved), by the fact that, from 
a resource-modelling perspective, the non-distributive case does not seem to occur `in the wild' and, finally,  
because we place ourselves in the framework of coalgebraic logic, where the category of distributive lattices 
forms a particularly nice `base category'.

\textbf{\emph{Completeness-via-canonicity, coalgebraically.}} The coalgebraic perspective brings many advantages 
to the study of completeness-via-canonicity. First, it greatly clarifies the connection between canonicity as an algebraic 
method and the existence of `canonical models'; that is, strong completeness. Second, it provides a generic framework 
in which to prove completeness-via-canonicity for a vast range of logics (\cite{2013:self}). Third, it is intrinsically 
modular; that is, it provides theorems about complicated logics by combining results for simpler ones 
(\cite{2007:CirsteaModular,2011:self}). We will return to the advantages of working coalgebraically throughout 
the paper.

\section{A coalgebraic perspective on substructural logics} \label{sec:co-alg}

We use the `abstract' version of coalgebraic logic developed in, for example, \cite{2004:KKP}, \cite{2005:KKP} 
and \cite{2010:JacobsExemplaric}; that is, we require the following basic situation:
\begin{equation}\label{diag:fundamentalSituation}
\xymatrix
{
\RK\ar@/^1pc/[rr]^{F} \ar@(l,u)^{L}& \perp&\SD\op\ar@/^1pc/[ll]^{G}\ar@(r,u)_{T\op}
}
\end{equation}
The left hand-side of the diagram is the syntactic side, and the right-hand side the semantic one. The category $\RK$ 
represents a choice of `reasoning kernel'; that is, of logical operations which we consider to be fundamental, whilst 
$L$ is a syntax constructing functor which builds terms over the reasoning kernel. Objects in $\mathscr{D}$ are the 
carriers of models and $T$ specifies the coalgebras on these carriers in which the operations defined by $L$ are  
interpreted. The functors $F$ and $G$ relate the syntax and the semantics, and $F$ is left adjoint to $G$. We will 
denote such an adjunction by $F\dashv G: \RK\to\SD$. Note, as mentioned in the introduction, that we only need 
a dual adjunction, not a full duality. 

\subsection{Syntax}

\textbf{\emph{Reasoning kernels.}} There are three choices for the category $\RK$ which are particularly suited 
to our purpose, the category $\DL$ of distributive lattices, the category $\BDL$ of bounded distributive lattices, 
and the category $\BA$ of boolean algebras. The categories $\DL, \BDL$ and $\BA$ have a very nice technical 
feature from the perspective of coalgebraic logic: each category is locally finite; that is, finitely generated objects 
are finite. This is a very desirable technical property for 
the presentation of endofunctors on this category and for coalgebraic strong completeness theorems. We denote 
by $\Free\dashv\Forg$ the usual free-forgetful adjunction between $\DL$ (resp. $\BDL$, resp. $\BA$) and $\Set$.

\textbf{\emph{True and false.}} The choice of including (or not)  $\top$ and $\bot$ to the logic is clearly provided by the choice 
of reasoning kernel.

\textbf{\emph{Algebras.}} Recall that an algebra for an endofunctor $L:\RK\to\RK$ is an object $A$ of $\RK$ together with a 
morphism $\alpha: LA\to A$. We refer to endofunctors $L:\RK\to\RK$ as \emph{syntax constructors}.

\textbf{\emph{Resource operations}.} The operations on resources specified in the introduction; that is, a combination operation 
and its left and right residuals, are introduced via the following syntax constructor:
\[
\LRL: \RK\to\RK, \begin{cases}
\LRL A =\Free\{I, a\fus b, a\lRes b, a\rRes b\mid a,b\in \Forg A\}/ \equiv \\
\LRL f: \LRL A\to \LRL B, [a]_{\equiv}\mapsto [f(a)]_{\equiv} \, , 
\end{cases}
\]
where $\equiv$ is the fully invariant equivalence relation in $\RK$ generated by the following Distribution Laws for 
non-empty finite subsets $X$ of $A$:
\begin{multicols}{2}
\begin{enumerate}[DL1.]
\item$\bigvee X\fus a=\bigvee[X\fus a]\label{ax:distribLaw1} $ 
\item$a\fus \bigvee X=\bigvee[a\fus X]$
\label{ax:distribLaw2} 
\item$a\lRes\bigwedge X=\bigwedge[a\lRes X]$
\label{ax:distribLaw3} 
\item$\bigvee X\lRes a=\bigwedge[X\lRes a]$
\label{ax:distribLaw4} 
\item$\bigwedge X\rRes a=\bigwedge [X\rRes a]$
\label{ax:distribLaw5} 
\item$a\rRes \bigvee X=\bigwedge[a\rRes X]$.
\label{ax:distribLaw6}
\end{enumerate}
\end{multicols}
\noindent where $\bigvee[X\fus a]=\bigvee\{x\fus a\mid x\in X\}$ and similarly for the other operations. For the categories $\BDL$ and $\BA$ we allow $X$ to be the empty set and use the usual convention that $\bigvee\emptyset=\bot$ and $\bigwedge\emptyset=\top$. The language defined by $\LRL$ is the free $\LRL$-algebra over $\Free V$ which we shall denote by $\lang(\LRL, V)$ or simply $\lang(\LRL)$ when a choice of propositional variables $V$ has been established. It is not difficult to see that $\lang(\LRL)$ is the language of the distributive full Lambek calculus (or residuated lattices) quotiented under the axioms of $\RK$ and DL\ref{ax:distribLaw1}-\ref{ax:distribLaw6}. An $\LRL$-algebra is simply an object of $\RK$ endowed with a nullary operation $I$ and binary operations $\fus,\lRes$ and $\rRes$ satisfying the 
distribution laws above. Note that an $\LRL$-algebra is \emph{not} a distributive residuated lattice. Only some features of 
this structure have been captured by the axioms above. But several are still missing, and will be added subsequently as 
\emph{canonical frame conditions}. $\LRL$-algebras are an example of \emph{Distributive Lattice Expansions}, or DLEs; that is, distributive lattices endowed with a collection of maps of finite arities. When $\RK=\BA$, $\LRL$-algebras are an example \emph{Boolean Algebra Expansions}, or BAEs.

\medskip 

\textbf{\emph{Modularity}.} The syntax developed above is completely modular in two respects. First, it is modular in the choice of `reasoning kernel' since the same formal functor can be overloaded to be used on several different choices of base categories. Second, and most importantly, it allows for a very concise definition and construction of the \emph{fusion} of logics (\cite{2011:self}); that is, the free combination of two logics defined on the same base categories. If $L_1,L_2:\RK\to\RK$ are functors defining languages $\lang(L_1),\lang(L_2)$, then the fusion $\lang(L_1)\oplus \lang(L_2)$ of these languages is simply given by $\lang(L_1+L_2)$ where $+$ is the object-wise coproduct in $\RK$. As an example, consider modal substructural logics; for instance, the `relevant modal logic' of \cite{2011:Suzuki} or the modal resource logics of \cite{2012:collinsonDiscipline,2013:courtaultmodal,2015:courtaultmodal}. The language of positive modal logic (\cite{1995:DunnPositiveML}) is given by the functor
\[
\LML: \RK\to\RK, \begin{cases}
\LML A =\Free\{\dia a, \square a\mid a\in \Forg A\}/\equiv\\
\LML f: \LML A\to \LML B, [a]_{\equiv}\mapsto [f(a)]_{\equiv} \, , 
\end{cases}
\]
where $\equiv$ is the fully invariant equivalence relation in $\RK$ generated by the following Distribution Laws for finite subsets $X$ of $A$:
\begin{multicols}{2}
\begin{enumerate}[ML1.]
\item$\dia (\bigvee X)=\bigvee[\dia X]\label{ax:ml:distribLaw1} $ 
\item$\square (\bigwedge X)=\bigwedge[\square X]$.
\label{ax:ml:distribLaw2} 
\end{enumerate}
\end{multicols}
\noindent where $\bigvee[\dia X]=\bigvee\{\dia x\mid x\in X\}$ and similarly for $\square$. When $\RK=\BA$ one can of course use a single modality and define its dual in the usual fashion, but nothing is lost by considering the full signature, so we will consider $\LML$ to be the functor defining modal logics across all our reasoning kernels, and the \emph{language} of modal logics is then given in our framework by the free $\LML$-algebra over $\Free V$; that is, $\lang(\LML)$. The language of the various substructural modal logics cited above, which is the fusion $\lang(\LRL)\oplus\lang(\LML)$, is thus simply given by $\lang(\LRL+\LML)$. Similarly, we can consider bi-substructural languages as is done in \cite{2015:LayeredGraphLogic}; that is, languages which allow resources to be combined in two different ways. In this case the language is simply given by $\lang(\LRL+\LRL)$.

\subsection{Coalgebraic semantics}

\textbf{\emph{Semantic domain}.} As we mentioned in the introduction, it is reasonable to assume that a model of resources should 
be a poset, and thus taking $\SD=\Pos$ is intuitively justified. This is a particularly attractive choice of `semantic domain' given 
that the category $\Pos$ is related to $\DL$ by the dual adjunction $\pf\dashv\Upsets: \DL\to\Pos\op$, where $\pf$ is the 
functor sending a distributive lattice to its poset of prime filters, and $\DL$-morphisms to their inverse images, and $\Upsets$ 
is the functor sending a poset to the distributive lattice of its upsets and monotone maps to their inverse images. When a distributive lattice is a boolean algebra, it is well-known that prime filters are maximal (i.e., ultrafilters) and the partial 
order on the set of ultrafilter is thus discrete; that is, ultrafilters are only related to themselves. Thus the dual adjunction 
$\pf\dashv\Upsets$ becomes the well-known adjunction $\uf\dashv\mathcal{P}$ between $\BA$ and $\Set\op$.

\medskip

\textbf{\emph{Coalgebras}.} Recall that a coalgebra for an endofunctor $T:\SD\to \SD$, is an object $W$ of $\SD$ together with a 
morphism $\gamma: W\to TW$. The endofunctors that we will consider are built from products and `powersets' and will be 
referred to as \emph{model constructors}. Note that $\Pos$ has products, which are simply the $\Set$ products with the obvious 
partial order on pairs of elements. The `powerset' functor which we will consider is the \emph{convex powerset} functor: $\Pow:\Pos\to\Pos$, sending a poset to its set of convex subsets, where a subset $U$ of a poset $(X,\leq)$ is convex if $x,z\in U$ and $x\leq y\leq z$ implies $y\in U$. The set $\Pow X$ is given a poset structure via the \emph{Egli-Milner} order (see \cite{2011:BKPV,2013:balanpositive}). Note that if $X$ is a set, it can be seen as a trivial poset where any element is only related to itself, and in this case it is not difficult to see that \emph{any} subset $U\subseteq X$ is convex. Thus in the case of sets, the convex subset functor $\Pow$ is simply the usual covariant powerset functor. It therefore makes sense to consider $\Pow$ over all our `semantic domains'.

\medskip

\textbf{\emph{Coalgebras for the resource operations}.} We define the following model constructor, which is used to interpret $I,\fus,\lRes$ 
and $\rRes$:
\[
\TRL: \SD\to\SD,\begin{cases} \TRL W=\two\times\Pow(W\times W)\times \Pow(W\op\times W)\times \Pow(W\times W\op) \\
\TRL f: \TRL W\to \TRL W', U\mapsto(\Id_{\two}\times (f\times f)^3)[U].
\end{cases}
\]
The intuition is that the first component of the structure map of a $\TRL$-coalgebra (to the poset $\two$) separates states into units 
and non-units. The second component sends each `state' $w\in W$ to the pairs of states which it `contains', the next two components 
are used to interpret $\lRes$ and $\rRes$, respectively, and will turn out to be very closely related to the second component. Note that if $\SD=\Pos$, the structure map of coalgebras are monotone, intuitively this means bigger resources can be split up in more ways.

\medskip 

\textbf{\emph{The semantic transformations}.} In the abstract flavour of coalgebraic logic, the semantics is provided by a natural transformation 
$\delta: LG \to G T\op$ called the \emph{semantic transformation}. We show below how this defines an interpretation map, but 
we first define our semantic transformation at every poset $W$ by its action on the generators of $\LRL GW$:
\begin{align*}
\semRL_W(I)&=\{t\in \TRL W \mid \pi_1(t)=0\in\two\}\\
\semRL_W(u\fus v)&=\{t\in \TRL W\mid \exists (x,y)\in \pi_2(t), x\in u, y\in v\}\\
\semRL_W(u\lRes w)&=\{t\in \TRL W\mid \forall (x,y)\in \pi_3(t), x\in u\Rightarrow y\in w\}\\
\semRL_W(w\rRes v)&=\{t\in \TRL W\mid\forall (x,y)\in \pi_4(t), x\in v\Rightarrow y\in w\}
\end{align*}
where $\pi_i, 1\leq i\leq 4$ are the usual projections maps, and $u,v\in GW$.

\begin{proposition}\label{prop:semPreservationProp}
The natural transformation $\semRL$ is well-defined.
\end{proposition}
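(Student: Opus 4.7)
Since $\semRL_W$ is specified only on the generators $I$, $u\fus v$, $u\lRes w$, $w\rRes v$ of the distributive lattice $\LRL GW$, well-definedness as a natural transformation $\LRL G\to G\TRL\op$ requires three verifications: (i) each generator is sent to an upset of $\TRL W$, so the map really lands in $G\TRL W=\Upsets(\TRL W)$; (ii) the distribution laws DL1--DL6 are mapped to genuine equalities in $\Upsets(\TRL W)$, so that the definition descends through the quotient by $\equiv$; and (iii) the resulting family is natural in $W$, i.e.\ for each monotone $f\colon W\to W'$ the equation $\semRL_W\circ\LRL(f\inv)=(\TRL f)\inv\circ\semRL_{W'}$ holds on generators.

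For (i), the clause for $I$ is immediate once the order on $\two$ is fixed so that $\{0\}$ is up-closed. For $u\fus v$, if $t\le s$ in $\TRL W$ then the componentwise order gives $\pi_2(t)\le_{EM}\pi_2(s)$, and the clause of Egli--Milner demanding that every element of $\pi_2(t)$ have an upper bound in $\pi_2(s)$ lifts a witnessing pair $(x,y)\in\pi_2(t)$ with $x\in u$, $y\in v$ to some $(x',y')\in\pi_2(s)$ with $x\le x'$, $y\le y'$; upward closure of $u$ and $v$ then shows $s\in\semRL_W(u\fus v)$. For $u\lRes w$ the argument is dual: given $(x',y')\in\pi_3(s)$ with $x'\in u$, the other clause of Egli--Milner produces $(x,y)\in\pi_3(t)$ with $(x,y)\le(x',y')$ in $W\op\times W$, i.e.\ $x\ge_W x'$ and $y\le_W y'$; upward closure of $u$ upgrades $x'\in u$ to $x\in u$, the hypothesis on $t$ gives $y\in w$, and upward closure of $w$ gives $y'\in w$. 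The case of $w\rRes v$ is symmetric.

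For (ii) one unpacks each law using the fact that $\bigvee=\bigcup$ and $\bigwedge=\bigcap$ in $\Upsets W$. DL1--DL2 reduce to the commutation of the existential quantifier in the semantics of $\fus$ with unions; DL3 and DL6 reduce to the commutation of the universal quantifier in the semantics of $\lRes,\rRes$ with intersections placed in the consequent position; DL4 and DL5 are the order-reversing laws and amount to the logical equivalence of $(\exists x_0\in X:\ x\in x_0)\Rightarrow y\in a$ with $\forall x_0\in X:\ (x\in x_0\Rightarrow y\in a)$. For (iii), the defining formula for $\TRL f$ gives $\pi_1(\TRL f(t))=\pi_1(t)$ and $\pi_i(\TRL f(t))=(f\times f)[\pi_i(t)]$ for $i=2,3,4$, and naturality on each generator then drops out from the equivalence $x\in f\inv(u)\iff f(x)\in u$.

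The only genuine obstacle is step (i) for $\lRes$ and $\rRes$: one must correctly identify which clause of the Egli--Milner order yields the required pair and carefully account for the order-reversal on the $W\op$-component so that upward closure of the relevant upset can still be exploited. Everything else is routine bookkeeping with unions, intersections, and inverse images.
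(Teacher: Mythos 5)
Your proposal is correct and follows essentially the same route as the paper: it verifies that each generator lands in an upset of $\TRL W$ by invoking the appropriate clause of the Egli--Milner order (including the order reversal on the $W\op$ component for $\lRes$ and $\rRes$), and then checks DL1--DL6 by unwinding the quantifiers against unions and intersections, exactly as the paper does; you additionally sketch the naturality check, which the paper leaves implicit. The only blemish is a harmless label swap: for $\rRes$ it is DL5 (meets in the consequent position) that commutes with intersections and DL6 (joins in the antecedent position) that is the order-reversing law, not the other way around.
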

\begin{proof} 
Let us first check that for any $u,v\in\Upsets W$, $\semRL_W(u\fus v), \semRL_W(u\lRes v)$ and $\semRL_W(u\rRes v)$ are upsets in $\TRL W$. Assume first that $t\in\semRL_W(u\fus v)$ and that $t\leq t'\in \TRL W$, we want to show that $t'\in \semRL_W(u\fus v)$ too. By definition of the partial order on $\TRL W$, we have that $\pi_2(t)\leq \pi_2(t')$ for the (component-wise) Egli-Milner order; that is, for each $(x,y)\in \pi_2(t)$ there exists $(x',y')\in \pi_2(t')$ such that $x\leq x'$ and $y\leq y'$. But by definition of $\semRL_W(u\fus v)$ we know that there exists $(x,y)\in \pi_2(t)$ such that $x\in u, y\in v$, and since $u,v$ are upsets it follows that $x'\in u, y'\in v$ and thus $t'\in \semRL_W(u\fus v)$ as desired. Assume now that $t\in\semRL_W(u\lRes v)$ and that $t\leq t'\in \TRL W$, we want to show that $t'\in\semRL_W(u\lRes v)$. To see that this is the case, take any $(x',y')\in \pi_3(t')$ and assume that $x'\in u$, we need to show that $y'\in v$. By definition of the Egli-Milner order we know that there exists $(x,y)\in \pi_3(t)$ such that $x'\leq x$ and $y\leq y'$ (note the inequality reversal due to the presence of $(-)\op$ in the definition of $\TRL$). Since $u$ is an upset, it follows that $x\in u$ and since $t \in\semRL_W(u\lRes v)$, it follows that $y\in v$, and thus $y'\in v$ as $v$ is an upset. The proof is identical for $\semRL_W(u\rRes v)$.

Let us now show that $\semRL_W$ satisfies the distributivity laws DL\ref{ax:distribLaw1}-\ref{ax:distribLaw6}. For any $u_1,u_2,v\in\Upsets W$ we have
\begin{align*}
\semRL_X(u_1\cup u_2,v)=&\{t\in \TRL W\mid \exists (x,y)\in \pi_2(t), x\in u_1\cup u_2, y\in v\}\\
=&\{t\in \TRL W\mid \exists (x,y)\in \pi_2(t), x\in u_1, y\in v\}\cup\\
&\{t\in \TRL W\mid \exists (x,y)\in \pi_2(t), x\in u_2, y\in v\}\\
=&\semRL_X(u_1, v)\cup\semRL_X(u_2,v),
\end{align*}
and the proof is clearly identical for the second argument. The meet preservation in the second argument of $\lRes$ is easy:
\begin{align*}
\semRL_X(v\lRes (u_1\cap u_2))=&\{t\in \TRL W\mid \forall (x,y)\in \pi_3(t), x\in v\Rightarrow y\in (u_1\cap u_2)\}\\
=&\{t\in \TRL W\mid \forall (x,y)\in \pi_3(t), x\in v\Rightarrow y\in u_1\}\hspace{2pt}\cap \\
&\{t\in \TRL W\mid \forall (x,y)\in \pi_3(t), x\in v\Rightarrow y\in u_2\}\\
=&\semRL_X(v\lRes u_1)\cap \semRL_X(v\lRes u_2). 
\end{align*}
For the anti-preservation of joins in the first argument we have
\begin{align*}
\semRL_X((u_1\cup u_2)\lRes v)=&\{t\in \TRL W\mid \forall (x,y)\in \pi_3(t), x\in (u_1\cup u_2)\Rightarrow y\in v\}\\
=&\{t\in \TRL W\mid \forall (x,y)\in \pi_3(t), x\in u_1\Rightarrow y\in v\}\hspace{2pt}\cap\\
&\{t\in \TRL W\mid \forall (x,y)\in \pi_3(t), x\in u_2\Rightarrow y\in v\}\\
=&\semRL_X(u_1\lRes v)\cap \semRL_X(u_2\lRes v), 
\end{align*}
where we use the classical equality $x\in (u_1\cup u_2)\Rightarrow y\in v$ iff $x\notin (u_1\cup u_2)$ or $y\in v$ iff ($x\notin u_1$ or $x\in v$) 
and ($x\notin u_2$ or $x\in v$). The proof 
for $\rRes$ is identical.
\end{proof}
The semantic transformations are thus well-defined. We now show how the interpretation map arises from the semantic transformation. 
Recall that, for a given syntax constructor $L:\RK\to\RK$, the language $\lang(L)$ of $L$ is the free $L$-algebra over $\Free V$. This is equivalent 
to saying that it is the initial $L(-)+\Free V$-algebra. We use initiality to define the interpretation map by putting an $L(-)+\Free V$-algebra 
structure on the `predicates' of a $T$-coalgebra $\gamma: W\to TW$; that is, on the carrier set $GW$. By definition of the coproduct, 
this means defining a morphism $LGW\to GW$ and a morphism $\Free V\to GW$. By adjointness it is easy to see that the latter is simply 
a valuation $v: V\to \Forg GW$. For the former we simply use the semantic transformation and $G$ applied to the coalgebra. The interpretation 
map $\lsem -\rsem_\gamma$ is thus given by the catamorphism:
\[
\xymatrix@C=14ex
{
L\lang(L)+\Free V\ar[dd]\ar@{-->}[r]^-{L\lsem-\rsem_\gamma+\Id_{\Free V}} & LG W+\Free V\ar[d]^{\delta_W+\Id_{\Free V}}\\
& G T W+\Free V\ar[d]^{G\gamma+v}\\
\lang(L)\ar@{-->}[r]_-{\lsem-\rsem_\gamma} & G W 
}
\]

\medskip 

\textbf{\emph{Modularity}.} Following our point on the modularity of the syntax, we highlight the modularity of the coalgebraic semantics too. Modal logic will be interpreted in $\TML$-coalgebra for the functor
\[
\TML: \SD\to\SD,\begin{cases} \TRL W=\Pow(W)\times \Pow(W) \\
\TML f: \TML W\to \TML W', U\mapsto(f)^2[U].
\end{cases}
\]
Note that we are interpreting $\dia$ and $\square$ using different relations. Modulo Dunn's \emph{interaction axioms} (\cite{1995:DunnPositiveML}) one can show that these two relations must be equal in the case of boolean modal logic, and that they can be assumed to be equal in the case of positive modal logic (although models where they are not equal will exist too). The semantics is given as usual by the transformation $\semML: \LML G\to G\TML$ defined at every poset $W$ by its action on the generators of $\LML GW$:
\begin{align*}
\semML_W(\dia u)&=\{(x,y)\in \TML W \mid x\cap u\neq\emptyset\}\\
\semML_W(\square u)& =\{(x,y)\in \TML W \mid y\subseteq u\}
\end{align*}

Model constructors and semantic transformations can be assembled in a way that is dual to the the syntax constructors; that is, using products rather than co-products. Formally, for languages defined by functors $L_1,L_2;\RK\to\RK$ interpreted in coalgebras for the functors $T_1,T_2;\SD\to\SD$ by semantic transformations $\delta^1,\delta^2$ respectively, the fusion $\lang(L_1+L_2)$ is interpreted in $T_1\times T_2$-coalgebra, where the product is taken object-wise in $\SD$, via the semantic transformation
\[
(G\pi_1+G\pi_2)\circ \delta^1+\delta^2: L_1G+L_2G\to GT_1+GT_2\to GT_1\times GT_2
\]
In particular, the semantics of the modal substructural logics defined above is given by the following interpretation maps:
\[
\xymatrix@C=11ex
{
(\LML\hspace{-2pt}+\hspace{-2pt}\LRL)\lang(\LML\hspace{-2pt}+\hspace{-2pt}\LRL(-))\hspace{-2pt}+\hspace{-2pt}\Free V\ar[dd]\ar@{-->}[r]^-{\txt{\scriptsize{$\LML+\LRL\lsem-\rsem_{(\gamma_1\times \gamma_2)}+\Id_{\Free V}$}\\\hspace{1pt}}}& \LML G W\hspace{-2pt}+\hspace{-2pt}\LRL GW\hspace{-2pt}+\hspace{-2pt}\Free V\ar[d]_{\semML_W+\semRL_W+\Id_{\Free V}}\\
& G \TML W\hspace{-2pt}+\hspace{-2pt}G\TRL W\hspace{-2pt}+\hspace{-2pt}\Free V\ar[d]_{G(\gamma_1\times \gamma_2)\circ (G\pi_1+G\pi_2)+v}\\
\lang(\LML\hspace{-2pt}+\hspace{-2pt}\LRL)\ar@{-->}[r]_{\lsem-\rsem_{(\gamma_1\times \gamma_2)}} & G W 
}
\]

\subsection{Advantages of the coalgebraic approach.}
Before we move on to the technical part of this paper, we return to the advantages of our set up. From the perspective of studying the relation between resource semantics and logics, the fundamental situation described by Diagram \ref{diag:fundamentalSituation} is particularly promising. Going from substructural logics to resource models, the coalgebraic approach allows us to `guess' and generate appropriate resource models. Indeed, starting from a `reasoning kernel' $\RK$, the existence of a dual adjunction $F\dashv G$ with a category $\SD$ restricts the kind of model carriers we should consider. Moreover, as we will see later, objects of the type $GFA$ for $A\in\RK$ will play a crucial role and should be \emph{canonical extensions}. This extra requirement determines to a great extent the useful structure(s) one ought to consider for the carriers of resource models. For example when $\RK=\DL$, we cannot take $\SD=\Set$, because $GFA$ is then given by $\mathcal{P}\pf A$ which is not the canonical extension of $A$. It is therefore the framework itself which suggests that non-boolean substructural logics should have posets of resources as their models. Similarly, as we have shown above, the choice of $T$, that is to say of relational structure on the carrier, can be guessed from that of $L$ in a systematic fashion -- at least for the languages we consider here.

Conversely, if we start from requirements on resource models, such as the natural conditions listed in the introduction, we can work from resource model to logic via the existence of a dual adjunction $F\dashv G$ and the constraint that $GFA$ should be the canonical model of $A$. In this way, the `natural' logics to reason about partially ordered models of resources are positive; that is, based on $\DL$. Moreover, the relational structure suggested in the introduction suggests adding binary modalities, in other words functors $L:\DL\to\DL$ building binary `modal' formulas over $\DL$, as was done in this section. Thus we see that in either direction the categorical clarity of coalgebraic logics provides us with a natural and principled methodology for building resource models from substructural logics and vice versa.

Finally, we note that recent work on \emph{positive} coalgebraic logics (\cite{2011:BKPV,2013:balanpositive}) suggests that what is known of boolean modal logics with relational semantics can be adapted in a systematic and principled way to the positive modal logics that we are are considering here. Indeed, our choice of semantics in terms of convex powerset coalgebras is dictated by the fact that this functor is a universal extension to $\Pos$ of the usual powerset functor on $\Set$. In this sense, the coalgebraic perspective also suggests what a `correct' relational model on a poset of resources should be.
  
\section{J\'{o}nsson-Tarski extensions}

The languages $\lang(\LRL)$ and $\lang(\LML)$ which we have introduced earlier are part a class of logics with a very strong property: they  are strongly complete with respect to their semantics. This is what we will now establish, and it is the first step is showing strong completeness of more complex logics based on the languages $\lang(\LRL)$ and $\lang(\LML)$. The proof is an application of the coalgebraic J\'{o}nsson-Tarksi 
theorem.

\begin{theorem}[Coalgebraic J\'{o}nsson-Tarksi theorem, \cite{2005:KKP}]\label{thm:JonTar}
Assuming the basic situation of Diagram (\ref{diag:fundamentalSituation}) and a semantic transformation $\delta: LG\to GT$, if its adjoint 
transpose $\hat{\delta}: TF\to FL$ has a right-inverse $\zeta: FL\to TF$, then for every $L$-algebra $\alpha: LA\to A$, the 
embedding $\eta_A: A\to GFA$ of $A$ into its canonical extension can be lifted to the following $L$-algebra embedding:

\begin{equation}\label{diag:jontar}
\xymatrix
{
LA\ar[rrr]^{\alpha}\ar[d]_{L\eta_A} & & & A\ar[d]^{\eta_A}\\
LGF A\ar[r]_{\delta_{FA}} & GTFA\ar[r]_{G\zeta_A} & GFLA\ar[r]_{GF\alpha} & GFA
}
\end{equation}
\end{theorem}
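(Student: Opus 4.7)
The plan is to verify that the outer rectangle in diagram~(\ref{diag:jontar}) commutes, which is exactly the statement that $\eta_A$ is a homomorphism from $\alpha: LA\to A$ to the $L$-algebra on $GFA$ given by the bottom composite. The first move is to apply naturality of the unit $\eta: \Id_\RK \to GF$ to $\alpha$, yielding $\eta_A \circ \alpha = GF\alpha \circ \eta_{LA}$. It therefore suffices to establish the equation
\begin{equation*}
\eta_{LA} \;=\; G\zeta_A \circ \delta_{FA} \circ L\eta_A
\end{equation*}
of morphisms $LA \to GFLA$ in $\RK$, since composing both sides on the left with $GF\alpha$ and combining with the naturality identity then recovers the outer rectangle.

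Next I would unfold the definition of the adjoint transpose. Under the dual adjunction $F\dashv G$, the natural bijection $\Hom_\SD(W,FB)\cong \Hom_\RK(B,GW)$ sends a morphism $g:W\to FB$ in $\SD$ to $Gg\circ\eta_B:B\to GW$ in $\RK$. By construction, $\hat\delta_A : TFA\to FLA$ is the morphism in $\SD$ whose transpose in $\RK$ is precisely $\delta_{FA}\circ L\eta_A$, so
\begin{equation*}
\delta_{FA}\circ L\eta_A \;=\; G\hat\delta_A \circ \eta_{LA}.
\end{equation*}
Substituting this into the right-hand side of the target equation and invoking functoriality of $G$ together with the right-inverse hypothesis $\hat\delta_A\circ\zeta_A = \Id_{FLA}$ yields
\begin{equation*}
G\zeta_A \circ \delta_{FA}\circ L\eta_A \;=\; G\zeta_A\circ G\hat\delta_A\circ \eta_{LA} \;=\; G(\hat\delta_A\circ \zeta_A)\circ \eta_{LA} \;=\; \eta_{LA},
\end{equation*}
as required. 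The claim that the lifted map is in fact an embedding (i.e.\ monic) is a standard separate fact about the unit into the canonical extension for $\RK \in \{\DL,\BDL,\BA\}$ and is not part of the commutativity argument.

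The entire argument is a short diagram chase whose only ingredients are naturality of $\eta$, functoriality of $G$, the transpose formula expressing $\hat\delta$ in terms of $\delta$, and the right-invertibility of $\hat\delta$. I do not expect any real obstacle at this level of generality; the substantive combinatorial work lies elsewhere, in exhibiting concrete right-inverses $\zeta$ of $\widehat{\semRL}$ and $\widehat{\semML}$ in the semantic settings of this paper. Once such $\zeta$'s are in hand, the present theorem applies formally and produces the desired extension of any $L$-algebra structure to its canonical extension.
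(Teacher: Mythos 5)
Your argument is correct and is the standard proof of this result; note that the paper itself does not prove Theorem \ref{thm:JonTar} but imports it from \cite{2005:KKP}, so there is no in-paper proof to diverge from. Your three ingredients --- naturality of $\eta$ applied to $\alpha$, the transpose identity $G\hat\delta_A\circ\eta_{LA}=\delta_{FA}\circ L\eta_A$, and contravariant functoriality of $G$ turning $\hat\delta_A\circ\zeta_A=\Id_{FLA}$ into $G\zeta_A\circ G\hat\delta_A=\Id_{GFLA}$ --- are exactly what is needed, and your closing remark correctly locates the real work of the paper in constructing the right-inverses $\zeta$ (Theorem \ref{thm:rightInvDelta}).
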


We call the coalgebra $\zeta\circ F\alpha: FA\to TFA$ a \emph{canonical model} of (the $L$-algebra) $A$. If $A$ is the free $L$-algebra over $\Free V$ we recover the usual notion of canonical model. The `truth lemma' follows from the definition of $\eta$. We will call the $L$-algebra $LGFA\to GFA$ defined by Diagram (\ref{diag:jontar}) a \emph{J\'{o}nsson-Tarski extension} of the $\alpha: LA\to A$.  

We now prove the existence of canonical models for the logics defined by $\LML$ and $\LRL$. The result generalizes Lemma 5.1 of \cite{1995:DunnPositiveML}, which builds canonical models for countable DLs with a unary operator, and Lemma 4.26 of \cite{2001:ModalLogic}, which builds canonical models for countable BAs with $n$-ary operators. We essentially show how to build canonical models for arbitrary DLs with $n$-ary expansions all of whose arguments either (1) preserve joins or anti-preserve meets, or (2) preserve meets or anti-preserve joins. The proof is rather involved and is detailed in the appendix.

\begin{theorem}\label{thm:rightInvDelta}
The adjoint transpose of the transformation $\semRL: \LRL G\to G\TRL$ (resp. $\semML: \LML G\to G\TML$) has right inverses at every distributive lattice.
\end{theorem}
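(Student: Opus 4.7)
Unfolding the adjoint transpose via the dual adjunction $\pf \dashv \Upsets$, one checks that $\hat{\semRL}_A(t) \in \pf\LRL A$ is the prime filter determined on generators by
\begin{align*}
I \in \hat{\semRL}_A(t) &\iff \pi_1(t) = 0, \\
a\fus b \in \hat{\semRL}_A(t) &\iff \exists (p_1, p_2) \in \pi_2(t),\ a \in p_1,\ b \in p_2, \\
a\lRes b \in \hat{\semRL}_A(t) &\iff \forall (p_1, p_2) \in \pi_3(t),\ a \in p_1 \Rightarrow b \in p_2,
\end{align*}
and dually for $\rRes$. The plan is to define, for each prime filter $P$ of $\LRL A$, the element $\zeta_A(P) = (\epsilon_P, R^P_\fus, R^P_\lRes, R^P_\rRes) \in \TRL \pf A$ by the canonical choices
\begin{align*}
R^P_\fus &= \{(p_1, p_2) \in \pf A \times \pf A : a \in p_1,\ b \in p_2 \Rightarrow a\fus b \in P\}, \\
R^P_\lRes &= \{(p_1, p_2) : a\lRes b \in P \text{ and } a \in p_1 \Rightarrow b \in p_2\},
\end{align*}
with $R^P_\rRes$ defined dually and $\epsilon_P$ recording whether $I \in P$, and then to verify $\hat{\semRL}_A \circ \zeta_A = \Id$.

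One inclusion of this identity is immediate: if $(p_1, p_2) \in R^P_\fus$ witnesses $a \fus b \in \hat{\semRL}_A(\zeta_A(P))$, then by definition $a \fus b \in P$, and dually for $\lRes, \rRes$. The heart of the proof is the opposite inclusion, which reduces to two existence lemmas: (i) if $a \fus b \in P$, there exist prime filters $p_1 \ni a, p_2 \ni b$ with $(p_1, p_2) \in R^P_\fus$; and (ii) if $a\lRes b \notin P$, there exist prime filters $p_1 \ni a, p_2 \not\ni b$ with $(p_1, p_2) \in R^P_\lRes$ (and symmetrically for $\rRes$).

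My strategy for (i) is to apply Zorn's lemma to the poset of pairs of filters $(F_1, F_2)$ with $a \in F_1$, $b \in F_2$ and compatibility condition $a' \in F_1, b' \in F_2 \Rightarrow a' \fus b' \in P$, ordered componentwise. A componentwise-maximal such pair must be prime in each coordinate: if $x \vee y \in F_1$ while extending $F_1$ by neither $x$ nor $y$ preserves compatibility, I combine the two witnesses using distributivity of $\wedge$ over $\vee$ in $A$ together with DL\ref{ax:distribLaw1}, deriving $(a' \wedge (x \vee y)) \fus b' \notin P$ and contradicting compatibility with the primeness of $P$. The argument for $F_2$ is symmetric via DL\ref{ax:distribLaw2}. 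For (ii), the same Zornian pattern applies to pairs (filter, ideal) $(F_1, I_2)$ with $a \in F_1$, $b \in I_2$ and dual compatibility $a' \in F_1, b' \in I_2 \Rightarrow a' \lRes b' \notin P$; primeness of both coordinates of a maximal pair follows from DL\ref{ax:distribLaw3}--DL\ref{ax:distribLaw4} combined with the (anti-)monotonicity of $\lRes$, and the case of $\rRes$ is symmetric via DL\ref{ax:distribLaw5}--DL\ref{ax:distribLaw6}.

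The main obstacle is precisely the \emph{joint} extension step: one cannot first extend $F_1$ to a prime filter by Birkhoff and then extend $F_2$, because the first extension may silently destroy the joint compatibility. Maximizing the pair as a single object and reading primeness off the distribution laws is exactly the mechanism that turns these six identities into a canonical-model construction. Finally, $\semML$ specializes to Dunn's classical canonical model for positive modal logic \cite{1995:DunnPositiveML}: set $\zeta_A(P) = (R^P_\dia, R^P_\square)$ with $R^P_\dia = \{p \in \pf A : a \in p \Rightarrow \dia a \in P\}$ and $R^P_\square = \{p : \square a \in P \Rightarrow a \in p\}$; here single-coordinate Birkhoff extensions suffice, with ML\ref{ax:ml:distribLaw1}--ML\ref{ax:ml:distribLaw2} ensuring that the sets $\{a : \dia a \notin P\}$ and $\{a : \square a \in P\}$ are respectively an ideal and a filter.
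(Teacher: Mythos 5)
Your proposal is correct and follows essentially the same route as the paper's appendix proof: unfold the adjoint transpose on generators, define $\zeta_A$ by the same canonical relations on prime filters, and establish the two existence lemmas by maximizing a \emph{jointly} compatible pair via Zorn's lemma, reading primeness of the maximal coordinates off the distribution laws together with the primeness of $P$. The only difference is cosmetic bookkeeping: the paper's Zorn poset carries explicit ideal components $(I_1,I_2)$ recording the incompatibility witnesses, whereas you impose the compatibility condition directly on the pair of filters (resp.\ the filter--ideal pair); the two formulations produce the same maximal objects and the same contradiction.
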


\medskip

\textbf{\emph{Strong completeness}:} Let us now define what we exactly mean by strong completeness.  Let $\RK$ be $\DL,\BDL$ or $\BA$, $L:\RK\to\RK$, $V$ be a set of propositional variables, $q:\lang(L)\epi \mathcal{Q}$ be a regular epi, and let $\Phi,\Psi\subseteq \mathcal{Q}$ be two families of `formulas' such that $\Phi\not\vdash\Psi$; that is, 
such that no finite set $\Phi_0$ of elements of $\Phi$ and no finite set $\Psi_0$ of elements of $\Psi$ can be found such that $\bigwedge\Phi_0\leq \bigvee\Psi_0$. The statement that $\mathcal{Q}$ is strongly complete w.r.t. to a class $\mathcal{T}$ of $T$-coalgebras means that for any such choice of $\Phi,\Psi$ there exists a $T$-coalgebra  
$\gamma: X\to TX$ in $\mathcal{T}$, a valuation $v:\Free V\to GX$, and a point $x\in X$ such that $x\in\lsem a\rsem_{(\gamma,v)}$ for all $a\in\Phi$ and $x\notin\lsem b\rsem_{(\gamma,v)}$ for all $b\in \Psi$. 

\begin{theorem}\label{thm:strongComp}
The logic defined by $\LRL$ (resp. $\LML$) is sound and strongly complete with respect to the class of all $\TRL$- (resp. $\TML$-) coalgebras.
\end{theorem}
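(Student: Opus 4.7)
The strategy is to apply the coalgebraic J\'onsson-Tarski theorem (Theorem~\ref{thm:JonTar}) to the quotient $\mathcal{Q}$ itself, build a separating point in the resulting canonical coalgebra via the prime filter theorem, and conclude via a truth lemma. Soundness is immediate from Proposition~\ref{prop:semPreservationProp}: the semantic clauses defining $\semRL$ (and $\semML$) were verified to be lattice homomorphisms satisfying the relevant distribution laws, so every $\TRL$- (resp.\ $\TML$-) coalgebra validates the derivable (in)equations of $\lang(\LRL)$ (resp.\ $\lang(\LML)$).

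For strong completeness, let $q:\lang(\LRL)\epi\mathcal{Q}$ be a regular epi with $\Phi\not\vdash\Psi$ in $\mathcal{Q}$. Set $A:=\mathcal{Q}$ and let $\alpha:\LRL A\to A$ denote its induced $\LRL$-algebra structure. Theorem~\ref{thm:rightInvDelta} furnishes the right inverse $\zeta:F\LRL\to\TRL F$ needed to apply Theorem~\ref{thm:JonTar}, yielding the canonical coalgebra $\gamma:=\zeta_A\circ F\alpha: FA\to\TRL FA$ together with a J\'onsson-Tarski extension on $GFA$ that makes the Stone embedding $\eta_A:A\to GFA$ an $\LRL$-algebra morphism. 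Define the valuation $v:\Free V\to GFA$ to be the mate, under $\Free\dashv\Forg$, of $\eta_A\circ q|_{V}$. Then $\eta_A\circ q:\lang(\LRL)\to GFA$ is an $\LRL$-algebra morphism extending $v$ on propositional variables; by the universal property of $\lang(\LRL)$ as the initial $\LRL(-)+\Free V$-algebra it agrees with the interpretation map. This gives the truth lemma: for every $a\in\lang(\LRL)$ and every $x\in FA$, $x\in\lsem a\rsem_{(\gamma,v)}$ iff $q(a)\in x$.

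Now, because $\Phi\not\vdash\Psi$ in $\mathcal{Q}$, no finite $\Phi_0\subseteq\Phi$, $\Psi_0\subseteq\Psi$ satisfy $\bigwedge\Phi_0\leq\bigvee\Psi_0$. The prime filter theorem for distributive lattices (which specialises to the ultrafilter theorem when $\RK=\BA$, in which case $FA$ is automatically discrete) then produces a prime filter $x\in FA$ with $\Phi\subseteq x$ and $\Psi\cap x=\emptyset$. Combined with the truth lemma this delivers $x\in\lsem a\rsem_{(\gamma,v)}$ for every $a\in\Phi$ and $x\notin\lsem b\rsem_{(\gamma,v)}$ for every $b\in\Psi$, as required. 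The argument for $\LML$ is entirely parallel, since Theorem~\ref{thm:rightInvDelta} treats both signatures uniformly.

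The genuinely hard step---constructing a right inverse $\zeta$ to the adjoint transpose of $\semRL$ and $\semML$---has already been deferred to Theorem~\ref{thm:rightInvDelta}. Once that is in hand, the only conceptual subtlety remaining is the truth lemma, namely that the canonical interpretation map factors as $\eta_A\circ q$. This hinges on Diagram~(\ref{diag:jontar}) genuinely exhibiting $\eta_A$ as an $\LRL$-algebra morphism with respect to the J\'onsson-Tarski extension, which in turn depends essentially on $\zeta$ being a section of $\hat{\delta}$---closing the loop back to Theorem~\ref{thm:rightInvDelta}.
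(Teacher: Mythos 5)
Your proof is correct and takes essentially the same route as the paper's: apply Theorems \ref{thm:JonTar} and \ref{thm:rightInvDelta} to obtain the J\'onsson--Tarski extension and the canonical coalgebra, use the prime filter theorem to separate $\Phi$ from $\Psi$, and conclude via the truth lemma that the interpretation map coincides with $\eta$. The only difference is cosmetic: you spell out the valuation and the initiality argument behind the truth lemma more explicitly than the paper, which simply asserts that the interpretation coincides with $\eta_{\lang(\LRL)}$.
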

\begin{proof}
We show the case of $\lang(\LRL)$. Let $\Phi,\Psi\subseteq \lang(\LRL)$ and $\Phi\not\vdash\Psi$. Then the filter $\langle\Phi\rangle\eup$ generated by $\Phi$ and the ideal $\langle\Psi\rangle\edown$ generated by $\Psi$ obey $\langle\Phi\rangle\eup\cap \langle\Psi\rangle\edown=\emptyset$. By the PIT there exists a prime filter $w_{\Phi}$ extending $\langle\Phi\rangle\eup$ such that $w_\Phi\cap\langle\Psi\rangle\edown=\emptyset$. By Theorems \ref{thm:JonTar} and \ref{thm:rightInvDelta}, the $L$-algebra $\lang(\LRL)$ has a J\'{o}nsson-Tarski extension which provides an interpretation in the $T$-coalgebra $\pf\lang(\LRL)\to \pf \LRL\lang(\LRL)\xto{\zeta_{\lang(\LRL)}} T\pf\lang(\LRL)$ coinciding with $\eta_{\lang(\LRL)}$. In this interpretation $w_\Phi\in\lsem a\rsem$ for all $a\in\Phi$ and $w_\Phi\notin\lsem b\rsem$ for all $b\in\Psi$.
\end{proof}

\section{Canonical extensions and canonical equations}

In the previous section we have shown how to embed an $L$-algebra with carrier $A$ into an $L$-algebra with carrier $GFA$. When $\RK=\DL,\BDL$ or $\BA$ carriers of this shape are known as \emph{canonical extensions} (and denoted $A\ce$) and a great deal is known about them. The theory of canonical extensions in $\DL$ has been extended to boolean algebras with operators (BAOs) (\cite{1951:Jonsson}) and to distributive lattice expansions (DLEs) (\cite{1994:GehrkeJonsson,2004:GehrkeJonsson}) and forms the basis of the theory of canonicity which consists in determining when the validity of an an equation  in a DLEs transfer to its canonical extension; that is, when $A\models s=t$ implies $A\ce\models s=t$. Note that the canonical and J\'{o}nsson-Tarski extensions are in general not equal. This section deals only with canonical extensions, but we will see in the next section how these results can be combined with the J\'{o}nsson-Tarski construction of Theorem \ref{thm:JonTar}.

\subsection{Canonical extension of distributive lattices}

We now briefly present the salient facts about canonical extensions for distributive lattices. For any $A$ in $\DL$, $\Upsets\pf A$ is known as the \emph{canonical extension} of $A$ and denoted $A\ce$. It can be characterised uniquely up to isomorphism through purely algebraic properties, namely that $A$ is \emph{dense} and \emph{compact} in $A\ce$. For our purpose however, \emph{defining} the canonical extension of $A$ as $\Upsets\pf A$ will be sufficient. The canonical extension $A\ce$ of a distributive lattice $A$ is always \emph{completely distributive} (see \cite{2004:GehrkeJonsson}). The following terminology will be important: $A\ce$ is a completion of $A$ and all joins of elements of $A$ therefore exist in $A\ce$, such elements are called \emph{open} and their set is denoted by $O(A)$. Dually, meets in $A\ce$ of elements of $A$ will be called \emph{closed} and their set denoted $K(A)$. Elements of $A=K(A)\cap O(A)$ are therefore called \emph{clopens}.

\subsection{Canonical extension of distributive lattice expansions}\label{sec:DLE}

We now sketch the theory of canonical extensions for Distributive Lattice Expansions (DLE) --- for the details, 
see \cite{1994:GehrkeJonsson,2004:GehrkeJonsson}. Each map $f: \Forg A^n\to \Forg A$ can be extended to a map 
$(\Forg A\ce)^n\to\Forg A\ce$ in two canonical ways: 
\begin{align*}
f\ce(x)=\bigvee\{\bigwedge f[d,u]\mid K^n\ni d\leq x\leq u\in O^n\}\\
f\ced(x)=\bigwedge\{\bigvee f[d,u]\mid K^n\ni d\leq x\leq u\in O^n\}, 
\end{align*}
where $f[d,u]=\{f(a)\mid a\in A^n, d\leq a\leq u\}$. Note that since $A$ is compact in $A\ce$ the intervals $[d,u]$ are never empty, which 
justifies these definitions. For a signature $\Sigma$, the \emph{canonical extension} of a 
$\Sigma$-DLE $(A,(f_s: \Forg A^{\ari(n)}\to \Forg A)_{s\in \Sigma})$ is defined to be the $\Sigma$-DLE 
$(A\ce, (f_s\ce: \Forg (A\ce)^{\ari(n)}\to \Forg A\ce)_{s\in \Sigma})$, and similarly for BAEs. We summarize some 
important facts about canonical extensions of maps in the following proposition, proofs can be found in, for example, 
\cite{2001:GehrkeHarding,2004:GehrkeJonsson,2006:VenemaAC}: 

\begin{proposition}
Let $A$ be a distributive lattice, and $f: \Forg A^n\to \Forg A$. 
\begin{enumerate}
\item $f\ce\restrict A^n=f\ced\restrict A^n=f$.
\item $f\ce\leq f\ced$ under pointwise ordering.
\item If $f$ is monotone in each argument, then
$f\ce\restrict(K\cup O)^n=f\ced\restrict(K\cup O)^n$.
\end{enumerate}
\end{proposition}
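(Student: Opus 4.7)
The plan is to exploit two basic facts about canonical extensions which I will use freely: $A\ce$ is completely distributive (so $K$ is closed under finite joins and $O$ under finite meets), and $A$ is compact in $A\ce$ in the sense that whenever $d\in K$ and $u\in O$ satisfy $d\leq u$, there is some $a\in A$ with $d\leq a\leq u$; also $A=K\cap O$. For part (1), if $x\in A^n$ then $x\in K^n\cap O^n$, so $(d,u)=(x,x)$ is an admissible pair in the defining sup/inf, and $f[x,x]=\{f(x)\}$; this yields $f(x)\leq f\ce(x)$ and $f\ced(x)\leq f(x)$. Conversely, any admissible $(d,u)$ satisfies $d\leq x\leq u$, and since $x\in A^n$ we have $x\in[d,u]$, so $f(x)\in f[d,u]$, giving $\bigwedge f[d,u]\leq f(x)\leq\bigvee f[d,u]$; taking the appropriate sup (resp.\ inf) completes part (1).

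For part (2), given admissible pairs $(d_1,u_1)$ and $(d_2,u_2)$ for $x$, I would form $d:=d_1\vee d_2$ and $u:=u_1\wedge u_2$ componentwise. By the closure properties above, $d\in K^n$ and $u\in O^n$, and clearly $d\leq x\leq u$. Applying compactness in each coordinate produces $a\in A^n$ with $d\leq a\leq u$, so $a\in[d_1,u_1]\cap[d_2,u_2]$, giving $\bigwedge f[d_1,u_1]\leq f(a)\leq\bigvee f[d_2,u_2]$. Ranging the two pairs independently yields $f\ce(x)\leq f\ced(x)$.

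For part (3) I would first treat the pure cases $x\in K^n$ and $x\in O^n$ and then assemble the mixed case coordinatewise. If $x\in K^n$, monotonicity of $f$ shows that enlarging $d$ in the defining expressions for $f\ce$ and $f\ced$ shrinks the interval $[d,u]$, so it is optimal to take $d=x$; thus $f\ce(x)=\bigwedge_{O^n\ni u\geq x}\bigwedge f[x,u]$ and $f\ced(x)=\bigwedge_{O^n\ni u\geq x}\bigvee f[x,u]$. A short squeeze shows that both quantities coincide with $\bigwedge\{f(a):a\in A^n,\ a\geq x\}$: for each $a\in A^n$ with $a\geq x$ the choice $u=a$ gives $\bigwedge f[x,a]\leq f(a)$ and (by monotonicity) $\bigvee f[x,a]=f(a)$, while compactness guarantees that any admissible $u$ contains some $a'\in A^n$ above $x$, bounding $\bigvee f[x,u]$ from below by $f(a')$. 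Dually, for $x\in O^n$ one obtains $f\ce(x)=f\ced(x)=\bigvee\{f(a):a\in A^n,\ a\leq x\}$. For a general $x\in(K\cup O)^n$ each coordinate is handled independently: the $K$-coordinates of $x$ fix the corresponding coordinates of $d$, the $O$-coordinates fix the corresponding coordinates of $u$, and the same squeeze delivers equality.

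The main technical obstacle is the coordinatewise bookkeeping in part (3). Although monotonicity morally says ``push each coordinate of $(d,u)$ against $x$'', turning this into an equality between a supremum-of-infima and an infimum-of-suprema requires the careful squeeze above combining compactness with monotonicity of $f$, and this becomes notationally heavy once different coordinates of $x$ lie in $K$ versus $O$. Parts (1) and (2), by contrast, are essentially immediate from compactness and complete distributivity.
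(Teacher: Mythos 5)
The paper itself does not prove this proposition --- it defers to the cited literature --- so your argument can only be judged on its own terms. Parts (1) and (2) are correct: the observation that $(x,x)$ is an admissible pair with $f[x,x]=\{f(x)\}$, together with $f(x)\in f[d,u]$ for every admissible pair, gives (1); and combining two admissible pairs into one via $d_1\vee d_2\in K^n$, $u_1\wedge u_2\in O^n$ and invoking compactness gives (2). Your treatment of the two \emph{pure} cases of part (3), $x\in K^n$ and $x\in O^n$, is also essentially right, modulo one slip: after fixing $d=x$ you should have $f\ce(x)=\bigvee_{u}\bigwedge f[x,u]$, not $\bigwedge_u\bigwedge f[x,u]$ (harmless, since for monotone $f$ every term $\bigwedge f[x,u]$ equals $\bigwedge\{f(a)\mid a\in A^n,\ a\geq x\}$, but the ``optimal $d$'' reasoning you give only justifies the $\bigvee_u$ form). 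The squeeze works in the pure cases precisely because one of the two expressions collapses to a single meet (resp.\ join) over lattice elements, so no interchange of $\bigvee$ and $\bigwedge$ is ever required.

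The mixed case is a genuine gap, and it cannot be closed. Writing $x=(k,o)$ with $k\in K^p$ and $o\in O^q$, the coordinatewise reduction you describe (push the $K$-coordinates of $d$ and the $O$-coordinates of $u$ against $x$, then use the pure-case computation in each block) yields
\[
f\ce(k,o)=\bigvee_{b\leq o}\,\bigwedge_{a\geq k}f(a,b),\qquad
f\ced(k,o)=\bigwedge_{a\geq k}\,\bigvee_{b\leq o}f(a,b),
\]
with $a$ ranging over $A^p$ and $b$ over $A^q$. Only the inequality $\leq$ between these is automatic; the reverse is a sup--inf interchange that ``the same squeeze'' does not deliver, and in fact it fails. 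Take $A$ to be the finite--cofinite algebra on $\mathbb{N}$, so that $\uf A\cong\mathbb{N}\cup\{\infty\}$ and $A\ce=\mathcal{P}(\mathbb{N}\cup\{\infty\})$, and let $f(S,T)=\top$ if $S\cap T\neq\emptyset$ and $f(S,T)=\bot$ otherwise; this is monotone (indeed it preserves binary joins in each argument). Put $k=\{\infty\}\in K$ and let $o\in O$ be the set of principal ultrafilters at even numbers. Then $\{a\mid a\geq k\}$ is the set of cofinite sets and $\{b\mid b\leq o\}$ the set of finite sets of evens. Every finite $b$ is avoided by some cofinite $a$, so $\bigwedge_{a\geq k}f(a,b)=\bot$ for each $b$ and $f\ce(k,o)=\bot$; every cofinite $a$ meets some even singleton $b\leq o$, so $\bigvee_{b\leq o}f(a,b)=\top$ for each $a$ and $f\ced(k,o)=\top$. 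So item (3) with $(K\cup O)^n$ is false as stated; what the cited sources actually prove is agreement on $K^n\cup O^n$ --- tuples all of whose coordinates are closed, or all open --- which is exactly the two pure cases you did establish. The ``notationally heavy bookkeeping'' you were bracing for is not a presentational nuisance but the precise point at which the claim breaks down, and any write-up of part (3) should restrict it accordingly.
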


We call a monotone map $f: \Forg A^n\to \Forg A$ \emph{smooth in its $i^{th}$ argument} ($1\leq i\leq n$) if, for every $x_1,\ldots,x_{i-1},x_{i+1},\ldots,x_n\in K\cup O$,
\[
f\ce(x_1,\ldots,x_{i-1},x_i,x_{i+1},\ldots,x_n)=f\ced(x_1,\ldots,x_{i-1},x_i,x_{i+1},\ldots,x_n), 
\]
for every $x_i\in A\ce$. A map $f: \Forg A^n\to \Forg A$ is called \emph{smooth} if it is smooth in each of its arguments.

In order to study effectively the canonical extension of maps, we need to define six topologies on $A\ce$. First, we define $\sigma\eup,\sigma\edown$ and $\sigma$ as the topologies generated by the bases $\{\uparrow p\mid p\in K\},\{\downarrow u\mid u\in O\}$ and $\{\uparrow p\cap\downarrow u\mid K\ni p\leq \in O\}$. The next set of topologies 
is well-known to domain theorists: a \emph{Scott open}  in $A\ce$ is a subset $U\subseteq A\ce$ such that (1) $U$ is an upset and (2) 
for any up-directed set $D$ such that $\bigvee D\in U$, $D\cap U\neq\emptyset$. The collection of Scott opens forms a topology 
called the \emph{Scott topology}, which we denote $\gamma\eup$. The dual topology will be denoted by $\gamma\edown$, and their 
join by $\gamma$. It is not too hard to show (see \cite{2001:GehrkeHarding,2006:VenemaAC}) that 
$\gamma\eup\subseteq \sigma\eup$, $\gamma\edown\subseteq \sigma\edown$, and $\gamma\subseteq\sigma$.
We denote the product of topologies by $\times$, and the $n$-fold product of a topology $\tau$ by $\tau^n$. The following result 
shows why these topologies are important: they essentially characterize the canonical extensions of maps:  

\begin{proposition}[\cite{2001:GehrkeHarding}]\label{prop:canExtTopology}
For any DL $A$ and any map $f: \Forg A^n\to \Forg A$, 
\begin{enumerate}
\item $f\ce$ is the largest $(\sigma^n,\gamma\eup)$-continuous extension of $f$, 
\item $f\ced$ is the smallest $(\sigma^n,\gamma\edown)$-continuous extension of $f$
\item $f$ is smooth iff it has a unique $(\sigma^n,\gamma)$-continuous extension.
\end{enumerate} 
\end{proposition}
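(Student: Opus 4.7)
The plan is to prove (1) and (2) as dual statements, prove (1) in detail, obtain (2) by swapping $K \leftrightarrow O$ and reversing the order, and then deduce (3) by combining the two. For (1) there are two sub-goals: (a) $f\ce$ is itself $(\sigma^n,\gamma\eup)$-continuous, and (b) every such continuous extension $g$ of $f$ satisfies $g\leq f\ce$ pointwise.

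For (a), I would use that $\sigma$ has the basis $\{\up p\cap\downarrow\! u\mid K\ni p\leq u\in O\}$ and that a Scott-open $U$ is up-closed and inaccessible by up-directed joins. Given $x\in(A\ce)^n$ with $f\ce(x)\in U$, recall
\[
f\ce(x)=\bigvee\{\bigwedge f[d,u]\mid K^n\ni d\leq x\leq u\in O^n\}.
\]
The indexing family is up-directed (intersect two intervals to refine both), so Scott-openness of $U$ produces some witness $(d,u)$ with $\bigwedge f[d,u]\in U$. The basic $\sigma^n$-neighbourhood $\up d\cap\downarrow\! u$ of $x$ is then mapped into $U$: for any $x'$ in it, the same interval $[d,u]$ appears in the defining family of $f\ce(x')$, so $f\ce(x')\geq\bigwedge f[d,u]\in U$ and up-closedness finishes the job.

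For (b), the key observation is that $\gamma\eup$-closed sets are down-closed and closed under down-directed meets. If $g$ extends $f$ and is $(\sigma^n,\gamma\eup)$-continuous, then $g$ restricted to the $\sigma^n$-neighbourhood $\up d\cap\downarrow\! u$ of $x$ takes values in the $\gamma\eup$-closure of $g[\up d\cap\downarrow\! u\cap A^n]=f[d,u]$, which under the preceding observation is contained in $\downarrow\!\bigwedge f[d,u]$. Here I use density of $A^n$ in $(A\ce)^n$ to reduce to the clopen values, where $g$ is pinned down by $f$. Hence $g(x)\leq\bigwedge f[d,u]$ for each admissible $(d,u)$, and taking the join yields $g(x)\leq f\ce(x)$. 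Part (2) is obtained by the evident order-reversing dualisation. For (3), if $f$ is smooth then $f\ce=f\ced$, so by (1) and (2) both are $\gamma$-continuous extensions of $f$ and any other $(\sigma^n,\gamma)$-continuous extension is sandwiched between them; conversely, uniqueness of a $(\sigma^n,\gamma)$-continuous extension forces $f\ce=f\ced$ because each is automatically $(\sigma^n,\gamma)$-continuous (continuity into a join of topologies follows from continuity into each summand, which $f\ce$ enjoys into $\gamma\eup$ and, being monotone and sandwiched below $f\ced$, needs additional argument into $\gamma\edown$).

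The main obstacle I expect is the density/closure identification used in (b): making precise that the $\gamma\eup$-closure of $f[d,u]\subseteq A\ce$ is exactly $\downarrow\!\bigwedge f[d,u]$, and that the behaviour of the continuous extension $g$ on a $\sigma^n$-neighbourhood of $x$ is indeed controlled by its values at clopen points in that neighbourhood. This is where one must be careful to exploit density of $A$ in $A\ce$ together with the specific structure of the Scott topology, and is also the step on which the converse half of (3) hinges.
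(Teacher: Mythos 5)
First, note that the paper itself gives no proof of this proposition --- it is imported verbatim from Gehrke and Harding --- so your proposal can only be judged on its own terms. Your part (a), the $(\sigma^n,\gamma\eup)$-continuity of $f\ce$ via up-directedness of the family $\{\bigwedge f[d,u]\mid K^n\ni d\le x\le u\in O^n\}$ and Scott-inaccessibility of $U$, is correct. The gap is in part (b), the maximality claim. A $\gamma\eup$-closed (Scott-closed) set is a down-set closed under \emph{up-directed joins}, not ``down-directed meets'' (the latter holds trivially for any down-set and carries no information). Consequently the smallest Scott-closed set containing $f[d,u]$ contains $f[d,u]$ itself and hence is \emph{not} contained in $\downarrow\bigwedge f[d,u]$ unless all elements of $f[d,u]$ coincide; what is true is $\mathrm{cl}_{\gamma\eup}(f[d,u])\subseteq\;\downarrow\bigvee f[d,u]$. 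Your intermediate conclusion ``$g(x)\le\bigwedge f[d,u]$ for each admissible $(d,u)$'' is therefore unproved, and in fact false: applied to $g=f\ce$ it would force the defining join $\bigvee_{(d,u)}\bigwedge f[d,u]$ to collapse onto the corresponding meet. Running your density-and-closure argument correctly yields only $g(x)\le\bigvee f[d,u]$ for each $(d,u)$, i.e.\ $g\le f\ced$ --- an upper bound on the wrong side of $f\ce\le f\ced$. The genuine maximality proof needs a different idea (in Gehrke--Harding one approximates $g(x)$ from below by closed elements and uses the compactness of the canonical extension to trap each such closed element under some $\bigwedge f[d,u]$); the closure-of-the-image route cannot be patched to give it. The same defect propagates to your dualised proof of (2).

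A secondary issue is the converse of (3). Your parenthetical worry is justified: $f\ce$ is not $(\sigma^n,\gamma\edown)$-continuous in general, so ``each is automatically $(\sigma^n,\gamma)$-continuous'' is false and the sketch stalls. The repair is to invoke the pointwise inequality $f\ce\le f\ced$ recorded in the paper's proposition on canonical extensions of maps: if \emph{any} $(\sigma^n,\gamma)$-continuous extension $g$ exists, then $g$ is both $(\sigma^n,\gamma\eup)$- and $(\sigma^n,\gamma\edown)$-continuous, whence $f\ced\le g\le f\ce\le f\ced$ by (1) and (2), so all three coincide and $f$ is smooth; uniqueness is not needed for this direction, existence suffices.
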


From this important result, it is not hard to get the following key theorem, sometimes known as \emph{Principle of Matching Topologies}, 
which underlies the basic `algorithm' for canonicity: 

\begin{theorem}[Principle of Matching Topologies,\cite{2001:GehrkeHarding,2006:VenemaAC}]\label{thm:PrincMatchTopol}
Let $A$ be a distributive lattice, and $f: \Forg A^n\to \Forg A$ and $g_i:\Forg A^{m_i}\to \Forg A, 1\leq i\leq n$ be arbitrary maps. Assume 
that there exist topologies $\tau_i$ on $A$, $1\leq i\leq n$ such that each $g_i\ce$ is $(\sigma^{m_i},\tau_i)$-continuous, then
\begin{enumerate}
\item \hspace{-3pt}if $f\ce$ is $(\tau_1\times\ldots\times\tau_n,\gamma\eup)$-continuous, then $
f\ce(g_1\ce,\ldots,g_n\ce)\leq (f(g_1,\ldots,g_n))\ce$
\item \hspace{-3pt}if $f\ce$ is $(\tau_1\times\ldots\times\tau_n,\gamma\edown)$-continuous, then $
f\ce(g_1\ce,\ldots,g_n\ce)\geq (f(g_1,\ldots,g_n))\ce$
\item \hspace{-3pt}if $f\ce$ is $(\tau_1\times\ldots\times\tau_n,\gamma)$-continuous, then $f\ce(g_1\ce,\ldots,g_n\ce)=(f(g_1,\ldots,g_n))\ce $.
\end{enumerate}
\end{theorem}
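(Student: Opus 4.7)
The plan is to exploit the extremal characterization of $\sigma$- and $\pi$-extensions in Proposition \ref{prop:canExtTopology}. Write $m=m_1+\cdots+m_n$ and set $h := f\ce\circ(g_1\ce,\ldots,g_n\ce)\colon (A\ce)^m\to A\ce$. I would show that $h$ is a continuous extension of $f(g_1,\ldots,g_n)$ of the right kind, and then invoke the proposition to squeeze $h$ against $(f(g_1,\ldots,g_n))\ce$ or $(f(g_1,\ldots,g_n))\ced$.

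Two verifications are needed. First, that $h$ restricts to $f(g_1,\ldots,g_n)$ on $A^m$: this is immediate from part 1 of the preceding proposition applied to each $g_i$ and to $f$, since for $a \in A^m$ we have $g_i\ce(a_i)=g_i(a_i)\in A$ and hence $f\ce(g_1(a_1),\ldots,g_n(a_n))=f(g_1(a_1),\ldots,g_n(a_n))$. Second, continuity of $h$ follows by composing continuous maps: by the universal property of the product topology, the tupling $(g_1\ce,\ldots,g_n\ce)$ is continuous from $((A\ce)^m,\sigma^m)$ into $((A\ce)^n,\tau_1\times\cdots\times\tau_n)$ because each coordinate is $(\sigma^{m_i},\tau_i)$-continuous. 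Post-composing with $f\ce$ and using the assumed $(\tau_1\times\cdots\times\tau_n,\gamma\eup)$-, $(\tau_1\times\cdots\times\tau_n,\gamma\edown)$-, or $(\tau_1\times\cdots\times\tau_n,\gamma)$-continuity of $f\ce$, I obtain that $h$ is a $(\sigma^m,\gamma\eup)$-, $(\sigma^m,\gamma\edown)$-, or $(\sigma^m,\gamma)$-continuous extension of $f(g_1,\ldots,g_n)$ in the three respective cases.

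The three inequalities then fall out of Proposition \ref{prop:canExtTopology}. In case 1, $(f(g_1,\ldots,g_n))\ce$ is the \emph{largest} $(\sigma^m,\gamma\eup)$-continuous extension, forcing $h\le(f(g_1,\ldots,g_n))\ce$. In case 2, $(f(g_1,\ldots,g_n))\ced$ is the \emph{smallest} $(\sigma^m,\gamma\edown)$-continuous extension, so $(f(g_1,\ldots,g_n))\ced\le h$; combining this with the general inequality $(f(g_1,\ldots,g_n))\ce\le(f(g_1,\ldots,g_n))\ced$ from part 2 of the earlier proposition yields the stated $(f(g_1,\ldots,g_n))\ce\le h$. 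For case 3, I would observe that since $\gamma=\gamma\eup\vee\gamma\edown$, any $(\tau,\gamma)$-continuous map is both $(\tau,\gamma\eup)$- and $(\tau,\gamma\edown)$-continuous, so the previous two bounds both apply to $h$ and combine to equality.

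There is no serious obstacle here; the argument is essentially bookkeeping on top of Proposition \ref{prop:canExtTopology}. The only point to be careful about is that the topologies $\tau_i$ should be read as topologies on the canonical extension $A\ce$ rather than on $A$ (a minor looseness in the stated hypothesis), so that the composition $f\ce\circ(g_1\ce,\ldots,g_n\ce)$ genuinely makes topological sense on $(A\ce)^m$.
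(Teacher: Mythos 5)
Your argument is correct and is exactly the route the paper intends: it cites the result from the literature and remarks that it follows from Proposition \ref{prop:canExtTopology}, and your proof is precisely that derivation (compose continuities, check the restriction to $A^m$, then squeeze against the extremal characterizations of $(\cdot)\ce$ and $(\cdot)\ced$, using $f\ce\leq f\ced$ for case 2). Your side remark that the $\tau_i$ must be read as topologies on $A\ce$ is also right.
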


The last piece of information we need to effectively use the Principle of Matching Topologies is to determine when maps are 
continuous for a certain topology, based on the distributivity laws they satisfy. For our purpose the following results will be sufficient:

\begin{proposition}[\cite{1994:GehrkeJonsson,2001:GehrkeHarding,2004:GehrkeJonsson,2006:VenemaAC}]\label{prop:presPropAndTopol}
Let $A$ be a distributive lattice, and let $f: \Forg A^n\to \Forg A$ be a map. For every ($n-1)$-tuple $(a_i)_{1\leq i\leq n-1}$, we denote 
by $f_a^k: A\to A$ the map defined by $x\mapsto f(a_1,\ldots,a_{k-1},x,a_{k},\ldots,a_{n-1})$.
\begin{enumerate}
\item If $f_a^k$ preserves binary joins, then $(f\ce)_a^k$ preserve all non-empty joins and is $(\sigma\edown,\sigma\edown)$-continuous.
\item If $f_a^k$ preserves binary meets, then $(f\ce)_a^k$ preserve all non-empty meets and is $(\sigma\eup,\sigma\eup)$-continuous.
\item If $f_a^k$ anti-preserves binary joins (i.e., turns them into meets), then $(f\ce)_a^k$ anti-preserve all non-empty joins and is $(\sigma\edown,\sigma\eup)$-continuous.
\item If $f_a^k$ anti-preserves binary meets (i.e., turns them into joins), then $(f\ce)_a^k$ anti-preserve all non-empty meets and is $(\sigma\eup,\sigma\edown)$-continuous.
\item In each case $f$ is is smooth in its $k^{th}$ argument.
\end{enumerate}
\end{proposition}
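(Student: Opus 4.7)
The plan is to prove case (1) in detail and derive (2), (3), (4) from it by the dualities in the definitions of $f\ce$ and $f\ced$: case (2) is the order-dual of (1), obtained by replacing $A$ with $A\op$ so that joins become meets and $O$ swaps with $K$; case (3) comes from (1) by pre-composing with order-reversal in the $k$-th argument; and case (4) is the dual of (3). Part (5) then follows from (1)--(4) combined with Proposition \ref{prop:canExtTopology}(3): each of the four continuity statements, for an appropriate choice of source and target topology, makes $f\ce$ into a $(\sigma^n, \gamma)$-continuous extension of $f$ on the $k$-th coordinate when the other coordinates lie in $K \cup O$, and that proposition says such an extension is unique and so must coincide with $f\ced$, which is exactly smoothness in the $k$-th argument.

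For case (1), I would first reduce to the one-variable setting. Fix a parameter tuple $a \in (K \cup O)^{n-1}$ and consider the section $(f\ce)_a^k: A\ce \to A\ce$. Since preservation of binary joins implies monotonicity of $f_a^k$ in the $k$-th coordinate, and using that $f\ce$ and $f\ced$ agree on $(K \cup O)^n$ for monotone $f$ by the preceding proposition, the section $(f\ce)_a^k$ is controlled by the unary join-preserving map $h: A \to A$ obtained by restricting $f$ to the $k$-th coordinate with the other coordinates held in $A$, via the standard extension formulas. The task thus becomes the one-variable statement: if $h: A \to A$ preserves binary joins, then $h\ce$ preserves all non-empty joins on $A\ce$ and is $(\sigma\edown, \sigma\edown)$-continuous. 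Monotonicity combined with the definition of $h\ce$ gives the familiar restrictions $h\ce(p) = \bigwedge\{h(a) : A \ni a \geq p\}$ for $p \in K$ and $h\ce(u) = \bigvee\{h(a) : A \ni a \leq u\}$ for $u \in O$, with the latter family up-directed because $h$ preserves finite non-empty joins.

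For the join-preservation claim, given non-empty $X \subseteq A\ce$ one inequality is free from monotonicity; for the other I would expand each $x \in X$ using density as $x = \bigvee\{p \in K : p \leq x\}$, express $h\ce$ at closed elements via the restriction formula, and use compactness of $A$ in $A\ce$ to reduce any comparison against an open upper bound to a finite comparison where the preservation property of $h$ on $A$ takes over. For $(\sigma\edown, \sigma\edown)$-continuity it suffices to check that the preimage of a basic open $\downarrow u$ ($u \in O$) is itself $\sigma\edown$-open: join preservation forces $\{x \in A\ce : h\ce(x) \leq u\}$ to be a downset closed under non-empty joins, hence a principal downset $\downarrow v$, and one identifies $v$ with $\bigvee\{a \in A : h(a) \leq u\}$ which lies in $O$. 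I expect the main obstacle to be precisely this last identification of $v$ as a genuinely open element: this is the step that fuses join preservation of $h\ce$, density, and compactness, and it is what gives the strong $\sigma\edown$-conclusion rather than a weaker statement about the interval topology. Everything else is essentially bookkeeping between the extension formulas and the preservation hypothesis.
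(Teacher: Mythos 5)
The paper itself gives no proof of this proposition --- it is imported wholesale from the cited literature --- so your attempt can only be measured against the standard arguments of Gehrke--J\'onsson, Gehrke--Harding and Venema. Your univariate core is essentially that standard argument and is sound: the restriction formulas for $h\ce$ on $K$ and $O$, and the identification of $(h\ce)\inv(\downarrow u)$ as a principal downset $\downarrow v$ with $v=\bigvee\{a\in A\mid h(a)\leq u\}\in O$ via compactness, is exactly how $(\sigma\edown,\sigma\edown)$-continuity is obtained in those sources, and you correctly single that step out as the crux. The duality scheme for deriving (2)--(4) from (1) is also legitimate, modulo the (omitted but true) check that the defining formula for $f\ce$ is unchanged when the $k$-th coordinate is read in $A\op$, since $K(A\op)=O(A)$ and the intervals $[d,u]$ are the same sets.

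The genuine gap is the reduction from $n$ variables to one. You fix a parameter tuple $a\in(K\cup O)^{n-1}$ and then invoke ``the unary join-preserving map $h:A\to A$ obtained by restricting $f$ \ldots with the other coordinates held in $A$''; these two choices are incompatible, and neither suffices. The conclusion must hold for sections at \emph{arbitrary} $a\in(A\ce)^{n-1}$ --- this is what the paper actually uses downstream, e.g.\ to conclude in Proposition \ref{prop:FrameCondCan} that $\lRes\ce$ preserves down-directed meets in its first argument at arbitrary values of the second --- and for $a\notin A^{n-1}$ the section $(f\ce)_a^k$ is not the canonical extension of any unary map on $A$, so ``the standard extension formulas'' give you nothing. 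The literature proofs either compute directly with the joint formula $f\ce(x)=\bigvee\{\bigwedge f[d,u]\mid K^n\ni d\leq x\leq u\in O^n\}$, or establish the preservation first for closed and open parameters and then propagate it to all of $A\ce$ by a further density-and-continuity argument in the parameter coordinates; some version of this step has to be supplied and it is not mere bookkeeping. A smaller point: in (5) you use Proposition \ref{prop:canExtTopology}(3) in the wrong direction, since uniqueness of the $(\sigma^n,\gamma)$-continuous extension is there a consequence of smoothness rather than something you have established. The clean route is that $(\sigma\edown,\sigma\edown)$-continuity implies $(\sigma,\gamma\edown)$-continuity (as $\gamma\edown\subseteq\sigma\edown$), so $f\ce$ is simultaneously a $(\sigma,\gamma\eup)$- and a $(\sigma,\gamma\edown)$-continuous extension of $f$ in that coordinate, whence $f\ced\leq f\ce\leq f\ced$ by parts (1) and (2) of that proposition.
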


\subsection{Canonical (in)equations}

To say anything about the canonicity of equations, we need to compare interpretations in $A$ with interpretations in $A\ce$. It is natural to 
try to use the extension $(\cdot)\ce$ to mediate between these interpretations, but $(\cdot)\ce$ is defined on maps, not on terms. Moreover, 
not every valuation on $A\ce$ originates from valuation on $A$. We would therefore like to recast the problem in such a way that (1) terms 
are viewed as maps, and (2) we do not need to worry about valuations.  

\medskip
 
\textbf{\emph{Term functions}.} The solution is to adopt the language of \emph{term functions} (as first suggested in \cite{1994:Jonsson}). Given a signature $\Sigma$, let $\mathsf{T}(V)$ denote 
the language of $\Sigma$-DLEs (or $\Sigma$-BAEs) over a set $V$ of propositional variables. We view each term $t\in\mathsf{T}(V)$ as 
defining, for each $\Sigma$-DLE $A$, a map $t^A: A^n\to A$. This allows us to consider its canonical extension $(t^A)\ce$, and also allows 
us to reason without having to worry about specifying valuations. Formally, given a signature $\Sigma$ and a set $V$ a propositional variables, 
we inductively define the term function associated with an element $t$ built from variables $x_1,\ldots,x_n\in V$ as follows:
\begin{itemize}
\item $x_i^A=\pi_i^n:A^n\to A, 1\leq i\leq n$; 
\item $(f(t_1,\ldots, t_m))^A=f^A\circ \langle t_1^A, \ldots, t_m^A\rangle$.
\end{itemize}
where $\pi_i$ is the usual projection on the $i^{th}$ component, $f^A$ is the interpretation of the symbol $f$ in $A$ and 
$\langle t_1^A,\ldots,t_m^A\rangle$ is usual the product of $m$ maps. Note that in this definition we work in $\Set$, and the 
building blocks of term functions are thus the variables in $V$ (interpreted as projections) and all operation symbols, including 
$\vee,\wedge$ and possibly $\neg$.
\begin{proposition}\label{prop:termFct}
Let $s,t$ be terms in the language defined by a signature $\Sigma$ and $A$ be a $\Sigma$-DLE,
\[
A\models s=t\text{ iff }s^A=t^A \, .
\]
\end{proposition}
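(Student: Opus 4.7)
My plan is to prove the equivalence through the standard connecting lemma between the \emph{term function} $r^A$ and the \emph{inductive valuation semantics} $\lsem r \rsem^A_v$, proved by structural induction on $r$. Without loss of generality I assume $s$ and $t$ are built from a common finite list of variables $x_1,\ldots,x_n$ (padding with unused ones), so that $s^A,t^A : A^n \to A$ share arity.

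Recall that $A \models s = t$ means that for every valuation $v : V \to \Forg A$, the interpretations $\lsem s \rsem^A_v$ and $\lsem t \rsem^A_v$ coincide, where $\lsem x_i \rsem^A_v = v(x_i)$ and $\lsem f(r_1,\ldots,r_m) \rsem^A_v = f^A(\lsem r_1 \rsem^A_v, \ldots, \lsem r_m \rsem^A_v)$. The central claim is: for every term $r$ built from $x_1,\ldots,x_n$ and every valuation $v$,
\[
r^A(v(x_1),\ldots,v(x_n)) = \lsem r \rsem^A_v.
\]
The base case $r = x_i$ is immediate from $x_i^A = \pi_i^n$. For the inductive step, unfolding the definition of term functions given in the proposition and applying the inductive hypothesis to each $r_j$ yields
\[
(f(r_1,\ldots,r_m))^A(\vec{v}) = f^A\bigl(r_1^A(\vec{v}),\ldots,r_m^A(\vec{v})\bigr) = f^A\bigl(\lsem r_1 \rsem^A_v, \ldots, \lsem r_m \rsem^A_v\bigr) = \lsem f(r_1,\ldots,r_m) \rsem^A_v,
\]
where $\vec{v} = (v(x_1),\ldots,v(x_n))$.

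With this lemma, the equivalence is a matter of quantifier shuffling using the bijection between valuations of $\{x_1,\ldots,x_n\}$ in $A$ and tuples in $A^n$: if $A \models s = t$, then for each $\vec{a} \in A^n$ pick any $v$ with $v(x_i) = a_i$ and conclude $s^A(\vec{a}) = \lsem s \rsem^A_v = \lsem t \rsem^A_v = t^A(\vec{a})$; conversely, if $s^A = t^A$ then for every $v$, applying the lemma at $\vec{v}$ to both sides immediately yields $\lsem s \rsem^A_v = \lsem t \rsem^A_v$. There is no substantive obstacle --- the result is a formal unfolding of definitions. Its real value lies in what it buys us downstream: it licences the manipulation of terms as honest maps $A^n \to A$, whose canonical extensions $(s^A)\ce$ and $(t^A)\ce$ can then be formed and compared using the machinery of Section \ref{sec:DLE}, which is exactly the set-up needed in the sequel on canonical (in)equations.
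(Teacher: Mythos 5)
Your proof is correct: the connecting lemma $r^A(v(x_1),\ldots,v(x_n)) = \lsem r\rsem^A_v$ by structural induction, followed by the bijection between valuations and tuples, is exactly the standard argument, and the paper itself offers no proof of this proposition (it is stated as an immediate consequence of the term-function definitions, in the spirit of \cite{1994:Jonsson}). Nothing is missing and nothing diverges from what the paper relies on downstream.
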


\medskip 

\textbf{\emph{Canonical (in)equations}.} An equation $s=t$ where $s,t\in \mathsf{T}(V)$ is called \emph{canonical} if $A\models s=t$ implies 
$A\ce\models s=t$, and similarly for inequations. Following \cite{1994:Jonsson}, we say that $t\in\mathsf{T}(V)$ is \emph{stable} if $(t^A)\ce=t^{A\ce}$, that $t$ is \emph{expanding} if $(t^A)\ce\leq t^{A\ce}$, and that $t$ is \emph{contracting} if $(t^A)\ce\geq t^{A\ce}$, 
for any $A$. The inequality between maps is taken pointwise. The following proposition illustrates the usefulness of these notions: 

\begin{proposition}[\cite{1994:Jonsson}]\label{prop:canEqu}
If $s,t\in\mathsf{T}(V)$ are stable then the equation $s=t$ is canonical. Similarly, let $s,t\in\mathsf{T}(V)$ such that $s$ is contracting and $t$ is 
expanding, then the inequality $s\leq t$ is canonical. 
\end{proposition}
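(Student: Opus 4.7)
The plan is to chain together Proposition~\ref{prop:termFct} (which translates validity of (in)equations into identities between term functions) with the hypotheses on stability/contraction/expansion, mediated by the monotonicity of $(\cdot)\ce$ on pointwise-ordered maps.

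For the equational case, I would argue as follows. Suppose $A\models s=t$. By Proposition~\ref{prop:termFct} this says $s^A=t^A$ as maps $A^n\to A$. Applying the canonical extension functor on maps gives $(s^A)\ce=(t^A)\ce$ as maps $(A\ce)^n\to A\ce$. Now invoke stability of both $s$ and $t$: by definition $s^{A\ce}=(s^A)\ce$ and $t^{A\ce}=(t^A)\ce$, so $s^{A\ce}=t^{A\ce}$. One more application of Proposition~\ref{prop:termFct}, this time to $A\ce$, yields $A\ce\models s=t$, which is exactly canonicity.

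The inequational case is essentially the same argument with the key monotonicity lemma slipped in. Namely, I would first observe, directly from the definition
\[
f\ce(x)=\bigvee\{\bigwedge f[d,u]\mid K^n\ni d\leq x\leq u\in O^n\},
\]
that if $f\leq g$ pointwise on $A^n$, then $f[d,u]\leq g[d,u]$ element-wise for every interval $[d,u]$, hence $\bigwedge f[d,u]\leq \bigwedge g[d,u]$ in $A\ce$, and taking the join over all valid $(d,u)$ yields $f\ce\leq g\ce$ pointwise on $(A\ce)^n$. Given this, suppose $A\models s\leq t$, i.e.\ $s^A\leq t^A$ pointwise. Then $(s^A)\ce\leq(t^A)\ce$ by monotonicity. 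Contraction of $s$ gives $s^{A\ce}\leq (s^A)\ce$, and expansion of $t$ gives $(t^A)\ce\leq t^{A\ce}$, so chaining the three inequalities yields $s^{A\ce}\leq t^{A\ce}$, i.e.\ $A\ce\models s\leq t$.

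There is no real obstacle here; the content of the proposition is essentially bookkeeping once Proposition~\ref{prop:termFct} is in hand. The only step that requires a moment's thought is the pointwise monotonicity of $(\cdot)\ce$ on maps, which is immediate from the $\sigma$-extension formula. It is worth noting that the equational case is in fact a corollary of the inequational case (stability means simultaneously contracting and expanding, and $s=t$ is the conjunction of $s\leq t$ and $t\leq s$), so one could present the proof by proving only the inequational half and deducing the equational one; I would likely do it this way for brevity.
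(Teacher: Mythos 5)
Your proof is correct and follows essentially the same route as the paper: translate validity into identities/inequalities of term functions via Proposition~\ref{prop:termFct}, apply $(\cdot)\ce$, and use stability (resp.\ contraction/expansion) to pass to $s^{A\ce}$ and $t^{A\ce}$. The only additions are your explicit verification of the pointwise monotonicity of $(\cdot)\ce$, which the paper leaves implicit, and the (correct) observation that the equational case is a special case of the inequational one.
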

\begin{proof}
Let $A$ be an arbitrary $\Sigma$-DLE. If $A\models s=t$, then $s^A=t^A$ by Proposition \ref{prop:termFct}. Therefore $(s^A)\ce=(t^A)\ce$ and 
thus $s^{A\ce}=t^{A\ce}$, by stability, and it follows that $A\ce\models s=t$ by Proposition \ref{prop:termFct}.

Similarly, if $A\models s\leq t$ then $s^A\leq t^A$  by Proposition \ref{prop:termFct} and thus $(s^A)\ce\leq (t^A)\ce$. By the assumptions on 
$s$ and $t$, this means that we also have $s^{A\ce}\leq t^{A\ce}$, and thus $A\ce\models s\leq t$ by Proposition \ref{prop:termFct}.
\end{proof}

\section{Coalgebraic Completeness via-canonicity}

In this section we will combine the results of Sections 2 and 3. We will first exhibit a  set of canonical axioms which complete the definition of $\LRL$ and completely axiomatize the distributive full Lambek calculus. This will prove that the variety defined by these axioms is canonical; 
that is, closed under canonical extension. We will then show that the canonical and J\'{o}nsson-Tarski extensions defined by Theorems \ref{thm:rightInvDelta} and \ref{thm:JonTar} coincide. This will allow us to conclude strong completeness of the distributive full Lambek calculus.

\subsection{Axiomatizing distributive residuated lattices}

So far we have only captured part of the structure of distributive residuated lattices, namely we have enforced the distribution properties of $\to,\fus,\lRes$ and $\rRes$ by our definition of the syntax constructor $\LRL$. In order to capture 
the rest of the structure we now add axioms which, when added to DL\ref{ax:distribLaw1}-\ref{ax:distribLaw6}, fully axiomatize
distributive residuated lattices. Due to the constraints that these axioms must be canonical, we choose the following Frame Conditions:

\begin{multicols}{2}
\begin{enumerate}[FC1.]
\item $a\fus I=a$, $I\fus a=a$
\label{ax:FrameCond1} 
\item $I\leq a\lRes a$, $I\leq a\rRes a$
\label{ax:FrameCond2} 
\item $a\fus(b\lRes c)\leq (a\fus b)\lRes c$ \label{ax:FrameCond3} 
\item $(c\rRes b)\fus a\leq c\rRes(a\fus b)$ \label{ax:FrameCond4} 
\item $(a\rRes b)\fus b\leq a$  
\label{ax:FrameCond5} 
\item $b\fus(b\lRes a)\leq a$.
\label{ax:FrameCond6}
\end{enumerate}
\end{multicols}

\begin{proposition}\label{prop:ResLattAx}
The axioms DL\ref{ax:distribLaw1}-\ref{ax:distribLaw6} and FC\ref{ax:FrameCond1}-\ref{ax:FrameCond6} 
axiomatize distributive residuated lattices.
\end{proposition}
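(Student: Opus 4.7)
The plan is to show inclusion of varieties in both directions. For soundness (every distributive residuated lattice satisfies the axioms), the distribution laws DL1--6 are standard consequences of the residuation adjunctions $a \fus (-) \dashv a \lRes (-)$ and $(-) \fus b \dashv (-) \rRes b$: DL1, DL2 express preservation of joins by $\fus$ in each argument, while DL3--6 capture the appropriate (anti)preservation of meets and joins by $\lRes, \rRes$. Among the frame conditions, FC1 is the monoid identity law; FC2 restates the units of the residuation adjunctions instantiated at $I$ (via $I \fus a = a \leq a$ and its dual); FC5 and FC6 are the counits of those adjunctions; and FC3, FC4 follow from associativity of $\fus$ combined with these counits by standard manipulation.

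For the converse direction, I would start from an algebra satisfying DL1--6 and FC1--6 and recover the residuated lattice structure in three steps. First, DL1--6 directly supply the monotonicity of $\fus, \lRes, \rRes$ in each argument with the appropriate signs, and FC1 supplies the unit for $\fus$. Second, I would establish the residuation property $a \fus b \leq c \iff b \leq a \lRes c$: one direction is immediate from monotonicity and FC6, since $b \leq a \lRes c$ implies $a \fus b \leq a \fus (a \lRes c) \leq c$; the other direction, which encodes the unit of the Galois connection, must be extracted from FC2 and FC3 by carefully specialising variables and applying FC1 to simplify. The symmetric argument using FC2, FC4, FC5 handles $\rRes$. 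Third, with residuation in place, associativity of $\fus$ follows by translating each direction of $(a \fus b) \fus c = a \fus (b \fus c)$ into an inequality involving $\lRes$ or $\rRes$ to which FC3 or FC4 directly applies; the distributive lattice axioms are inherited from $\RK$.

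The main obstacle is the second step: extracting the unit $b \leq a \lRes (a \fus b)$ of the residuation adjunction (and its dual for $\rRes$) from FC2 and FC3, since these axioms do not state it directly. They are deliberately chosen ``associativity-flavoured'' reformulations so that each primitive operation has a definite distribution profile suitable for the canonical-extension machinery of Section 4, and the challenge is to confirm that, combined with FC1 and the monotonicity supplied by DL3, they still recover the full residuation adjunction. Once this has been pinned down, the remaining verifications --- associativity of $\fus$ and the standard (in)equational consequences --- reduce to routine manipulation inside the ordered monoid.
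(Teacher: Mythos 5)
Your overall strategy coincides with the paper's: soundness is a routine check, FC1 supplies the unit, monotonicity of the operations comes from DL1--DL6, one direction of residuation is the one-line consequence of monotonicity together with FC5/FC6 (your chain $a\fus b\leq a\fus(a\lRes c)\leq c$ is exactly the paper's), and the remaining direction has to be manufactured from FC1, FC2 and FC3. The problem is that you stop at precisely the step you yourself flag as ``the main obstacle'': you assert that the missing direction ``must be extracted from FC2 and FC3 by carefully specialising variables'' and that ``the challenge is to confirm'' this works, but you never produce the derivation. Since this is the only non-routine content of the proposition, the proof is incomplete as it stands. The paper closes it in three lines: assuming $a\fus b\leq c$, FC2 gives $I\leq b\lRes b$, so FC1 and monotonicity of $\fus$ give
\[
a \;=\; a\fus I \;\leq\; a\fus(b\lRes b) \;\leq\; (a\fus b)\lRes b \;\leq\; c\lRes b,
\]
where the middle step is FC3 instantiated at $c:=b$ and the last step is monotonicity of $\lRes$. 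So the ingredients you name are the right ones and the ``careful specialisation'' is simply $c:=b$ in FC3; you should write the chain out rather than defer it.

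One point where you are more demanding than the paper is associativity of $\fus$. The paper dismisses the monoid structure with the single sentence that it ``is clear from FC1'', although FC1 only states the unit laws; you instead propose to recover associativity from residuation together with FC3/FC4. The instinct that something must be said here is sound (associativity does not follow from residuation plus units in general), but your sketch of how FC3/FC4 would yield it is no more worked out than the residuation step, and it is not evident that the translation you describe closes up into both inequalities of $(a\fus b)\fus c=a\fus(b\fus c)$. If you keep this part, carry the derivation out explicitly or acknowledge that it is being assumed; as written it is a second deferred obligation on top of the first.
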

\begin{proof}
It is straighforward to check that axioms DL\ref{ax:distribLaw1}-\ref{ax:distribLaw6} and FC\ref{ax:FrameCond1}-\ref{ax:FrameCond6} 
hold in any residuated lattice. Conversely, we show that if FC\ref{ax:FrameCond1}-\ref{ax:FrameCond6} hold in an $\LRL$-algebra, then 
this $\LRL$-algebra is a residuated lattice. It is clear from FC\ref{ax:FrameCond1} that $\fus$ defines a monoid on the carrier set. It remains to 
show that the residuation conditions are satisfied. Assume that $a\fus b\leq c$. We have
\begin{align*}
I&\leq b\lRes b & \text{FC}\ref{ax:FrameCond2}\\
a&\leq a\fus(b\lRes b) & \text{FC}\ref{ax:FrameCond1}\text{ and monotonicity of }\fus\text{ from }\LRL\\
 & \leq (a\fus b)\lRes b & \text{FC}\ref{ax:FrameCond3}\\
& \leq c\lRes b & \text{Monotonicity of }\lRes\text{ from }\LRL
\end{align*}
Now assume that $a\leq b\lRes c$. Then we have
\begin{align*}
a\fus b&\leq (b\lRes c)\fus b & \text{Monotonicity of }\fus\text{ from }\LRL\\
& \leq c & \text{FC}\ref{ax:FrameCond5}. 
\end{align*}
The proof for the left residual $\rRes$ is identical. Note that the monotonicity of the operators are consequences of DL\ref{ax:distribLaw1}-\ref{ax:distribLaw6}.
\end{proof}

We now show one of the crucial steps.
\begin{proposition}\label{prop:FrameCondCan}
The axioms FC\ref{ax:FrameCond1}-\ref{ax:FrameCond6} are canonical.
\end{proposition}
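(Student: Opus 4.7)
The plan is to apply Proposition \ref{prop:canEqu}: for each axiom I show that the two sides are stable (in the case of the equations in FC\ref{ax:FrameCond1}) or that the left-hand side is contracting and the right-hand side is expanding (for the inequations FC\ref{ax:FrameCond2}-\ref{ax:FrameCond6}). The engine is the Principle of Matching Topologies (Theorem \ref{thm:PrincMatchTopol}), applied bottom-up to the term functions. As preliminaries, the distribution laws DL\ref{ax:distribLaw1}-\ref{ax:distribLaw6} feed into Proposition \ref{prop:presPropAndTopol} to yield that $\fus\ce$ is smooth and $(\sigma\edown,\sigma\edown)$-continuous in each argument; $\lRes\ce$ is smooth, $(\sigma\edown,\sigma\eup)$-continuous in its first argument (anti-preservation of joins) and $(\sigma\eup,\sigma\eup)$-continuous in its second (preservation of meets); symmetric properties hold for $\rRes\ce$. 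The nullary constant $I$ and the coordinate projections are trivially stable, being their own canonical extensions in any topology.

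FC\ref{ax:FrameCond1} is immediate: the term $x \fus I$ is the composition of the smooth $\fus$ with two stable building blocks (a projection and the constant $I$), hence stable by Theorem \ref{thm:PrincMatchTopol}(3); the right-hand side is the identity, trivially stable; and Proposition \ref{prop:canEqu} concludes. The same argument handles $I \fus a = a$. For FC\ref{ax:FrameCond2} the LHS is the stable constant $I$, and the RHS $a \lRes a$ is shown to be expanding via Theorem \ref{thm:PrincMatchTopol}(1), using that $\lRes\ce$ is $(\sigma,\gamma\eup)$-continuous (by smoothness together with $\gamma\eup \subseteq \sigma\eup$) and that the diagonal map $a \mapsto (a,a)$ is trivially stable.

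The inequations FC\ref{ax:FrameCond3}-\ref{ax:FrameCond6} are the main technical step and require a two-pass, bottom-up application of Theorem \ref{thm:PrincMatchTopol} at each compositional junction. Taking FC\ref{ax:FrameCond3} as representative, to show the LHS $x \fus (y \lRes z)$ is contracting I would pass the $\sigma\eup$ output topology of the inner $\lRes\ce$ into the second argument of the outer $\fus\ce$, which then must be shown $\gamma\edown$-continuous on the resulting topology; dually, the RHS $(x \fus y) \lRes z$ is shown expanding by feeding the output of $\fus\ce$ into the first argument of $\lRes\ce$. FC\ref{ax:FrameCond4} is the mirror case using $\rRes$, and FC\ref{ax:FrameCond5}-\ref{ax:FrameCond6} combine an outer $\fus$ with a residual against a single variable on the RHS (trivially stable), so that only the LHS requires the contracting analysis. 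The main obstacle is precisely this topology-matching: the natural contracting direction of $\fus\ce$ takes $\sigma\edown$ to $\gamma\edown$, while the inner residual produces a $\sigma\eup$-topology output. The reconciliation exploits the smoothness of all operations in each argument, so that $(\cdot)\ce$ and $(\cdot)\ced$ coincide on $K \cup O$ where the composed values we care about live, together with the containments $\gamma\eup \subseteq \sigma\eup$ and $\gamma\edown \subseteq \sigma\edown$, allowing a common topology to be interpolated at each junction so that Theorem \ref{thm:PrincMatchTopol} applies in the desired inequality direction.
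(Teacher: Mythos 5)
Your overall strategy is the paper's: reduce to Proposition \ref{prop:canEqu} and verify stability/contraction/expansion of each term via the Principle of Matching Topologies together with Proposition \ref{prop:presPropAndTopol}. Your treatment of FC\ref{ax:FrameCond1} matches the paper's. However, in the cases that carry all the weight, FC\ref{ax:FrameCond3}--\ref{ax:FrameCond6} (and already in FC\ref{ax:FrameCond2}), you have the direction of Theorem \ref{thm:PrincMatchTopol} reversed. Clause (1), i.e.\ $\gamma\eup$-continuity of the composite, yields $f\ce(g_1\ce,\ldots,g_n\ce)\leq (f(g_1,\ldots,g_n))\ce$, which is the \emph{contracting} property; clause (2), i.e.\ $\gamma\edown$-continuity, yields \emph{expanding}. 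You invoke clause (1) to conclude that $a\lRes a$ is expanding, and you assert that establishing the LHS $a\fus(b\lRes c)$ contracting requires the outer $\fus\ce$ to be $\gamma\edown$-continuous. Both are backwards. (FC\ref{ax:FrameCond2} survives because $a\lRes a$ is in fact stable, as the paper shows, but your stated mechanism does not deliver that.)

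This reversal then leads you to manufacture an ``obstacle'' that is not there, and to propose a fix that does not work. With the directions set correctly there is no topology clash on the LHS of FC\ref{ax:FrameCond3}: since $\fus\ce$ preserves non-empty joins in each argument it preserves up-directed ones and is $((\gamma\eup)^2,\gamma\eup)$-continuous, while the smooth inner $\lRes\ce$ is $(\sigma^2,\gamma\eup)$-continuous and $\pi_1\ce$ is $(\sigma,\gamma\eup)$-continuous; the interface topology is simply $\gamma\eup$ and clause (1) applies at once, giving contraction. (Your alternative of interfacing at $\sigma\eup$ would also work, since $\gamma\eup\subseteq\sigma\eup$ makes $(\gamma\eup,\gamma\eup)$-continuity imply $(\sigma\eup,\gamma\eup)$-continuity, but the target output topology must still be $\gamma\eup$, not $\gamma\edown$.) Your proposed reconciliation via ``$(\cdot)\ce$ and $(\cdot)\ced$ coincide on $K\cup O$ where the composed values we care about live'' is not a legitimate step: the inner term functions are evaluated at arbitrary elements of $A\ce$, and their outputs need not lie in $K\cup O$, so that coincidence cannot be used to interpolate topologies at a junction. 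Finally, for the RHS of FC\ref{ax:FrameCond3} you give no actual continuity data; the needed fact is that $\lRes\ce$ preserves down-directed meets in its meet-preserving argument and anti-preserves up-directed joins in the other, hence is $(\gamma^2,\gamma\edown)$-continuous, which composed with the smooth $(\sigma^2,\gamma)$-continuous $\fus\ce$ gives $(\sigma^3,\gamma\edown)$-continuity and thus the expanding property by clause (2).
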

\begin{proof}The proof is an application of Theorem \ref{thm:PrincMatchTopol} and Proposition \ref{prop:canEqu}.

FC\ref{ax:FrameCond1}: Since $\fus$ preserves binary joins in each argument, it is smooth by Prop. \ref{prop:presPropAndTopol}, 
and it follows that it is $(\sigma^2,\gamma)$-continuous. Since $\pi_1\ce$ and $I\ce$ are trivially $(\sigma,\sigma)$-continuous, it follows from Theorem \ref{thm:PrincMatchTopol} that $(\fus \circ\langle \pi_1, I\rangle)\ce=\fus\ce\circ \langle \pi_1, 1\rangle\ce$. 
Each side of the equation is thus stable and the result follows from Prop. \ref{prop:canEqu}.

FC\ref{ax:FrameCond2}: $I$ is stable and thus contracting, and $(\lRes \circ \langle\pi_1,\pi_1\rangle)\ce=\lRes\ce\circ \langle\pi_1,\pi_1\rangle\ce$, 
since $\pi_1\ce$ is $(\sigma,\sigma)-$continuous and $\lRes\ce$ is smooth. The RHS of the 
inequality is thus stable, and a fortiori expanding, and the inequality is thus canonical.

FC\ref{ax:FrameCond3}-\ref{ax:FrameCond4}:  Since $\fus\ce$ preserve joins in each argument, it preserves up-directed ones, and is thus $((\gamma\eup)^2,\gamma\eup)$-continuous. Since $\lRes\ce$ is smooth it is in particular $(\sigma^2,\gamma\eup)$-continuous. Since $\pi_1\ce$ is $(\sigma,\gamma\eup)$-continuous, we get that $\fus\ce\circ\langle\pi_1\ce,\lRes\ce\circ\langle\pi_2\ce,\pi_3\ce\rangle\rangle$ is $(\sigma^3,\gamma\eup)$-continuous and thus contracting. For the RHS, note that since $\lRes\ce$ preserves meets in its first argument, it must 
in particular preserve down-directed ones, thus $\lRes\ce$ is $(\gamma\edown,\gamma\edown)$-continuous in its first argument. Similarly, since 
$\lRes\ce$ anti-preserve joins in its second argument, it must in particular anti-preserve up-directed ones, and is thus 
$(\gamma\eup,\gamma\edown)$-continuous in its second argument. This means that $\lRes\ce$ is $(\gamma^2,\gamma\edown)$-continuous. 
We thus have that the full term is $(\sigma^3,\gamma\edown)$ continuous, and thus expanding. The inequation is therefore canonical. 

FC\ref{ax:FrameCond5}-\ref{ax:FrameCond6}: The LHS is contracting by the same reasoning as above, and the RHS is stable and thus expanding.

\end{proof}

\subsection{J\'{o}nsson-Tarski vs canonical extensions}

We have just shown that the variety of $\LRL$-algebras defined by the equations FC\ref{ax:FrameCond1}-\ref{ax:FrameCond6} is canonical;  that is, closed under canonical extension. However, since we want to exhibit models, what we really need to show is that the variety defined by  FC\ref{ax:FrameCond1}-\ref{ax:FrameCond6} is closed under \emph{J\'{o}nsson-Tarski extensions}. Fortunately, for the logics of interest to us here the two extensions in fact coincide. This is what we will now show. The proof is not difficult but rather long, and can be found in the appendix.

\begin{proposition}\label{prop:JonTarskiCanExt}
The structure map of the J\'{o}nsson-Tarski extension of an $\LRL$-algebra is equal to the canonical extension of its structure map (in the sense of Section \ref{sec:DLE}).
\end{proposition}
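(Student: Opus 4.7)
The plan is to verify the equality operation by operation, showing that for each of the four basic operations $I, \fus, \lRes, \rRes$ the J\'{o}nsson-Tarski interpretation on $A\ce = \Upsets \pf A$ coincides with the corresponding canonical extension. Tracing diagram~(\ref{diag:jontar}) at each generator, and noting that $F\alpha : \pf A \to \pf \LRL A$ is given by inverse image under $\alpha$ (so that $GF\alpha$ is pullback along $\alpha^{-1}$), one obtains for every $p \in \pf A$ the quadruple
\[
\zeta_A(\alpha^{-1}(p)) \;=\; (i_p, R_p, S_p, T_p) \;\in\; \TRL \pf A ,
\]
and the formulas defining $\semRL$ translate directly into
\[
\fus^{JT}(u,v) = \{p \in \pf A : \exists (q_1,q_2) \in R_p,\ q_1 \in u,\ q_2 \in v\} ,
\]
together with analogous set-theoretic descriptions of $I^{JT}, \lRes^{JT}, \rRes^{JT}$.

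The same computations as in the proof of Proposition~\ref{prop:semPreservationProp}, now applied at $\pf A$ rather than at a generic poset, show that these J\'{o}nsson-Tarski operations preserve (resp.\ anti-preserve) arbitrary non-empty joins and meets in each argument with the same variance pattern as in DL\ref{ax:distribLaw1}--\ref{ax:distribLaw6}. By Proposition~\ref{prop:presPropAndTopol}, the canonical extensions $\fus\ce, \lRes\ce, \rRes\ce$ enjoy identical preservation properties. Moreover, both structures agree with $\alpha$ on clopens $\eta_A(a)$, $a \in A$: for the J\'{o}nsson-Tarski extension this is precisely the $L$-algebra embedding statement of Theorem~\ref{thm:JonTar}, and for the canonical extension this is clause~(1) of the proposition following Section~\ref{sec:DLE}'s introduction. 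Since an operator (respectively dual operator) on $A\ce$ is determined by its values on clopens together with its join/meet-preservation on approximating closed (respectively open) elements, verification of the proposition reduces to checking equality on arguments of the form $k = \{p \in \pf A : F \subseteq p\}$ (a generic closed element, for a filter $F \subseteq A$) and of the dual open form.

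For fusion, the reduction takes the shape of the biconditional
\[
\bigl(\forall a \in F,\ b \in F' :\ a \fus b \in p\bigr) \ \iff\ \bigl(\exists (q_1, q_2) \in R_p :\ F \subseteq q_1 \text{ and } F' \subseteq q_2\bigr)
\]
for all filters $F, F'$ of $A$ and all $p \in \pf A$. The right-to-left implication is a direct consequence of the right-inverse condition $\hat{\delta} \circ \zeta_A = \Id$, which forces $R_p$ to pair only prime filters ``compatible'' with $p$; the left-to-right implication is precisely the existence statement underlying the proof of Theorem~\ref{thm:rightInvDelta}, obtained by applying the Prime Filter Theorem to suitably defined filter/ideal pairs in order to extend $F$ and $F'$ to prime witnesses in $R_p$ as soon as $F \fus F' \subseteq p$. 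The analogous biconditionals for $\lRes^{JT}$ versus $\lRes\ce$ and for $\rRes^{JT}$ versus $\rRes\ce$ (in which a closed argument is traded for an open one on the side where the operation anti-preserves joins) are dealt with by the corresponding clauses of the appendix construction of $\zeta_A$, and the unit case $I^{JT} = I^{A\ce}$ is immediate from the definition of $i_p$. The main obstacle is precisely this last step: extracting from the appendix construction of $\zeta$ the exact Horn-clause characterizations of $R_p, S_p, T_p$ that mirror the sigma-extension formulas for operators and dual operators on closed/open elements, which rests on a careful use of the Prime Filter Theorem applied to compatible filter/ideal data built from $\alpha^{-1}(p)$.
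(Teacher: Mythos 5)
Your proposal is correct in outline but takes a genuinely different route from the paper. You reduce the claim to an element-wise verification on closed and open arguments --- the classical J\'onsson--Tarski strategy --- which forces you back into the appendix construction of $\zeta_A$ and requires generalizing its prime-filter existence lemma from principal filters $\uparrow a,\uparrow b$ (the clopen case, which is all that the proof of Theorem \ref{thm:rightInvDelta} literally establishes) to arbitrary filters $F,F'$ with $\alpha_\fus[F\times F']\subseteq p$; that generalization does go through by the same Zorn's-lemma argument, but it is genuine extra work, and you rightly flag it as the main obstacle. The paper sidesteps exactly this obstacle: it observes that the J\'onsson--Tarski operations $\Upsets\pf\alpha\circ\Upsets\zeta_A\circ\delta_\fus$, etc., are extensions of $\alpha_\fus$, etc., which preserve or anti-preserve \emph{arbitrary} non-empty joins and meets in each argument --- because $\semRL_{\pf A}$ visibly does, and $\Upsets\zeta_A$ and $\Upsets\pf\alpha$ are inverse-image maps, hence complete lattice homomorphisms --- and then proves a general claim that any extension with such complete preservation properties is $(\sigma^2,\gamma)$-continuous and therefore coincides with the (smooth) canonical extension by the uniqueness clause of Proposition \ref{prop:canExtTopology}. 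In other words, the paper never evaluates $\zeta_A$ on closed elements at all; the preservation properties alone pin the extension down. Your approach buys an explicit Horn-clause description of the extended operations on closed/open arguments, at the cost of redoing the hardest part of the appendix; the paper's buys a short, modular proof that reuses Theorem \ref{thm:rightInvDelta} as a black box. One small correction if you pursue your route: for $\lRes$ the determining arguments are \emph{closed} in the first coordinate (where joins are anti-preserved, so the value on an element is the meet of the values on the closed elements below it) and \emph{open} in the second (where meets are preserved); your parenthetical trades the arguments the wrong way around.
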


\subsection{Strong completeness}

We are now ready to combine all our results and to state and prove our main completeness theorem.

\begin{theorem}[Strong completeness theorem]\label{thm:strCompMain}
The Distributive Full Lambek Calculus is strongly complete with respect to the class of $\TRL$-coalgebras validating FC\ref{ax:FrameCond1}-\ref{ax:FrameCond6}.
\end{theorem}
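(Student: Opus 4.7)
The plan is to run the argument of Theorem~\ref{thm:strongComp} on the Lindenbaum--Tarski algebra of the distributive full Lambek calculus and to check that the coalgebra we produce actually validates FC\ref{ax:FrameCond1}--\ref{ax:FrameCond6}. So first, given $\Phi,\Psi\subseteq\mathcal{Q}$ with $\Phi\not\vdash\Psi$, where $\mathcal{Q}$ is the quotient of $\lang(\LRL)$ by the congruence generated by FC\ref{ax:FrameCond1}--\ref{ax:FrameCond6} (a distributive residuated lattice by Proposition~\ref{prop:ResLattAx}), I would produce a prime filter $w_\Phi\in\pf\mathcal{Q}$ extending $\langle\Phi\rangle\eup$ and disjoint from $\langle\Psi\rangle\edown$ via the prime ideal theorem, exactly as in the proof of Theorem~\ref{thm:strongComp}.

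Next, I would invoke Theorems~\ref{thm:JonTar} and~\ref{thm:rightInvDelta} to obtain a $\TRL$-coalgebra $\zeta_{\mathcal{Q}}\circ F\alpha\colon\pf\mathcal{Q}\to\TRL\pf\mathcal{Q}$ together with the J\'onsson--Tarski $\LRL$-algebra structure on $GF\mathcal{Q}=\mathcal{Q}\ce$, into which $\mathcal{Q}$ embeds via $\eta_{\mathcal{Q}}$. The interpretation map $\lsem-\rsem$ arising from this coalgebra coincides with $\eta_{\mathcal{Q}}$ on $\mathcal{Q}$, so the `truth lemma' gives $w_\Phi\in\lsem a\rsem$ for $a\in\Phi$ and $w_\Phi\notin\lsem b\rsem$ for $b\in\Psi$. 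What remains, and what distinguishes this theorem from Theorem~\ref{thm:strongComp}, is to verify that the coalgebra constructed lies in the class of $\TRL$-coalgebras validating FC\ref{ax:FrameCond1}--\ref{ax:FrameCond6}; equivalently, that the $\LRL$-algebra $GF\mathcal{Q}$ carried by the J\'onsson--Tarski extension satisfies these frame conditions.

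This is where the two ingredients from Section~5 come in. Since $\mathcal{Q}$ is a distributive residuated lattice, it satisfies FC\ref{ax:FrameCond1}--\ref{ax:FrameCond6}, and by Proposition~\ref{prop:FrameCondCan} these frame conditions are canonical, so $\mathcal{Q}\ce$ (equipped with the canonical extensions $I\ce,\fus\ce,\lRes\ce,\rRes\ce$ of the operations on $\mathcal{Q}$) satisfies FC\ref{ax:FrameCond1}--\ref{ax:FrameCond6} as well. Proposition~\ref{prop:JonTarskiCanExt} then tells us that the J\'onsson--Tarski $\LRL$-algebra structure on $GF\mathcal{Q}$ produced by Diagram~(\ref{diag:jontar}) is identical to the canonical extension structure, so the J\'onsson--Tarski extension itself validates the frame conditions. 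Translating back through the equivalence between $\LRL$-algebra structures on $GF\mathcal{Q}$ and $\TRL$-coalgebra structures on $\pf\mathcal{Q}$ provided by the semantic transformation $\semRL$, the canonical coalgebra $\zeta_{\mathcal{Q}}\circ F\alpha$ lies in the desired class.

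The only non-routine step is the alignment between the J\'onsson--Tarski and canonical extensions: everything else reduces to stitching together Theorem~\ref{thm:JonTar}, Theorem~\ref{thm:rightInvDelta} and Proposition~\ref{prop:FrameCondCan} in the same pattern as Theorem~\ref{thm:strongComp}. I expect the main subtlety to be purely notational, namely spelling out precisely what it means for a $\TRL$-coalgebra to \emph{validate} FC\ref{ax:FrameCond1}--\ref{ax:FrameCond6} (in terms of the induced $\LRL$-algebra on its predicates) so that Proposition~\ref{prop:JonTarskiCanExt} can be applied cleanly. Once that is in place, the proof reduces to a one-paragraph assembly: PIT gives the point, the J\'onsson--Tarski theorem gives the coalgebra and interpretation, canonicity of the axioms gives soundness of the frame conditions in the canonical extension, and equality of the two extensions transfers this to the coalgebra we have built.
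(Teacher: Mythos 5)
Your proposal is correct and follows essentially the same route as the paper's own proof: Lindenbaum--Tarski algebra plus PIT for the distinguishing prime filter, Theorems~\ref{thm:JonTar} and~\ref{thm:rightInvDelta} for the canonical model, and Proposition~\ref{prop:FrameCondCan} combined with Proposition~\ref{prop:JonTarskiCanExt} to transfer validity of FC\ref{ax:FrameCond1}--\ref{ax:FrameCond6} to that model. The ``notational subtlety'' you flag is resolved in the paper exactly as you anticipate, by observing that the interpretation in $\Upsets\pf\mathcal{L}$ factors through the Lindenbaum--Tarski quotient, so the frame conditions hold at every prime filter.
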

\begin{proof}[Theorem \ref{thm:strCompMain}]
Let $\Phi,\Psi$ be (not necessarily finite) subsets of $\lang(\LRL)$; that is, elements of the free $\LRL$-algebra over $\Free V$, such that 
\[
\mathrm{FC}\ref{ax:FrameCond1}-\ref{ax:FrameCond6}+\Phi\nvdash\Psi
\]
We need to find a $\TRL$-model validating the axioms FC\ref{ax:FrameCond1}-\ref{ax:FrameCond6} such that each $a\in \Phi$ and no $b\in\Psi$ is satisfied in this model.
Now consider the Lindenbaum-Tarski $\LRL$-algebra $\alpha: \LRL\mathcal{L}\to\mathcal{L}$ defined by axioms FC\ref{ax:FrameCond1}-\ref{ax:FrameCond6}; that is,  
\[
\mathcal{L}=\lang(\LRL)/(\mathrm{FC}\ref{ax:FrameCond1}-\ref{ax:FrameCond6}),
\]
where the quotient is under the fully invariant equivalence relation in $\RK$ generated by the frame conditions FC\ref{ax:FrameCond1}-\ref{ax:FrameCond6}. Note that this algebra comes equipped with a canonical valuation $v: \Free V\to \mathcal{L}$. By construction, $\mathcal{L}$ validates FC\ref{ax:FrameCond1}-\ref{ax:FrameCond6}, and since we've established, in Proposition \ref{prop:FrameCondCan}, that they are canonical, the $\LRL$-algebra 
\[
\LRL \Upsets\pf\mathcal{L}\xto{\alpha\ce}\Upsets\pf\mathcal{L}
\]
also validates these axioms. By Proposition \ref{prop:JonTarskiCanExt} 
we know that this $\LRL$-algebra is the J\'{o}nsson-Tarski extension of $\mathcal{L}$, and as a consequence 
\[
\LRL \Upsets\pf\mathcal{L}\xto{\semRL_{\pf \mathcal{L}}}\Upsets \TRL\pf\mathcal{L}\xto{\Upsets\zeta_{\mathcal{L}}}\Upsets\pf\LRL\mathcal{L}\xto{\Upsets\pf \alpha}\Upsets\pf\mathcal{L}
\]
validates FC\ref{ax:FrameCond1}-\ref{ax:FrameCond6}. As a direct consequence of the definition of coalgebraic semantics, we have the following commutative diagram:
\[
\xymatrix@C=10ex
{
\LRL\lang(\LRL)+\Free V\ar[ddd]\ar[r]^-{\LRL\lsem-\rsem_{\mathcal{L}}+\Id_{\Free V}} \ar@/^2pc/[rr]^{\LRL\lsem-\rsem_{\Upsets\pf\mathcal{L}}+\Id_{\Free V}}
& 
\LRL\mathcal{L}+\Free V\ar[ddd]_{\alpha}\ar[r]^-{\LRL\eta_{\mathcal{L}}+\Id_{\Free V}} & 
\LRL \Upsets\pf\mathcal{L}+ \Free V\ar[d]_{\semRL_{\pf \mathcal{L}}+\Id_{\Free V}}\\
& & \Upsets \TRL\pf\mathcal{L}+\Free V\ar[d]_{\Upsets\zeta_{\pf\mathcal{L}}+\Id_{\Free V}}\\
& & \Upsets\pf\mathcal{L}+\Free V\ar[d]_{\Upsets\pf \alpha+v}\\
\lang(\LRL)\ar[r]_-{\lsem-\rsem_{\mathcal{L}}}\ar@/_2pc/[rr]_{\lsem-\rsem_{\Upsets\pf\mathcal{L}}} & \mathcal{L}\ar[r]_-{\eta_{\mathcal{L}}} & 
\Upsets\pf\mathcal{L}}  
\]
It follows easily that at every prime filter $w\in\pf\mathcal{L}$, $w\models$FC\ref{ax:FrameCond1}-\ref{ax:FrameCond6}, since $\lsem -\rsem_{\Upsets\pf\mathcal{L}}$ must factor through $\lsem-\rsem_{\mathcal{L}}$ which ensures precisely that FC\ref{ax:FrameCond1}-\ref{ax:FrameCond6} are valid. Thus $\pf\mathcal{L}$ is a model validating the axioms. We now need to find a point in $w_{\Phi}\in\pf\mathcal{L}$ such that $w\models\Phi$ but $w\not\models\Psi$. For this we start by considering the filter-ideal pair
\[
(\langle \Phi\rangle\eup, \langle\Psi\rangle\edown)
\]
where $\langle\Phi\rangle\eup$ is the filter generated by the equivalence classes in the Lindenbaum-Tarski algebra $\mathcal{L}$ of formulas in $\Phi$, and similarly for the ideal generated by $\Psi$. It is clear that $\langle\Phi\rangle\eup$ is proper, or else we would necessarily have $\mathrm{FC}\ref{ax:FrameCond1}-\ref{ax:FrameCond6}+\Phi\vdash\Psi$, a contradiction. For the same reason it is clear that $\langle \Phi\rangle\eup\cap\langle\Psi\rangle\edown=\emptyset$. By the PIT, we can find a prime filter $w_{\Phi}\supseteq \langle \Phi\rangle\eup$ in $\pf\mathcal{L}$ such that $w_\Phi\cap \langle \Psi\rangle\edown=\emptyset$. It follows immediately that
\[
w_\Phi\models \Phi\text{ and }w_\Phi\not\models\Psi
\]
which is what we wanted to show.
\end{proof}

\subsection{Modularity.}
The coalgebraic setting allows us to combine completeness-via-cano-nicity results from simple logics to get results for more 
complicated logics. It can be shown that the coalgebraic J\'{o}nsson-Tarski theorem is modular in the following sense.

\begin{theorem}[Strong completeness transfers under fusion]\label{thm:modularComp}
Let $L_i:\RK\to\RK, T_i:\SD\to\SD,\delta^i: L_iG\to GT_i, i=,1,2$. For any $(L_1+L_2)-$algebra $(A,\alpha)$, if $\hat{\delta}^i_A$ has a right inverse $\zeta^i_A, i=1,2$, then $\eta_A: A\to GFA$ lifts to an $L_1+L_2$-algebra morphism.
\end{theorem}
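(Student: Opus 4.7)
The plan is to reduce to the coalgebraic J\'onsson-Tarski theorem (Theorem~\ref{thm:JonTar}) by exhibiting a right inverse to the adjoint transpose of the fusion semantic transformation. Write $\delta := [G\pi_1, G\pi_2] \circ (\delta^1 + \delta^2): (L_1 + L_2)G \to G(T_1 \times T_2)$ for this transformation and $\hat{\delta}_A: (T_1 \times T_2)FA \to F(L_1 + L_2)A$ for its adjoint transpose.

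First, I would use the fact that $F$ is a left adjoint into $\SD\op$, hence preserves colimits in $\RK$; in particular $F(L_1 A + L_2 A)$ is the product $FL_1 A \times FL_2 A$ in $\SD$, with each $F\iota_i$ realised as the $i$-th product projection.

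Second, I would unpack $\hat{\delta}_A$ by computing $F\iota_i \circ \hat{\delta}_A$ for each $i$. Using naturality of the hom-isomorphism with respect to pre-composition by $\iota_i$ and post-composition by $G\pi_i$, this reduces to the adjoint transpose of $\delta^i_{FA} \circ L_i \eta_A$ composed on the $T$-side with the projection $\pi_i^T: T_1 FA \times T_2 FA \to T_i FA$, i.e.\ $\hat{\delta}^i_A \circ \pi_i^T$. The universal property of the product in $\SD$ then identifies $\hat{\delta}_A$ with the product morphism $\hat{\delta}^1_A \times \hat{\delta}^2_A$. Setting $\zeta_A := \zeta^1_A \times \zeta^2_A$ immediately produces a right inverse,
\[
\hat{\delta}_A \circ \zeta_A = (\hat{\delta}^1_A \circ \zeta^1_A) \times (\hat{\delta}^2_A \circ \zeta^2_A) = \Id,
\]
and Theorem~\ref{thm:JonTar} applied to $(A,\alpha)$ with this right inverse delivers the required $(L_1+L_2)$-algebra morphism lifting $\eta_A$.

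The main obstacle is the unpacking step: identifying $\hat{\delta}_A$ with $\hat{\delta}^1_A \times \hat{\delta}^2_A$ requires careful bookkeeping of the variances of $F$ and $G$ and the correct chaining of naturality of the hom-isomorphism on both sides. Once that identification is in place, the construction of $\zeta_A$ and the appeal to Theorem~\ref{thm:JonTar} are essentially formal; the same argument also extends verbatim to fusions of arbitrary (small) families of languages.
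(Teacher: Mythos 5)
Your proposal is correct and follows essentially the same route as the paper: the paper's central computation is precisely the identity $(G\pi_1+G\pi_2)\circ(\delta^1+\delta^2)\circ(L_1\eta_A+L_2\eta_A)=G(\hat{\delta}^1_A\times\hat{\delta}^2_A)\circ\eta_{L_1A+L_2A}$, i.e.\ that the adjoint transpose of the fused semantic transformation is $\hat{\delta}^1_A\times\hat{\delta}^2_A$, after which $\zeta^1_A\times\zeta^2_A$ is a right inverse. The only (cosmetic) difference is that you invoke Theorem~\ref{thm:JonTar} as a black box, whereas the paper inlines the corresponding diagram chase.
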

\begin{proof}
We show that the following diagram commutes: 
\[
\xymatrix
{
L_1 A+L_2 A\ar[r]^-{L_1\eta_A+L_2\eta_A}\ar[dddd]_{\alpha}\ar[dddr]_{\eta_{L_1A+L_2A}} & L_1 GFA + L_2 GFA\ar[d]^{\delta^1_A+\delta^2_A}\\
& GT_1 FA + GT_2 FA\ar[d]^{G\pi_1+G\pi_2}\\
&  G(T_1 FA\times T_2 FA)\ar[d]^{G(\zeta^1_A\times \zeta^2_A)}\\
& G(FL_1 A\times FL_2 A)\simeq GF(L_1 A+L_2 A)\ar[d]^{GF\alpha}\\
A\ar[r]_{\eta_A} & GFA
}
\] 
$F$ being left adjoint preserves colimits, and thus turns coproduct in $\RK$ into products in $\SD$. The bottom left-hand corner trapezium thus commutes by naturality of $\eta$. So we must show the commutativity of to top-right-hand corner triangle. For this we first show that
\[
(G\pi_1+G\pi_2)\circ (\delta^1+\delta^2)\circ (L_1\eta_A+L_2\eta_A)=G(\hat{\delta}^1_A\times \hat{\delta}^2_A)\circ\eta_{L_1 A+L_2 A}
\]
This is easily seen from the following diagram, which unravels the definition of adjoint transposes and uses the fact that $F$ preserves colimits: 
\[
\xymatrix@C=4ex
{
L_1A+L_2A\ar[d]_{L_1\eta_A\hspace{-2pt}+\hspace{-2pt}L_2\eta_A}\ar[r] & GF(L_1A\hspace{-2pt}+\hspace{-2pt}L_2A)\hspace{-2pt}\simeq\hspace{-2pt} G(FL_1A\hspace{-2pt}\times\hspace{-2pt} FL_2A)\ar[d]^{G(FL_1\eta_A\times FL_2\eta_A)}_{GF(L_1\eta_A+L_2\eta_A)\simeq}\\
L_1 GFA\hspace{-2pt}+\hspace{-2pt}L_2GFA\ar[d]_{(\delta^1)_{FA}+(\delta^2)_{FA}}\ar[r] & GF(L_1 GFA\hspace{-2pt}+\hspace{-2pt}L_2GFA)\hspace{-2pt}\simeq\hspace{-2pt} G(FL_1GFA\hspace{-2pt}\times\hspace{-2pt} FL_2GFA)\ar[d]^{G(F(\delta^1)_{FA}\times G(F(\delta^2)_{FA})}_{GF((\delta^1)_{FA}+(\delta^2)_{FA})\simeq}\\
GT_1FA\hspace{-2pt}+\hspace{-2pt}GT_2FA\ar[d]_{G\pi_1+G\pi_2}\ar[r] & GF(GT_1FA\hspace{-2pt}+\hspace{-2pt}GT_2FA)\hspace{-2pt}\simeq\hspace{-2pt} G(FGT_1FA\hspace{-2pt}\times\hspace{-2pt} FGT_2FA)\ar[dl]^{\hspace{8ex}G(\epsilon_{T_1FA}\circ \pi_1 \times\epsilon_{T_2FA}\circ \pi_2)}\\
G(T_1FA\hspace{-2pt}\times\hspace{-2pt} T_2FA)
}
\]
All the horizontal arrows are simply given by the unit $\eta: \Id\to GF$ (we have omitted the labels to keep the diagram readable), and thus the two top rectangles commute by naturality. Finally, we are left to deal with the bottom triangle which can be seen to commutes from the following commutative diagram:
\[
\xymatrix@C=7pt
{
GT_1 FA\ar[d]^{\eta_{GT_1FA}} \ar[r]^-{i_1}
& G(T_1FA\times T_2FA)\simeq GT_1FA+GT_2FA\ar[d]_{\eta_{GT_1FA+GT_2FA}}
& GT_2FA\ar[d]_{\eta_{GT_2FA}}\ar[l]_-{i_2}\\
GFGT_1FA \ar[r]^-{GF i_1}_-{\simeq G\pi_1}\ar[d]^{G\epsilon_{T_1FA}}
& G(FGT_1FA\hspace{-2pt}\times\hspace{-2pt} FGT_2FA)\hspace{-2pt}\simeq\hspace{-2pt} GF(GT_1FA\hspace{-2pt}+\hspace{-2pt}GT_2FA) \ar[d]^{G(\epsilon_{T_1FA}\circ \pi_1\times \epsilon_{T_2FA}\circ \pi_2)}
& GFGT_1FA\ar[l]_-{GF i_2}^-{\simeq G\pi_2}\ar[d]_{G\epsilon_{T_2FA}}\\
GT_1 FA \ar[r]_{G\pi_1}
& G(T_1FA\times T_2FA) 
& GT_2FA\ar[l]^{G\pi_2}
}
\]
The top squares commute by naturality of $\eta$, the bottom squares commute by naturality of $\epsilon$ and the two squares can be joined by the fact that $F$ turns coproducts into products. Note also that $G\epsilon_{T_1FA}\circ \eta_{GT_1 FA}=\Id_{GT_1FA}$ by the fact that $F\dashv G$, and the desired result follows from the unicity of the coproduct map $G\pi_1+G\pi_2$. It is now easy to see that
\begin{align*}
&G(\zeta^1_A\times \zeta^2_A)\circ (G\pi_1+G\pi_2)\circ (\delta^1+\delta^2)\circ (L_1\eta_A+L_2\eta_A)\\
=&G(\zeta^1_A\times \zeta^2_A)\circ G(\hat{\delta}^1_A\times \hat{\delta}^2_A)\circ\eta_{L_1 A+L_2 A}\\
=&G((\hat{\delta}^1_A\times \hat{\delta}^2_A)\circ (\zeta^1_A\times \zeta^2_A))\circ \eta_{L_1 A+L_2 A}\\
=&\eta_{L_1 A+L_2 A}, 
\end{align*}
by the assumption that $\zeta^1_A$ and $\zeta^2_A$ are right inverses.
\end{proof}
Note that we can extract a \emph{model} of the right type from the proof above, namely
\[
\zeta^1_A\times \zeta^2_A\circ F\alpha: FA\to T_1FA\times T_2FA.
\]

\section{Application to distributive substructural logics.}
\subsection{Describing $\TRL$-coalgebras validating FC\ref{ax:FrameCond1}-\ref{ax:FrameCond6}} \label{subsec:Describing}

The axioms FC\ref{ax:FrameCond1}-\ref{ax:FrameCond6} translate as two simple frame conditions on the relational (resource) 
models interpreting the logic defined by $\LRL$ and these axioms; one dealing with the unit $I$ of the language, and the other 
with the residuation of $\lRes$ and $\rRes$ with respect to $\fus$. The class of $\TRL$-coalgebras satisfying the second frame 
condition is described in Theorem \ref{thm:framecond}. It is intuitive and indeed corresponds to the usual relational semantics of 
distributive substructural logics (see, e.g.,  \cite{2002:restallintroduction}) or separation logic/BI. However, proving it `from first 
principles' as we do here is much more intricate than might be expected and, indeed, much more so than is clear in \cite{2015:self}. 

Let $\gamma:W\to \TRL W$ be a $\TRL$-coalgebra validating the axioms FC\ref{ax:FrameCond1}-\ref{ax:FrameCond6}. Axioms FC\ref{ax:FrameCond1} means that every world $w\in W$ must have amongst its successors pairs $(w,x)$ and $(y,w)$ such that $x,y$ are `unit states', viz. $x,y\models I$, moreover these are the only successors of $w$ containing a unit state.  This condition can be found in, for example, \cite{2007:calcagnoPOPL}. The other axioms are simply 
designed to capture the residuation condition in such a way that canonicity can be used; so a model in which FC\ref{ax:FrameCond2}-\ref{ax:FrameCond6} 
are valid is simply a model in which the residuation conditions hold, viz. $a\fus b\leq c$ iff $b\leq a\lRes c$ iff $a\leq c\rRes b$. 

To see what this means for $\TRL$-coalgebras we need the following lemma. For any $\TRL$-coalgebra $\gamma: W\to \mathbbm{2}\times \Pow(W\times W)\times \Pow(W\op\times W)\times \Pow(W\times W\op)$, let $\gamma_I,\gamma_\fus,\gamma_{\ulRes}$ and $\gamma_{\urRes}$ define the four components of the structure map. We define
\begin{align*}
\gamma_\fus^{(\downarrow\times \downarrow)}(w)&=\{(x,y)\mid \exists (x',y')\in \gamma_{\fus}(w), x\leq x', y\leq y'\}\\
\gamma_{\ulRes}^{(\downarrow\times \uparrow)}(w)&=\{(x,y)\mid \exists (x',y')\in \gamma_{\ulRes}(w), x\leq x', y'\leq y\}\\
\gamma_{\urRes}^{(\uparrow\times \downarrow)}(w)&=\{(x,y)\mid \exists (x',y')\in \gamma_{\urRes}(w), x'\leq x, y\leq y'\}
\end{align*}

\begin{lemma}\label{lem:updownClosure}
Let $\gamma: W\to \TRL W$ be a coalgebra and let $\hat{\gamma}: W\to \TRL W$ be the $\TRL$-coalgebra defined by $\gamma_I,\gamma_\fus^{(\downarrow\times \downarrow)}, \gamma_{\ulRes}^{(\downarrow\times \uparrow)}, \gamma_{\urRes}^{(\uparrow\times \downarrow)}$ defined as above, then for any valuation $v:\Free W\to \Upsets W$ and any $w\in W$
\[
(w,\gamma, v)\models a\text{ iff }(w,\hat{\gamma}, v)\models a
\]
\end{lemma}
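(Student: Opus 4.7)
The plan is to proceed by structural induction on the formula $a$. The base cases (propositional variables and the constant $I$) and the boolean connectives $\wedge, \vee$ are immediate: propositional variables are handled identically by the common valuation $v$, the interpretation of $I$ depends only on $\gamma_I = \hat{\gamma}_I$, and the semantics of $\wedge$ and $\vee$ does not touch the relational components of the structure map. The three interesting inductive steps are therefore $\fus$, $\lRes$, and $\rRes$. Throughout, the crucial standing fact, already established in Proposition \ref{prop:semPreservationProp}, is that $\lsem a \rsem_\gamma$ and $\lsem a \rsem_{\hat\gamma}$ are upsets of $W$; by the induction hypothesis they moreover coincide, so I may denote their common value by $\lsem a\rsem$.

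For the $\fus$-step, unfolding $\semRL$ shows that $(w,\gamma,v)\models a\fus b$ iff some $(x,y)\in\gamma_\fus(w)$ has $x\in\lsem a\rsem$ and $y\in\lsem b\rsem$, and similarly with $\gamma_\fus^{(\downarrow\times\downarrow)}$ for $\hat\gamma$. Since $\gamma_\fus(w)\subseteq \gamma_\fus^{(\downarrow\times\downarrow)}(w)$, the $\gamma$-satisfaction forces the $\hat\gamma$-satisfaction. Conversely, if $(x,y)\in\gamma_\fus^{(\downarrow\times\downarrow)}(w)$ witnesses satisfaction under $\hat\gamma$, pick $(x',y')\in\gamma_\fus(w)$ with $x\leq x'$ and $y\leq y'$; since $\lsem a\rsem$ and $\lsem b\rsem$ are upsets, $(x',y')$ still witnesses satisfaction, now in $\gamma$.

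For the $\lRes$-step, $(w,\gamma,v)\models a\lRes b$ reads as the universal implication $\forall (x,y)\in\gamma_{\ulRes}(w),\ x\in\lsem a\rsem \Rightarrow y\in\lsem b\rsem$, and analogously for $\hat\gamma$ using $\gamma_{\ulRes}^{(\downarrow\times\uparrow)}$. Because $\gamma_{\ulRes}(w)\subseteq\gamma_{\ulRes}^{(\downarrow\times\uparrow)}(w)$, validity under $\hat\gamma$ trivially restricts to validity under $\gamma$. In the other direction, take any $(x,y)\in\gamma_{\ulRes}^{(\downarrow\times\uparrow)}(w)$ with $x\in\lsem a\rsem$, and pick $(x',y')\in\gamma_{\ulRes}(w)$ with $x\leq x'$ and $y'\leq y$; since $\lsem a\rsem$ is an upset, $x'\in\lsem a\rsem$, so by the $\gamma$-hypothesis $y'\in\lsem b\rsem$, and then $y\in\lsem b\rsem$ because $\lsem b\rsem$ is an upset. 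The $\rRes$-step is entirely symmetric, using $\gamma_{\urRes}^{(\uparrow\times\downarrow)}$ in place of $\gamma_{\ulRes}^{(\downarrow\times\uparrow)}$.

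The only point requiring care is matching the direction of each closure to the position (positive or negative, existential or universal) in which the corresponding operator occurs in the semantics and to the upset property of the denotations; once this bookkeeping is done, each case reduces to the same two-line argument. I expect no genuine obstacle beyond keeping the four indices ($\fus$, $\ulRes$, $\urRes$ together with the $W$-vs-$W\op$ directions coming from the definition of $\TRL$) aligned with the correct order-theoretic closures.
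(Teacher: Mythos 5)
Your proof is correct and is essentially the paper's own argument made explicit: the paper disposes of this lemma in one line (``an easy consequence of the definition of $\TRL$ and of the fact that the denotation of any formula is an upset''), and your structural induction, with the existential/universal and up/down-closure bookkeeping in the $\fus$, $\lRes$ and $\rRes$ cases, is precisely the content behind that remark. One small caveat: the $\rRes$ case is ``entirely symmetric'' only under the intended reading of $\semRL_W(w\rRes v)$ --- as displayed in the paper the antecedent $x\in v$ tests the \emph{first} component of $\pi_4(t)$, which does not match the $(\uparrow\times\downarrow)$ closure of $\gamma_{\urRes}$ and the $\Pow(W\times W\op)$ typing --- but this is a variance typo on the paper's side rather than a gap in your argument.
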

\begin{proof}
This is an easy consequence of the the definition of $\TRL$ and of the fact that the denotation of any formula is an upset.
\end{proof}

We can now formulate the residuation condition.

\begin{lemma}\label{lem:residuationLemma}
Equations FC\ref{ax:FrameCond2}-\ref{ax:FrameCond6} are valid in $\gamma: W\to \TRL W$ iff
\begin{align*}
(y,z)\in \gamma_\fus^{(\downarrow\times \downarrow)}(x) \text { iff }(y,x)\in \gamma_{\ulRes}^{(\downarrow\times \uparrow)}(z) \text{ iff } (x,z)\in \gamma_{\urRes}^{(\uparrow\times \downarrow)}(y)
\end{align*}
\end{lemma}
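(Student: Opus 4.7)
The approach is to pass to the up/down-closed coalgebra $\hat\gamma$ via Lemma \ref{lem:updownClosure} and exploit the clean interaction between principal upsets and the closures: with $v(a)=\uparrow y$ and $v(b)=\uparrow z$, the $\downarrow\times\downarrow$ closure of $\hat\gamma_\fus$ makes $w\models a\fus b$ equivalent to $(y,z)\in\hat\gamma_\fus(w)$, and the $\downarrow\times\uparrow$ (resp.\ $\uparrow\times\downarrow$) closures of $\hat\gamma_{\ulRes}$ (resp.\ $\hat\gamma_{\urRes}$) give analogous one-sided relational readings for $\lRes$ and $\rRes$ tested against principal upsets. All three biconditional conjuncts should emerge by playing off these readings against the frame conditions.

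For the forward direction, assume FC\ref{ax:FrameCond2}-\ref{ax:FrameCond6} hold. To prove $(y,z)\in\hat\gamma_\fus(x)\Rightarrow(y,x)\in\hat\gamma_{\ulRes}(z)$ I instantiate FC\ref{ax:FrameCond6} at $x$ with $v(b)=\uparrow y$ and $v(a)=\{s : (y,s)\in\hat\gamma_{\ulRes}(z)\}$; the latter is an upset thanks to the upward closure of $\hat\gamma_{\ulRes}(z)$ in its second coordinate. Unwinding gives $y\models b$ and $z\models b\lRes a$ (using downward closure in the first coordinate), so $(y,z)\in\hat\gamma_\fus(x)$ forces $x\models b\fus(b\lRes a)$; FC\ref{ax:FrameCond6} then concludes $x\models a$, which is exactly $(y,x)\in\hat\gamma_{\ulRes}(z)$. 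The symmetric argument via FC\ref{ax:FrameCond5} yields $(x,z)\in\hat\gamma_{\urRes}(y)$.

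For the other half of the biconditional, $(y,x)\in\hat\gamma_{\ulRes}(z)\Rightarrow(y,z)\in\hat\gamma_\fus(x)$, I invoke the residuation of $\fus$ on the upset algebra of $W$ that Proposition \ref{prop:ResLattAx} derives from the frame conditions. Set $A=\uparrow y$, $B=\uparrow z$, and $C=\{w : (y,z)\in\hat\gamma_\fus(w)\}$; the last is an upset by monotonicity of $\gamma$, and one checks $C=A\fus B$ in $\Upsets W$ by the $\downarrow\times\downarrow$ closure. The tautology $A\fus B\subseteq C$ combined with residuation gives $B\subseteq A\lRes C$, so $z\models a\lRes c$. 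Applying this to the pair $(y,x)\in\hat\gamma_{\ulRes}(z)$ forces $x\in C$, i.e., $(y,z)\in\hat\gamma_\fus(x)$; the $\hat\gamma_{\urRes}$ case is entirely symmetric.

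For the reverse direction each of FC\ref{ax:FrameCond2}-\ref{ax:FrameCond6} is verified at an arbitrary state and valuation by a routine unwinding, using the biconditional to trade universal $\hat\gamma_{\ulRes}$- or $\hat\gamma_{\urRes}$-conditions for existential $\hat\gamma_\fus$-conditions and vice versa; for example for FC\ref{ax:FrameCond6}, a witness $(p,q)\in\hat\gamma_\fus(x)$ with $p\models b$ and $q\models b\lRes a$ is converted via the biconditional into $(p,x)\in\hat\gamma_{\ulRes}(q)$, and $q\models b\lRes a$ then directly forces $x\models a$. The substantive step is the residuation half of the forward direction: no isolated FC\ref{ax:FrameCond3}-\ref{ax:FrameCond6} yields the required relational implication, and one must invoke the joint content of the frame conditions via the residuated-lattice structure on $\Upsets W$, applied to the specific upsets $\uparrow y$, $\uparrow z$, and $\{w : (y,z)\in\hat\gamma_\fus(w)\}$.
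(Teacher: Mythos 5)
Your proof is correct and takes essentially the same route as the paper's: the reverse direction by unravelling the semantics, and the forward direction by instantiating the frame conditions (together with the residuation law on $\Upsets W$ obtained from Proposition \ref{prop:ResLattAx}) at valuations built from the principal upsets $\uparrow y$ and $\uparrow z$. The only difference is cosmetic: for the implication $(y,z)\in \gamma_\fus^{(\downarrow\times \downarrow)}(x) \Rightarrow (y,x)\in \gamma_{\ulRes}^{(\downarrow\times \uparrow)}(z)$ you argue directly from FC\ref{ax:FrameCond6} with the test upset $\{s : (y,s)\in\hat{\gamma}_{\ulRes}(z)\}$, whereas the paper argues contrapositively with the test upset $(\downarrow x)^c$; both are sound.
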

\begin{proof}
The `if' direction follows easily by unravelling the definition of the semantics. So let us turn to the `only if' direction. Assume that $a\fus b\leq c$ iff $b\leq a\lRes c$ and let $\gamma: W\to \TRL W$. We will show that $(y,z)\in \gamma_\fus^{(\downarrow\times \downarrow)}(x) \text { iff }(y,x)\in \gamma_{\ulRes}^{(\downarrow\times \uparrow)}(z)$, the case of $(x,z)\in \gamma_{\urRes}^{(\uparrow\times \downarrow)}(y)$ is treated identically.

We start by showing that if $(y,x)\in \gamma_{\ulRes}^{(\downarrow\times \uparrow)}(z)$, then  $(y,z)\in \gamma_\fus^{(\downarrow\times \downarrow)}(x)$. Assume $(y,x)\in \gamma_{\ulRes}^{(\downarrow\times \uparrow)}(z)$ --- that is, that there exist $(y',x')\in\gamma_{\ulRes}(z)$ such that $y\leq y'$ and $x'\leq x$. Consider a valuation such that $\lsem a\rsem =\uparrow y$ and $\lsem b \rsem=\uparrow z$. We're assuming that $a\fus b\leq c$ iff $b\leq a\lRes c$, which means in particular that $b\leq a\lRes(a\fus b)$. Since $z\models b$, it must therefore also be the case that $z\models a\lRes(a\fus b)$. Since $\lsem a\rsem=y$ and $y\leq y'$ we have $y'\models a$ and it must therefore also be the case that $x'\models a\fus b$, and hence $x\models a \fus b$. It follows that there exist $(y'',z'')\in\gamma_\fus(x)$ with $y''\geq y$ and $z''\geq z$; that is, $(y,z)\in \gamma_\fus^{(\downarrow\times \downarrow)}(x)$.

For the converse, we show that if $(y,x)\notin \gamma_{\ulRes}^{(\downarrow\times \uparrow)}(z)$ then $(y,z)\notin \gamma_{\fus}^{(\downarrow\times \downarrow)}(x)$. Let $x,y,z\in W$ with $(y,x)\notin \gamma_{\ulRes}^{(\downarrow\times \uparrow)}(z)$ and consider a valuation such that $\lsem a\rsem=\uparrow y$ and $\lsem c\rsem=(\downarrow x)^c$. It follows that $z\models a\lRes c$. Indeed, assume the opposite; that is, that there exists $(y',x')\in \gamma_{\ulRes}(z)$ such that $y'\models a$ --- that is, $y\leq y'$ --- and $x'\not\models c$ --- that is, $x'\notin (\downarrow x)^c$; that is, $x'\leq x$. This means exactly that $(y,x)\in \gamma_{\ulRes}^{(\downarrow\times \uparrow)}(z)$, a contradiction. Thus $z\models a\lRes c$. Now assume that $(y,z)\in\gamma_\fus^{(\downarrow\times \downarrow)}(x)$; that is, there exist $(y'',z'')\in \gamma_\fus(x)$ such that $y''\geq y, z''\geq z$. Since $y\models a$ and $z\models a\lRes c$ we have $y''\models a$ and $z''\models a\lRes c$ and therefore we have $x\models a\fus (a\lRes c)$. Since we're assuming that $a\fus b\leq c$ iff $b\leq a\lRes c$, we have in particular that $a\fus (a\lRes c)\leq c$, and thus $x\models c$. But this is impossible since $\lsem c\rsem=(\downarrow x)^c$. Thus we cannot have $(y,z)\in\gamma_\fus(x)^{(\downarrow\times \downarrow)}$; that is, $(y,z)\notin\gamma_\fus(x)^{(\downarrow\times \downarrow)}$.
\end{proof}

The entire information required to encode a $\TRL$-coalgebra validating FC\ref{ax:FrameCond2}-\ref{ax:FrameCond6} is therefore entirely contained in $\gamma_\fus^{(\downarrow\times \downarrow)}$ (or $\gamma_{\ulRes}^{(\downarrow\times \uparrow)}$ or $\gamma_{\urRes}^{(\uparrow\times \downarrow)}$). We will now show that we can in fact simply consider $\gamma_\fus$ (or $\gamma_{\ulRes}$ or $\gamma_{\urRes}$). Note first that Lemma \ref{lem:residuationLemma} enforces a strict constraint on $\gamma_\fus^{(\downarrow\times\downarrow)}$: since $\gamma_{\ulRes}^{(\downarrow\times\uparrow)},\gamma_{\urRes}^{(\uparrow\times \downarrow)}$ are monotone it follows that if $z\leq z'$  we must have
\begin{align}
\{(y,x) \mid (y,z)\in \gamma_\fus^{(\downarrow\times\downarrow)}(x)\} \hspace{1ex}\leq_{W\op\times W}\hspace{1ex} 
\{(y',x') \mid (y',z')\in \gamma_\fus^{(\downarrow\times\downarrow)}(x')\}
\label{eq:EM1}
\end{align}
and similarly, if $y\leq y'$
\begin{align}
\{(x,z) \mid (y,z)\in \gamma_\fus^{(\downarrow\times\downarrow)}(x)\} \hspace{1ex}\leq_{W\times W\op}\hspace{1ex} 
\{(x',z') \mid (y',z')\in \gamma_\fus^{(\downarrow\times\downarrow)}(x')\}\label{eq:EM2}
\end{align}
for the Egli-Milner order on $W\op\times W$. The inequations (\ref{eq:EM1}) and (\ref{eq:EM2}) are quite strong and must be satisfied by any map $\gamma_\fus^{(\downarrow\times\downarrow)}$ capable of reconstructing the entire $\TRL$-coalgebra. More generally, we say that a a monotone map $\gamma: W\to \Pow(W\times W)$ obeys  the \emph{Residuation Compatibility Condition} (RCC) if
\begin{align}
& \{(y,x)\mid (y,z)\in\gamma(x)\}\leq_{W\op\times W}\{(y',x')\mid (y',z')\in\gamma(x')\}\text{ and }\nonumber  \\
& \{(x,z)\mid (y,z)\in\gamma(x)\}\leq_{W\times W\op}\{(x',z')\mid (y',z')\in\gamma(x')\}\label{eq:RCC}\tag{RCC}
\end{align}
We now generalize the statement of Lemma \ref{lem:residuationLemma} to an arbitrary monotone map $\gamma: W\to \Pow(W\times W)$ by defining
\begin{align}
\overline{\gamma}: W\to \Pow(W\op\times W), z\mapsto\{(y,x)\mid (y,z)\in\gamma(x)\}\label{eq:overline}\\
\underline{\gamma}: W\to \Pow(W\times W\op), y\mapsto\{(x,z)\mid (y,z)\in\gamma(x)\}\label{eq:underline}
\end{align}
\begin{lemma}
If $\gamma:W\to \Pow(W\times W)$ satisfies (\ref{eq:RCC}), then $\overline{\gamma}$ and $\underline{\gamma}$ are monotone maps.
\end{lemma}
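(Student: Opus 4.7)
The plan is to establish two things: monotonicity of $\overline{\gamma}$ and $\underline{\gamma}$, and their well-definedness as maps into the convex powersets $\Pow(W\op \times W)$ and $\Pow(W \times W\op)$. The first is essentially tautological from the hypothesis, while the second requires a short interpolation argument using the monotonicity and convex-valuedness of $\gamma$.

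For monotonicity, unpacking definitions gives $\overline{\gamma}(z) = \{(y,x) \mid (y,z) \in \gamma(x)\}$ and analogously for $z'$. The first line of (\ref{eq:RCC}) is then precisely the assertion that whenever $z \leq z'$ the sets $\overline{\gamma}(z)$ and $\overline{\gamma}(z')$ stand in the Egli-Milner order on $\Pow(W\op \times W)$; this is exactly what it means for $\overline{\gamma}: W \to \Pow(W\op \times W)$ to be monotone, and the second line of (\ref{eq:RCC}) handles $\underline{\gamma}$ identically.

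For well-definedness I would show $\overline{\gamma}(z)$ is a convex subset of $W\op \times W$ (the case of $\underline{\gamma}(y)$ is symmetric). Given $(y_1, x_1), (y_2, x_2) \in \overline{\gamma}(z)$ and $(y_1, x_1) \leq (y, x) \leq (y_2, x_2)$ in $W\op \times W$---that is, $y_2 \leq y \leq y_1$ and $x_1 \leq x \leq x_2$ in $W$---I would apply Egli-Milner monotonicity of $\gamma$ at $x_1 \leq x$ and at $x \leq x_2$ to extract two witnesses in $\gamma(x)$ bracketing $(y, z)$ in $W \times W$, and then invoke the convexity of $\gamma(x)$ itself (which holds because $\gamma$ takes values in $\Pow$) to conclude $(y, z) \in \gamma(x)$, i.e., $(y, x) \in \overline{\gamma}(z)$. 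The main delicacy here is the direction-flipping forced by $W\op$: one must apply the forward clause of Egli-Milner in one direction and the backward clause in the other in order to end up with witnesses lying on both sides of $(y, z)$ in the product order on $W \times W$; everything else is bookkeeping.
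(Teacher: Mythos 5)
Your proof is correct. The paper's own proof of this lemma is literally the single line ``Immediate from the definitions,'' and your monotonicity argument is exactly that observation made explicit: condition (\ref{eq:RCC}) \emph{is} the statement $\overline{\gamma}(z)\leq_{W\op\times W}\overline{\gamma}(z')$ for $z\leq z'$ (and likewise for $\underline{\gamma}$), so there is nothing further to prove on that front. Where you go beyond the paper is in checking that $\overline{\gamma}$ and $\underline{\gamma}$ are well-typed, i.e.\ that their values are convex subsets of $W\op\times W$ and $W\times W\op$ respectively, which the paper silently assumes when it writes down the codomain $\Pow(W\op\times W)$. Your interpolation argument for this is sound: from $(y_1,x_1)\leq(y,x)\leq(y_2,x_2)$ in $W\op\times W$ you correctly read off $y_2\leq y\leq y_1$ and $x_1\leq x\leq x_2$ in $W$, use the forward Egli--Milner clause of $\gamma(x_1)\leq\gamma(x)$ to produce an element of $\gamma(x)$ above $(y,z)$ and the backward clause of $\gamma(x)\leq\gamma(x_2)$ to produce one below it, and then close with convexity of $\gamma(x)$. (This does use that $\gamma$ is monotone, but that is part of the standing hypothesis, since (\ref{eq:RCC}) is only defined for monotone $\gamma$.) So your write-up is a strictly more careful version of the paper's argument rather than a different one; the extra convexity check is a legitimate point the paper elides.
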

\begin{proof}
Immediate from the definitions.
\end{proof}
With this notation in place we can now state the main result of this Section, which relies on showing that we can impose (\ref{eq:RCC}) on $\gamma_\fus$ rather than $\gamma_\fus^{(\downarrow\times \downarrow)}$.
\begin{theorem}\label{thm:framecond}
The logic defined by $\LRL$ and the axioms FC\ref{ax:FrameCond2}-\ref{ax:FrameCond6} is strongly complete with respect the class of $\TRL$-coalgebras of the shape
\[
\gamma_I\times\gamma_\fus\times\overline{\gamma}_\fus\times\underline{\gamma}_\fus: W\to \mathbbm{2}\times \Pow(W\times W)\times \Pow(W\op\times W)\times \Pow(W\times W\op)
\] 
such that $\gamma_\fus$ satisfies (\ref{eq:RCC}).
\end{theorem}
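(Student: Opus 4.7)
The plan is to combine Theorem \ref{thm:strCompMain} with Lemmas \ref{lem:updownClosure} and \ref{lem:residuationLemma} to extract, from a generic coalgebraic model of FC\ref{ax:FrameCond1}--FC\ref{ax:FrameCond6}, a new model of exactly the required shape. The key technical point, flagged in the discussion preceding the theorem, is that (RCC) can be imposed on $\gamma_\fus$ itself rather than only on its downward closure $\gamma_\fus^{(\downarrow\times\downarrow)}$.

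For completeness, assume $\Phi \not\vdash \Psi$. By Theorem \ref{thm:strCompMain}, there is a $\TRL$-coalgebra $\gamma: W \to \TRL W$ validating FC\ref{ax:FrameCond1}--FC\ref{ax:FrameCond6} together with a world $w_\Phi$ distinguishing $\Phi$ from $\Psi$. Apply Lemma \ref{lem:updownClosure} to pass to $\hat{\gamma}$, whose four components are $\gamma_I$, $\gamma_\fus^{(\downarrow\times\downarrow)}$, $\gamma_{\ulRes}^{(\downarrow\times\uparrow)}$ and $\gamma_{\urRes}^{(\uparrow\times\downarrow)}$; satisfaction is preserved at every world, so $\hat{\gamma}$ still validates FC\ref{ax:FrameCond1}--FC\ref{ax:FrameCond6} and $w_\Phi$ still distinguishes $\Phi,\Psi$. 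The forward direction of Lemma \ref{lem:residuationLemma} now yields the identifications $\hat{\gamma}_{\ulRes} = \overline{\hat{\gamma}_\fus}$ and $\hat{\gamma}_{\urRes} = \underline{\hat{\gamma}_\fus}$, so $\hat{\gamma}$ is already of the advertised shape with $\gamma_\fus := \hat{\gamma}_\fus$.

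To verify (RCC) for $\hat{\gamma}_\fus$, observe that its two defining inequalities simply express that $\overline{\hat{\gamma}_\fus}$ and $\underline{\hat{\gamma}_\fus}$ are monotone in the relevant Egli-Milner orders; by the identifications above these maps coincide with the coalgebra components $\hat{\gamma}_{\ulRes}$ and $\hat{\gamma}_{\urRes}$, which are monotone as part of a $\TRL$-coalgebra in $\Pos$. This completes the completeness half of the argument.

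Soundness---that any coalgebra of the advertised shape satisfies FC\ref{ax:FrameCond2}--FC\ref{ax:FrameCond6}---uses the converse direction of Lemma \ref{lem:residuationLemma} and reduces to proving the equalities $\overline{\gamma_\fus}^{(\downarrow\times\uparrow)} = \overline{\gamma_\fus^{(\downarrow\times\downarrow)}}$ and $\underline{\gamma_\fus}^{(\uparrow\times\downarrow)} = \underline{\gamma_\fus^{(\downarrow\times\downarrow)}}$, so that the biconditional of Lemma \ref{lem:residuationLemma} collapses to a tautology. I expect these equalities, and specifically the non-trivial inclusion that converts a witness $(y',z') \in \gamma_\fus(x)$ with $z \leq z'$ into a witness $(y'',z) \in \gamma_\fus(x'')$ for some $x'' \leq x$, to be the main obstacle; this inclusion is precisely the `Smyth' half of (RCC) applied at $z \leq z'$, which is exactly where imposing (RCC) directly on $\gamma_\fus$ (rather than only on its downward closure) genuinely pays off.
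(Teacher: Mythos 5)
Your proposal is correct and takes essentially the same route as the paper: completeness by passing to the closed coalgebra via Lemmas \ref{lem:updownClosure} and \ref{lem:residuationLemma} and reading off (\ref{eq:RCC}) from the monotonicity of the residual components, and soundness by reducing the biconditional of Lemma \ref{lem:residuationLemma} to a tautology through the equalities $\overline{\gamma_\fus}^{(\downarrow\times\uparrow)} = \overline{\gamma_\fus^{(\downarrow\times\downarrow)}}$ and $\underline{\gamma_\fus}^{(\uparrow\times\downarrow)} = \underline{\gamma_\fus^{(\downarrow\times\downarrow)}}$. The one step you leave as an expectation is exactly what the paper's two chains of equivalences verify, and by exactly the mechanism you diagnose: the Smyth half of (\ref{eq:RCC}) at $z\leq z'$ for the non-trivial inclusion, and plain monotonicity of $\gamma_\fus$ at $x'\leq x$ for the converse.
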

\begin{proof}
Let $\gamma=\gamma_I\times \gamma_\fus\times \gamma_{\ulRes}\times \gamma_{\urRes}: W\to \TRL W$ be a coalgebra in which the frame conditions FC\ref{ax:FrameCond2}-\ref{ax:FrameCond6} are valid. From Lemma \ref{lem:updownClosure} it follows that the frame conditions are valid in the coalgebra 
\[
\gamma_I\times \gamma_\fus^{(\downarrow\times \downarrow)}\times \gamma_{\ulRes}^{(\downarrow\times \uparrow)}\times \gamma_{\urRes}^{(\uparrow\times \downarrow)}: W\to \TRL W
\]
and from Lemma \ref{lem:residuationLemma} this coalgebra is in fact of the shape
\[
\gamma_I\times \gamma_\fus^{(\downarrow\times \downarrow)}\times \overline{\gamma_\fus^{(\downarrow\times \downarrow)}}\times\underline{\gamma_\fus^{(\downarrow\times \downarrow)}}: W\to \TRL W
\]
If we can show that $ \overline{\gamma_\fus^{(\downarrow\times \downarrow)}}= \overline{\gamma_\fus}^{(\downarrow\times \downarrow)}$  and $ \underline{\gamma_\fus^{(\downarrow\times \downarrow)}}= \underline{\gamma_\fus}^{(\downarrow\times \downarrow)}$, then our claim will follow from Lemma \ref{lem:updownClosure}. We show the first equality, the second one being similar. 
\begin{align*}
&(y,x)\in \overline{\gamma_\fus^{(\downarrow\times \downarrow)}}(z)\\
\Leftrightarrow & (y,z)\in \gamma_\fus^{(\downarrow\times \downarrow)}(x) &\text{Def. of }\overline{\gamma_\fus^{(\downarrow\times \downarrow)}}\\
\Leftrightarrow & \exists y',z'\text{ s.th. } y\leq y', z\leq z' \text{ and } (y',z')\in \gamma_\fus(x)&\text{Def. of }\gamma^{(\downarrow\times\downarrow)}_\fus\\
\Leftrightarrow  & \exists y',z' \text{ s.th. }y\leq y', z\leq z' \text{ and } (y',x)\in\overline{\gamma_\fus}(z')&\text{Def. of }\overline{\gamma}_\fus \\
\Rightarrow & \exists x',y' \text{ s.th. } x'\leq x, y\leq y' \text{ and }(y',x')\in\overline{\gamma_\fus}(z)&(\ref{eq:RCC})\text{ and }z\leq z'\\
\Leftrightarrow & (y,x)\overline{\gamma_\fus}^{(\downarrow\times \uparrow)}(z)&\text{ Def. of }\overline{\gamma_\fus}^{(\downarrow\times \uparrow)}
\end{align*}
Conversely, we have
\begin{align*}
& (y,x)\overline{\gamma_\fus}^{(\downarrow\times \uparrow)}(z)\\
\Leftrightarrow & \exists x',y'\text{ s.th. }y\leq y', x'\leq x\text{ and }(y',x')\in\overline{\gamma_\fus}(z)&\text{Def. of }\overline{\gamma}_\fus^{(\downarrow\times \uparrow)}\\
\Leftrightarrow  & \exists x',y'\text{ s.th. }y\leq y', x'\leq x\text{ and }(y',z)\in \gamma_\fus(x')&\text{Def. of }\overline{\gamma}_\fus\\
\Rightarrow & \exists y',z'\text{ s.th. }y\leq y', z\leq z'\text{ and }(y',z')\in\gamma_\fus(x) & \gamma_\fus \text{ monotone and }x'\leq x\\
\Leftrightarrow & (y,z)\in\gamma_{\fus}^{(\downarrow\times\downarrow)}(x)&\text{Def. of }\gamma^{(\downarrow\times\downarrow)}_\fus\\
\Leftrightarrow & (y,x)\in \overline{\gamma_{\fus}^{(\downarrow\times\downarrow)}}(z)&\text{Def. of }\overline{\gamma_{\fus}^{(\downarrow\times\downarrow)}}
\end{align*}
\end{proof}

\begin{example}\label{example:heaps}
Heaps, which for the sake of brevity and convenience we shall define as partial maps on $\mathbb{N}_+$ with finite domain, form a model satisfying Theorem \ref{thm:framecond}. To show this, we first define the set of heaps
\[
\mathcal{H}=\{f:\mathbb{N}_+\rightharpoonup\mathbb{N}_+\mid \dom(f)\in\mathcal{P}_{\omega}(\mathbb{N}_+)\}
\]
It forms a poset under the order
\[
f\leq g\text{ if }f=g\restrict\dom(f)
\]
We now define a $\TRL$-coalgebra structure on $\mathcal{H}$ as follows:
\begin{alignat*}{3}
 & \gamma_I : \mathcal{H}\to\mathbbm{2},  && \hspace{2ex}f \mapsto && \begin{cases}0&\text{ if }\dom(f)=\emptyset\\1&\text{ else}\end{cases}  \\
& \gamma_{\fus}: \mathcal{H}\to \Pow(\mathcal{H}\times\mathcal{H}), &&\hspace{2ex} f  \mapsto && \{(g,h)\mid \dom (g)\cap\dom(h)=\emptyset, \\
&  && && f\restrict\dom(g)=g, f\restrict\dom(h)=h\} 
\end{alignat*}
Clearly $\gamma_I$ is trivially monotone, and validates the axioms FC\ref{ax:FrameCond1}. To see that $\gamma_{\fus}$ is well-typed, note first that $\gamma_{\fus}(f)$ is a down-set, and therefore also convex. Moreover, it is not hard to see that if $f'$ extends $f$; that is, $f\leq f'$ then $\gamma_{\fus}(f)\subseteq \gamma_{\fus}(f')$. The first half of the Egli-Milner definition is therefore trivially satisfied. For the second half, if $(g',h')\in \gamma_{\fus}(f')$, then $(g'\restrict \dom(f), h'\restrict\dom(f)\in\gamma_{\fus}(f)$ provides the witness we need. It follows that $\gamma_{\fus}$ is also monotone and thus well-typed. Finally, we need to check that it satisfies (\ref{eq:RCC}); that is, if $h\leq h'$ 
\[
\{(g,f)\mid (g,h)\in\gamma_{\fus}(f)\}\leq_{\mathcal{H}\op\times\mathcal{H}}\{(g',f')\mid (g',h')\in\gamma_{\fus}(f')\}
\]
Starting from $(g,f)$ in the first set, we define $f'$ by $f'=f$ on $\dom(f)$ and $f'=h'$ on $\dom(h')\cap(\dom(f))^c$ and get an element $(g,f')$ such that $(g,h')\in \gamma_{\fus}(f'), g\leq g$ and $f\leq f'$, which shows that the first direction of the Egli-Milner order holds. For the second, start with $(g',f')$ in the second set and define $f$ by $f=h$ on $\dom(h)$ and $f=g'$ on $\dom(g')$. It is easy to see that $(g',h)\in \gamma_{\fus}(f),g'\leq g'$ and $f\leq f'$, providing a witness for the second direction of the Egli-Milner order. It follows that $\gamma_{\fus}$ satisfies (\ref{eq:RCC}), and heaps therefore provide a class of models for the distributive full Lambek calculus.
\end{example}
\subsection{Additional frame conditions.} 

Structural rules can be added to the full distributive Lambek calculus to form new logics. These rules are the exchange rule (e), the contraction rule (c), the left weakening rule (lw), and the right weakening rule (rw). Relevance logic for example consists in adding (c) and (e) to the distributive Lambek calculus, adding only (c) defines the positive MALL$^+$ fragment of linear logic (\cite{2002:restallintroduction}), whilst the combination of (lw), (rw) and (e) defines affine logic. These structural rules correspond to (in)equations in the theory of residuated lattices (see \cite{2002:restallintroduction,2003:onoresiduated,2007:galatosresiduated}); that is, in the language of $\LRL$-algebras. Let us go through them in order.

\medskip 

\textbf{Exchange.} The exchange rule corresponds to the axiom (e) given by $a\fus b=b\fus a$. It is easy to see from the results of Section \ref{sec:DLE} that it is canonical. The class of $\TRL$-coalgebras in which this axiom is valid is characterised by 
\[
(y,z)\in\gamma_\fus(x)\Rightarrow (z,y)\in\gamma_\fus^{(\downarrow\times\downarrow)}(x)
\] 
in other words, the ternary relation defined by $\gamma_\fus^{(\downarrow\times\downarrow)}$ is closed under permuting the successor states. In the case of the boolean Lambek calculus, that is to say in the classical case, we clearly have $(y,z)\in\gamma_\fus(x)\text{ iff }(z,y)\in\gamma_\fus(x)$.

\medskip 

\textbf{Contraction}. The contraction rule corresponds to the axiom (c) given by $a\leq a\fus a$; that is, increasing idempotency (see \cite{2003:onoresiduated,2007:galatosresiduated}). Once again, the canonicity of this axiom is almost immediate from Section \ref{sec:DLE}. The class of $\TRL$-coalgebras in which (c) is valid is characterized by 
\[
\forall x\in W \exists (y,z)\in\gamma(x)\text{ s.th. }x\leq y,z
\]
and in the classical case, this means that $(x,x)\in\gamma_{\fus}(x)$.

\textbf{Left weakening}. The weakening rule corresponds to the axiom $a\leq I$, viz. every state is a unit state. Coalgebraically, this means that the component $\gamma_I$ of a $\TRL$-coalgebra is the constant map $0\in\mathbbm{2}$. In the classical case, this amounts to saying that $I=\top$.

For the right weakening rule we need to introduce a new unit, which we will denote $J$. In fact we use this as an opportunity to introduce a whole new signature, dual to the signature defining $\LRL$. We define
\[
\LRL\dual: \RK\to\RK, \begin{cases}
\LRL\dual A =\Free\{J, a\fusd b, a\lResd b, a\rResd b\mid a,b\in \Forg A\}/\equiv\\
\LRL\dual f: \LRL A\to \LRL B, [a]_{\equiv}\mapsto [f(a)]_{\equiv} \, , 
\end{cases}
\]
where $\equiv$ is the fully invariant equivalence relation in $\RK$ generated by following the Distribution Laws for non-empty finite subsets $X$ of $A$:
\begin{multicols}{2}
\begin{enumerate}[DL$\dual$1.]
\item$\bigwedge X\fusd a=\bigwedge [X\fusd a]\label{ax:distribLaw1d} $ 
\item$a\fusd \bigwedge X=\bigwedge[a\fusd X]$
\label{ax:distribLaw2d} 
\item$a\lResd\bigvee X=\bigvee[a\lResd X]$
\label{ax:distribLaw3d} 
\item$\bigwedge X\lResd a=\bigvee[X\lResd a]$
\label{ax:distribLaw4d} 
\item$\bigvee X\rResd a=\bigvee [X\rResd a]$
\label{ax:distribLaw5d} 
\item$a\rResd \bigwedge X=\bigvee[a\rResd X]$.
\label{ax:distribLaw6d}
\end{enumerate}
\end{multicols}
Note that the equations DL$\dual$\ref{ax:distribLaw1d}-\ref{ax:distribLaw6d} are dual to the equations DL\ref{ax:distribLaw1}-\ref{ax:distribLaw6}. In particular, $\fusd$ is a binary $\square$, whilst $\fus$ is a binary $\dia$. The language $\lang(\LRL\dual)$ will also be interpreted in $\TRL$ coalgebras, via the semantic transformation $\semRLd:\LRL\dual G\to G\TRL$ defined at every poset $W$ via its action on the generators:
\begin{align*}
\semRLd_W(J)&=\{t\in \TRL W \mid \pi_1(t)=1\in\two\}\\
\semRLd_W(u\fusd v)&=\{t\in \TRL W\mid \forall (x,y)\in \pi_2(t), x\in u\text{ or } y\in v\}\\
\semRLd_W(u\lResd w)&=\{t\in \TRL W\mid \exists (x,y)\in \pi_3(t), x\notin u\text{ and } y\in w\}\\
\semRLd_W(w\rResd v)&=\{t\in \TRL W\mid\exists (x,y)\in \pi_4(t), x\notin v\text{ and } y\in w\}
\end{align*}
A proof very similar to that of Proposition \ref{prop:semPreservationProp} shows that $\semRLd$ is well-defined, and using exactly the same technique as in Theorem \ref{thm:rightInvDelta}, it can be shown that the adjoint transpose of $\semRLd$ has right inverses at every distributive lattice. It follows that $\lang(\LRL\dual)$ is strongly complete with respect to the class of all $\TRL$-coalgebras. However, the intended interpretation of $\lResd$ and $\rResd$ is once again that they should be the left and right residuals of $\fusd$. We therefore introduce the following axioms.
\begin{multicols}{2}
\begin{enumerate}[FC$\dual$1.]
\item $a\fusd J=a$, $J\fusd a=a$, 
\label{ax:FrameCond1d} 
\item $I\leq a\lResd a$, $I\leq a\rResd a$,
\label{ax:FrameCond2d} 
\item $a\fusd(b\lResd c)\leq (a\fusd b)\lResd c$,\label{ax:FrameCond3d} 
\item $(c\rResd b)\fusd a\leq c\rResd(a\fusd b)$,\label{ax:FrameCond4d} 
\item $(a\rResd b)\fusd b\leq a$, and 
\label{ax:FrameCond5d} 
\item $b\fusd(b\lResd a)\leq a$.
\label{ax:FrameCond6d}
\end{enumerate}
\end{multicols}

These axioms capture identities and residuation, and are therefore of exactly the same shape as axioms FC\ref{ax:FrameCond1}-\ref{ax:FrameCond6}. A dual construction to that of  Section \ref{subsec:Describing}, shows that equations FC$\dual$\ref{ax:FrameCond1d}-\ref{ax:FrameCond6d} are valid in a $\TRL$-coalgebra $\gamma: W\to\TRL$ iff it satisfies
\begin{align*}
(y,z)\in \gamma_\fusd^{(\uparrow\times \uparrow)}(x) \text { iff }(y,x)\in \gamma_{\ulResd}^{(\uparrow\times \downarrow)}(z) \text{ iff } (x,z)\in \gamma_{\urResd}^{(\downarrow\times \uparrow)}(y)
\end{align*}
where $\overline{\gamma}_\fusd$ and $\underline{\gamma}_\fusd$ are defined by Eqs. (\ref{eq:overline}), (\ref{eq:underline}). The fact that $\gamma_{\ulResd}^{(\uparrow\times \downarrow)}$ and $\gamma_{\urResd}^{(\downarrow\times \uparrow)}$ should be monotone means that $\gamma_{\fusd}^{(\uparrow\times \uparrow)}$ must satisfy (\ref{eq:RCC}). Finally, dualising Theorem \ref{thm:framecond} we get 
\begin{theorem}
$\lang(\LRL\dual)$ quotiented by the axioms FC$\dual$\ref{ax:FrameCond2d}-\ref{ax:FrameCond6d} is strongly complete with respect to the class of $\TRL$-coalgebras of the shape
\[
\gamma_J\times\gamma_\fusd\times\overline{\gamma}_\fusd\times\underline{\gamma}_\fusd: W\to \mathbbm{2}\times \Pow(W\times W)\times \Pow(W\op\times W)\times \Pow(W\times W\op)
\] 
such that $\gamma_{\fusd}$ satisfies (\ref{eq:RCC}).
\end{theorem}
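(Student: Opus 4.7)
My plan is to mirror the proof of Theorem~\ref{thm:framecond} in the dualised syntax $\LRL\dual$ with semantic transformation $\semRLd$ and frame conditions FC$\dual$\ref{ax:FrameCond2d}--\ref{ax:FrameCond6d}. The dualised version of Theorem~\ref{thm:strCompMain} already supplies strong completeness with respect to the class of $\TRL$-coalgebras in which FC$\dual$\ref{ax:FrameCond2d}--\ref{ax:FrameCond6d} are valid: FC$\dual$\ref{ax:FrameCond2d}--\ref{ax:FrameCond6d} have the same syntactic shape as FC\ref{ax:FrameCond2}--\ref{ax:FrameCond6} but with joins and meets interchanged, so Proposition~\ref{prop:presPropAndTopol} produces the Scott-continuity data required to run the Principle of Matching Topologies (Theorem~\ref{thm:PrincMatchTopol}) exactly as in Proposition~\ref{prop:FrameCondCan}. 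The coincidence of the J\'{o}nsson--Tarski and canonical extensions for $\LRL\dual$-algebras is an almost verbatim dualisation of Proposition~\ref{prop:JonTarskiCanExt}, swapping $\bigvee$ and $\bigwedge$; the right-invertibility of the adjoint transpose of $\semRLd$ has already been asserted in the construction of $\LRL\dual$.

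The substantive work is then to identify this class with the one described in the statement, and this is done by dualising Section~\ref{subsec:Describing}. First, I would prove a dual of Lemma~\ref{lem:updownClosure}: replacing $\gamma_\fusd,\gamma_{\ulResd},\gamma_{\urResd}$ by the closures $\gamma_\fusd^{(\uparrow\times\uparrow)},\gamma_{\ulResd}^{(\uparrow\times\downarrow)},\gamma_{\urResd}^{(\downarrow\times\uparrow)}$ does not change satisfaction of $\lang(\LRL\dual)$-formulas. This follows from the shape of $\semRLd$: the universal quantifier defining $u\fusd v$ is preserved when pairs are enlarged componentwise because $u,v$ are upsets; and the existential quantifiers in $u\lResd w$ and $w\rResd v$, which combine $x\notin u$ (a downward-closed condition) with $y\in w$ (an upward-closed one), force exactly the mixed closures indicated above.

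Next I would prove the dual of Lemma~\ref{lem:residuationLemma}, namely that FC$\dual$\ref{ax:FrameCond2d}--\ref{ax:FrameCond6d} are valid in $\gamma$ iff
\[
(y,z)\in \gamma_\fusd^{(\uparrow\times\uparrow)}(x)\iff (y,x)\in \gamma_{\ulResd}^{(\uparrow\times\downarrow)}(z)\iff (x,z)\in \gamma_{\urResd}^{(\downarrow\times\uparrow)}(y).
\]
The ``if'' direction is again a direct calculation with $\semRLd$. For ``only if'', the argument of Lemma~\ref{lem:residuationLemma} dualises, but one must choose ``test'' upsets that interact correctly with the flipped polarities in $\semRLd$: instead of taking principal upsets $\uparrow y$ together with $(\downarrow x)^c$, I would use complements of principal downsets such as $(\downarrow y)^c$ paired with principal upsets $\uparrow z$, so that the existential clauses in $\semRLd(u\lResd w)$ and $\semRLd(w\rResd v)$ pinpoint a concrete witness pair. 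Getting these valuations right and threading negations through all four sub-arguments is the step I expect to be the main obstacle, since it is here that the universal/existential duality between $\LRL$ and $\LRL\dual$ must be absorbed; once the correct test upsets are in place, the inequations FC$\dual$\ref{ax:FrameCond3d}--\ref{ax:FrameCond6d} yield the required implications between witness pairs by the same calculational pattern as in Lemma~\ref{lem:residuationLemma}.

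Finally, since $\gamma_{\ulResd}^{(\uparrow\times\downarrow)}$ and $\gamma_{\urResd}^{(\downarrow\times\uparrow)}$ are monotone, the characterisation above forces (\ref{eq:RCC}) on $\gamma_\fusd^{(\uparrow\times\uparrow)}$, and this constraint lifts to $\gamma_\fusd$ itself by the dual of the commutation computation at the end of Theorem~\ref{thm:framecond}, namely $\overline{\gamma_\fusd^{(\uparrow\times\uparrow)}}=\overline{\gamma_\fusd}^{(\uparrow\times\downarrow)}$ and $\underline{\gamma_\fusd^{(\uparrow\times\uparrow)}}=\underline{\gamma_\fusd}^{(\downarrow\times\uparrow)}$. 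Combined with the dual of Lemma~\ref{lem:updownClosure}, this replaces $\gamma$ by a coalgebra of the advertised shape $\gamma_J\times\gamma_\fusd\times\overline{\gamma}_\fusd\times\underline{\gamma}_\fusd$ without changing any denotations, completing the proof.
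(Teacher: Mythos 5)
Your proposal is correct and follows exactly the route the paper takes: the paper's own "proof" is just the one-line remark that the theorem follows by dualising Theorem~\ref{thm:framecond} (together with the dual residuation condition stated in the surrounding text), and your proposal is a faithful, more detailed expansion of that dualisation, with the correct mixed closures $(\uparrow\times\uparrow)$, $(\uparrow\times\downarrow)$, $(\downarrow\times\uparrow)$ and the correct transfer of (\ref{eq:RCC}) from $\gamma_\fusd^{(\uparrow\times\uparrow)}$ to $\gamma_\fusd$. The only step you leave open, the precise choice of test upsets in the dual of Lemma~\ref{lem:residuationLemma}, is flagged honestly and is resolvable exactly as you describe.
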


Combining this theorem with our earlier result on the modularity of coalgebraic completeness-via-canonicity (Theorem \ref{thm:modularComp}) we get that the logic defined by $\lang(\LRL+\LRL\dual)$ and the axioms FC\ref{ax:FrameCond2}-\ref{ax:FrameCond6} and FC$\dual$\ref{ax:FrameCond2d}-\ref{ax:FrameCond6d} is strongly complete with respect to $\TRL\times \TRL$-coalgebras of the shape:
\begin{align}
(\gamma_I & \times\gamma_\fus\times\overline{\gamma}_\fus\times\underline{\gamma}_\fus)\times(\gamma_J\times\gamma_\fusd\times\overline{\gamma}_\fusd\times\underline{\gamma}_\fusd):\nonumber\\ W\to (\mathbbm{2}\times & \Pow(W\times W)\times \Pow(W\op\times W)\times \Pow(W\times W\op))^2\label{eq:MALL+}
\end{align}
where both $\gamma_{\fus}$ and $\gamma_{\fusd}$ satisfy (\ref{eq:RCC}). We can now return to the structural rules which we described at the beginning of this section and account for right weakening.

\medskip

\textbf{Right weakening.} Working in the signature of the fusion $\lang(\LRL\oplus\LRL\dual)$, right weakening corresponds to the axioms $J\leq a$ which is clearly canonical. Coalgebraically this axiom corresponds to saying that if $\gamma_J(w)=1$ in a model, then $w\models a$ for any formula $a$ and any valuation. This is clearly not possible: let $p\in V$ be a propositional variable and consider a valuation such that $\lsem p\rsem=(\downarrow w)^c$ (a valid upset), clearly $w\not\models p$. It follows that $\gamma_J$ must be the constant monotone map $0\in\mathbbm{2}$. In particular if left weakening is also allowed, then nothing distinguishes $\gamma_I$ and $\gamma_J$ and $J$ holds precisely when $I$ does not; that is, never. In the classical case we clearly have $J=\bot$.

The logic defined by $\LRL+\LRL\dual$ and the axioms FC\ref{ax:FrameCond1}-\ref{ax:FrameCond6}, FC$\dual$\ref{ax:FrameCond1d}-\ref{ax:FrameCond6d} and the exchange axioms $a\fus b=b\fus a$ and $a\fusd b=b\fusd a$ is the positive fragment of the Multiplicative-Additive Linear Logic (MALL$^+$ in \cite{2002:restallintroduction}). Many additional features could be added to the quotient of $\lang(\LRL+\LRL\dual)$ under FC\ref{ax:FrameCond1}-\ref{ax:FrameCond6} and FC$\dual$\ref{ax:FrameCond1d}-\ref{ax:FrameCond6d}, most notably one could define the `negation' operations $\sim a\triangleq a\lRes J$ and $\neg a\triangleq J\rRes a$ and use them to connect the behaviour of the two halves of the signature. We refer the reader to \cite{2002:restallintroduction},\cite{2003:onoresiduated} and \cite{2007:galatosresiduated} for such considerations. To conclude we return to our heap model and show that it is a model for the logic we have just defined.

\begin{example}[\cite{2014:Brotherston-Villard}]
Recall from Example \ref{example:heaps} that the poset of heaps $\mathcal{H}$ can be equipped with a map $\gamma_{\fus}: \mathcal{H}\to\Pow(\mathcal{H}\op\times \mathcal{H})$ which satifies (\ref{eq:RCC}), and can thus be used to reconstruct an entire $\TRL$-coalgebra structure (modulo a monotone map $\gamma_I: \mathcal{H}\to \mathbbm{2}$). We will now define a second such map which will interpret the dual signature given by $\LRL\dual$ in the way suggested by \cite{2014:Brotherston-Villard}. We choose an arbitrary upset of heaps $U\subseteq \mathcal{H}$ and define
\begin{alignat*}{3}
 & \gamma_J : \mathcal{H}\to\mathbbm{2},  & & f \mapsto & & \begin{cases}1&\text{ if }f\in U \\  
 	s0 & \text{ else}\end{cases}  \\
 & \gamma_{\fusd}: \mathcal{H}\to \Pow(\mathcal{H}\times\mathcal{H}), & & f  \mapsto & & 
 	\{(g,h)\mid \dom(f)=\dom(g)\cap\dom(h) \\
 & & & & & \quad g\restrict\dom(f)=h\restrict\dom(f)\} 
\end{alignat*}
The map $\gamma_J$ is well-typed by construction. To see that $\gamma_{\fusd}$ is also well-typed, note first that $\gamma_{\fusd}(f)$ is this time an upset, and therefore convex. It is not difficult to check in the same way as in Example \ref{example:heaps} that if $f\leq f'$ then $\gamma_{\fusd}(f)\leq \gamma_{\fusd}(f')$ for the Egli-Milner order; that is, $\gamma_{\fusd}$ is monotone and thus well-typed. Simple set-theoretic considerations of the same type as in Example \ref{example:heaps} also show that $\gamma_{\fusd}$ defined as above satisfies (\ref{eq:RCC}). It follows that the data of $\gamma_{\fus},\gamma_{\fusd}$ and $U$ endows the poset $\mathcal{H}$ of heaps with a $(\TRL)^2$-coalgebra structure allowing the interpretation of $\lang(\LRL+\LRL\dual)$-formulas. It is shown in \cite{2014:Brotherston-Villard} that several additional $\lang(\LRL+\LRL\dual)$-axioms may be envisaged. Most notably the axioms $a\leq a\fusd 0$ and $a\fusd 0\leq a$ (which combined give FC$\dual$\ref{ax:FrameCond1d}, an axiom we have chosen not to enforce as `standard'), the contraction axiom $a\fusd a\leq a$ (dual to axiom (c)) and the \emph{weak distribution axiom} $a\fus(b\fusd c)\leq (a\fus b)\fusd a$. These axioms are canonical, and Theorems \ref{thm:strCompMain} and \ref{thm:modularComp} therefore easily provides strong completeness for these axioms too, although as was is shown in \cite{2014:Brotherston-Villard}, no heap model can validate all three of these axioms simultaneously. 
\end{example}

\section{Conclusion and future work}

We have shown how distributive substructural logics can be formalized and given a semantics in the framework of 
coalgebraic logic, and highlighted the modularity of this approach. By choosing a syntax whose operators explicitly 
follow distribution rules, we can use the elegant topological theory of canonicity for DLs, and in particular the notion 
of smoothness and of topology matching, to build a set of canonical (in)equation capturing the 
distributive full Lambek calculus. The coalgebraic approach makes the connection between algebraic 
canonicity and canonical models explicit, categorical and generalizable.

The modularity provided by our approach is twofold. Firstly, we have a generic method for building canonical 
(in)equations by using the Principle of Matching Topologies. Getting completeness results with respect to simple 
Kripke models for variations of the distributive full Lambek calculus (e.g. relevant logic, MALL$^{+}$, etc...) becomes very 
straightforward. Secondly, adding more operators to the fundamental language simply amounts to taking a 
\emph{coproduct} of syntax constructors (e.g., $\LRL+\LML$ to define modal substructural logics) and interpreting it with a 
\emph{product} of model constructors (e.g., $\TRL\times\TML$). This is particularly suited to logics which 
build on BI such as the bi-intuitionistic boolean BI of \cite{2014:Brotherston-Villard}.

The operators $\fus,\lRes,\rRes$ all satisfy simple distribution laws, but our approach could also 
accommodate operators with more complicated distribution laws and non-relational semantics. For example, 
the theory presented in this work could perhaps be extended to cover a graded version of $\fus$, say $\fus_k$, whose 
interpretation would be `there are at least $k$ ways to separate a resource such that...' or `a resource can be split in two at a cost of $k$...', the semantics would 
be given by coalgebras of the type $\two\times\mathcal{B}(-\times -)$ where $\mathcal{B}$ is the `bag' or 
multiset functor. Similarly, a graded version $\to_k$ of the intuitionistic implication whose meaning would 
be `... implies ... apart from at most $k$ exceptions' and interpreted by $\mathcal{B}(-\times -)$-coalgebras could 
also be covered by our approach. Crucially, such operators do satisfy (more complicated) distribution laws 
which lead to generalizations of the results in Section \ref{sec:DLE}, and the possibility of building canonical 
(in)equations. The coalgebraic infrastructure would then allow the rest of the theory to stay essentially unchanged. We are currently investigating these possibilities.

\bibliographystyle{amsalpha} 
\bibliography{ResourcesBib}

\section*{Appendix}

\begin{proof}[Proof of Theorem \ref{thm:rightInvDelta}]
We prove the result for $\LRL$ and $\TRL$, the same technique can readily be applied to $\LML$ and $\TML$. We need to prove that $\hat{\delta}^{\mathrm{RL}}$ has a natural right inverse. By describing a prime filter of $LA$ in terms of the `generators' it contains we get the following characterization of $\hat{\delta}^{\mathrm{RL}}_A: T\pf A\to \pf LA$: 
\begin{align*}
I\in \hat{\delta}_A(U_1,U_2,U_3,x)&\text{ iff } x=0 \\
a\fus b\in \hat{\delta}_A(U_1,U_2,U_3,x)&\text{ iff }\exists (F_1,F_2)\in U_1, (a,b)\in (F_1,F_2)\\
a\lRes b\in \hat{\delta}_A(U_1,U_2,U_3,x)&\text{ iff }\forall (F_1,F_2)\in U_2,a\in F_1\Rightarrow b\in F_2\\
a\rRes b\in \hat{\delta}_A(U_1,U_2,U_3,x)&\text{ iff }\forall (F_1,F_2)\in U_3, b\in F_2\Rightarrow a\in F_1
\end{align*}

At every distributive lattice $A$, we now define $\gamma_A:\pf L A\to T\pf A$ by
\begin{align*} 
&F\mapsto\begin{cases}
&0\text{ if }I\in F, 1\text{ else}\\
&\{(F_1,F_2)\in (\pf A)^2\mid a\fus b \in F\text{ whenever }(a,b)\in (F_1,F_2)\}\\
&\{(G_1,G_2)\in (\pf A)^2\mid a\in G_1\Rightarrow b\in G_2\text{ whenever }a\lRes b \in F\}\\
&\{(H_1,H_2)\in (\pf A)^2\mid b\in H_2\Rightarrow a\in H_1\text{ whenever }a\rRes b \in F\} .
\end{cases}
\end{align*}
By unravelling the definition of $\hat{\delta}^{\mathrm{RL}}_A$, we get that for $\gamma_A$ to be a right inverse of $\hat{\delta}^{\mathrm{RL}}_A$ it must satisfy:
\begin{align}
(a\fus b)\in F & \Leftrightarrow  \exists (F_1,F_2)\in \pi_1(\gamma_A(F))\text{ s.th. }(a,b)\in(F_1,F_2)\label{eq:ExLemm1}\\
a\lRes b\in F & \Leftrightarrow  \forall G_1,G_2\in \pi_2(\gamma_A(F))\hspace{1em} a\in G_1\Rightarrow b\in G_2\label{eq:ExLemm2}\\
a\rRes b\in F & \Leftrightarrow  \forall H_1,H_2\in \pi_2(\gamma_A(F))\hspace{1em} b\in H_2\Rightarrow a\in H_1\label{eq:ExLemm3}
\end{align}
Note that the first component of $\gamma_A$ poses no difficulty since
\[
I \in \hat{\delta}_A(\gamma_A(F)) \Leftrightarrow \pi_1(\gamma_A(F))=0 \Leftrightarrow I\in F
\]
Note also that the right-to-left direction of (\ref{eq:ExLemm1}) and the left-to-right direction of (\ref{eq:ExLemm2},\ref{eq:ExLemm3}) follows immediately from the definitions. The hard part of the proof are the opposite directions.

\textbf{Left-to-right direction of (\ref{eq:ExLemm1}):} Assume that $a\fus b\in F$, we need to build $F_1,F_2$ such that: (1)$a\in F_1$, (2)$b\in F_2$ and (3)$(F_1,F_2)\in \pi_2(\gamma_A(F))$; that is,  $a'\in F_1$ and $b'\in F_2$ implies $a'\fus b'\in F$, or equivalently, $a'\fus b'\notin F$ implies $a'\notin F_1$ or $b'\notin F_2$. We will build $F_1,F_2$, using a proof which is similar to the proof of the prime ideal theorem for filter-ideal pairs. Let us denote by $\mathscr{P}(a,b)$ the set of pairs $((F_1,I_1),(F_2,I_2))$ such that
\begin{enumerate}
\item $\uparrow a\subseteq F_1$
\item $\uparrow b\subseteq F_2$
\item $I_1=\{c\mid \exists d\in F_2 \text{ s.th. }c\fus d\notin F\}$
\item $I_2=\{d\mid \exists c\in F_1 \text{ s.th. }c\fus d\notin F\}$
\item $F_1\subseteq\{c\mid \forall d\in F_2, c\fus d\in F\}$
\item $F_2\subseteq \{d\mid \forall c\in F_1, c\fus d\in F\}$
\end{enumerate}
We make the following observations about $\mathscr{P}(a,b)$
\begin{itemize}
\item it is non-empty: $((\uparrow a,\{c\mid c\fus b\notin F\}),(\uparrow b,\{d\mid a\fus d\notin F\}))\in \mathscr{P}(a,b)$
\item it forms a poset under component-wise set inclusion.
\item $I_1, I_2$ are ideals. It is clear that they are down-sets. Moreover, if $c,c'\in I_1$ then there exist $d,d'\in F_2$ s.th. $c\fus d, c'\fus d'\notin F$, and as a consequence $c\fus(d\wedge d'), c'\fus(d\wedge d')\notin F$ and $d\wedge d'\in F_2$ since $F_2$ is a filter. Since $F$ is prime and $\fus$ distributes over joins it follows that $(c\vee c')\fus (d\wedge d')\notin F$, and thus $d\wedge d'$ witnesses the fact that $c\vee c'\in I_1$. The proof is identical for $I_2$.
\item $F_1\cap I_1=F_2\cap I_2=\emptyset$ for each $((F_1,I_1), (F_2,I_2))\in\mathscr{P}(a,b)$. Indeed assume that there exist $f\in F_1,i\in I_1$ such that $f\leq i$, then we have $f\fus d\leq i\fus d$ for some $d\in F_2$ such that $i\fus d\notin F$. But by construction we must have $f\fus d\in F$ which contradicts $i\fus d\notin F$, since $F$ is a filter.
\end{itemize}

Let us now check that $\mathscr{P}(a,b)$ has upper bounds of chains. Assume that $((F_1^i,I_1^i),(F_2^i,I_2^i))\in \mathscr{P}(a,b), i\in \omega$ and define 
\[
F_i^\infty=\bigcup_j F_i^j, \hspace{2em}I_i^\infty=\bigcup_j I_i^j, \hspace{2em}i=1,2
\]
It is well-known and easy to check that the union of a chain of filter (resp. ideals) is a filter (resp. ideals). Let us now check that conditions 1.-6. are satisfied too. The first two conditions are trivially satisfied. For 3.-4., let $x\in I_1^\infty$, by definition there exists $i\in \omega$ s.th. $c\in I_1^i$ and thus there exist $d\in F_1^i$ s.th. $c\fus d\notin F$, but clearly $d\in F_1^\infty$ too, which shows that $I_1^\infty\subseteq \{c\mid d\in F_2^\infty \text{ s.th. }c\fus d\notin F\}$. The opposite inclusion works in exactly the
same way: let $c$ be s.th. there exists $d\in F_2^\infty$ s.th. $c\fus d\notin F$, then this $d$ can be traced back to a certain $F_2^i$ and thus $c\in I_1^i$. The proof for 5.-6. is very similar, let $c\in F_1^\infty$, then there exist $i\in\omega$ s.th. $c\in F_1^i$. Now let $d\in F_2^\infty$, then there exist $j\in \omega$ s.th. $d\in F_2^j$. By taking $k=\max(i,j)$ we get that $c\in F_1^k,d\in F_2^k$ from which it follows that $c\fus d\in F$.

We can now apply Zorn's lemma to get the existence of a maximal element of $\mathscr{P}(a,b)$, which we will call
\[
((\hat{F}_1,,\hat{I}_1),(\hat{F}_2,\hat{I}_2))
\]
and we claim that $\hat{F}_1,\hat{F}_2$ are two prime filters satisfying conditions (1)-(3) which we specified at the beginning of the proof. It is clear that $a\in \hat{F}_1$ and $b\in\hat{F}_2$, thus (1) and (2) are satisfied. For (3), assume that $a'\fus b'\notin F$ and that $a'\in \hat{F}_1$, then by construction, $b'\in\hat{I}_2$, and since $\hat{F}_2\cap \hat{I}_2=\emptyset$ we get $b'\notin\hat{F}_2$ which is what we needed to show. The last step of the proof is to show that $\hat{F}_1,\hat{F}_2$ are prime. Assume that $c\vee c'\in \hat{F}_1$ but that $c,c'\notin\hat{F}_1$. It follows that
\begin{align*}
((\hat{F}_1,,\hat{I}_1),(\hat{F}_2,\hat{I}_2)) \subsetneq((\langle \hat{F}_1\cup\{c\}\rangle, \hat{I}_1),(\hat{F}_2,\{d\mid \exists c\in\hspace{-2pt} \langle \hat{F}_1\cup\{c\}\rangle\text{ s.t. }c\fus d\notin F\}))
\end{align*}
where $\langle \hat{F}_1\cup\{c\}\rangle$ is the filter generated by $\hat{F}_1\cup\{c\}$. Since the left-hand side of the inequality is maximal, it cannot be the case that the right-hand side belongs to $\mathscr{P}(a,b)$ that is, one of the conditions 1.-6. cannot hold. Clearly 1. and 2. must hold, and 3. and 4. hold by construction, thus 5. or 6. cannot hold. In fact both conditions will not hold precisely if there exist $d\in \hat{F}_2$ and $f\in \hat{F}_1$ such that $(f\wedge c)\fus d\notin F$; that is, $f\wedge c\in \hat{I}_1$; that is, there exist $i\in I_1$ s.th.
\[
f\wedge c\leq i
\]
A completely similar reasoning shows that there must exist $f'\in \hat{F}_1$ and $i'\in \hat{I}_1$ such that
\[
f'\wedge c'\leq i'
\]
It thus follows that
\[
(f\vee f')\wedge (f\vee c')\wedge (c\vee f')\wedge (c\vee c')\leq i\vee i'
\]
Since $\hat{F}_1$ is a filter, $\hat{I}_1$ is an ideal, and we've assumed $c\vee c'\in\hat{F}_1$, we get that  $\hat{F}_1\cap \hat{I}_1\neq\emptyset$ which is a contradiction by virtue of the properties of elements of $\mathscr{P}(a,b)$. Thus either $c$ or $c'$ belongs to $\hat{F}_1$ which is thus prime as desired. A completely analogous argument shows that $\hat{F}_2$ is prime too.

\textbf{Right-to-left direction of (\ref{eq:ExLemm2}-\ref{eq:ExLemm3}):}  We show the contrapositive; that is, that if $a\lRes b\notin F$ there exists $F_1,F_2$ such that $a\in F_1$ and $b\notin F_2$. We proceed as in the case of $\fus$ by defining the set $\mathscr{P}(a,b)$ of filter-ideal pairs $((F_1,I_1),(F_2,I_2))$ such that
\begin{enumerate}
\item $\uparrow a\subseteq F_1$
\item $I_1=\{c\mid \exists d\in I_2 (c\lRes d)\in F\}$
\item $F_2=\{d\mid \exists c\in F_1 (c\lRes d)\in F\}$
\item $\downarrow a\subseteq I_2$
\item $F_1\subseteq\{c\mid\forall d\in I_2 (c\lRes d)\notin F\}$
\item $I_2\subseteq\{d\mid \forall c\in F_1 (c\lRes d) \notin F)\}$
\end{enumerate}
We make the following observations about $\mathscr{P}(a,b)$
\begin{itemize}
\item it is not empty: $((\uparrow a, \{c\mid \exists d\leq b (c\lRes d)\in F\}),(\{d\mid a\lRes d\in F\}, \downarrow b))\in \mathscr{P}(a,b)$. We need only check that conditions 5. and 6. are satisfied. Let $c$ be s.th. there exist $d\leq b$ with $c\lRes d\in F$, and assume $a\leq c$, it follows that 
\[
F\ni c\lRes d\leq a\lRes d\leq a\lRes b\notin F
\]
 a contradiction. Similarly, if there exist $d$ such that $a\lRes d\in F$ and $d\leq b$, then $F\ni a\lRes d\leq a\lRes b\notin F$, a contradiction.
\item it forms a poset under component-wise set inclusion
\item $I_1$ is an ideal: assume $c\in I_1$ and $c'\leq c$, since $\lRes$ is antitone in its first argument, we get $c\lRes d\leq c'\lRes d$ and thus $c'\lRes d\in F$. Moreover, if $c,c'\in I_1$ then there exist $d,d'\in I_2$ such that $c\lRes d,c'\lRes d'\in F$. Since $F$ is a filter, and $I_2$ is an ideal, we get $c\lRes (d\vee d'), c'\lRes (d\vee d')\in F$ and $(d\vee d')\in I_2$. If we now consider $(c\vee c')\lRes(d\vee d')$ we get by the anti-join preservation law of $\lRes$, $c\lRes(d\vee d')\wedge c'\lRes(d\vee d')$ which is a meet of elements of $F$ and thus an element of $F$. Thus $(d\vee d')$ witnesses that $c\vee c'\in I_1$
\item For completely dual reasons, $F_2$ is a filter.
\item $F_1\cap I_2=F_2\cap I_2=\emptyset$ for each $((F_1,I_1),(F_2,I_2))\in \mathscr{P}(a,b)$: assume $f\in F_1$ and $i\in I_1$ s.th. $f\leq i$, then by definition of $I_1$ there exist $d$ such that $i\lRes d\in F$ but since $\lRes$ is antitone in its first argument, this would mean $f\lRes d\in F$ which contradicts property 5. of $F_1$. Dually, assume that there exists $f\in F_2, i\in I_2$ s.th. $f\leq i$, then by definition of $F_2$, there exist $d$ such that $f\lRes d\in F$, but then we would also have $i\lRes d\in F$ which contradicts the property 6. of $I_2$.
\end{itemize}

We now check that $\mathscr{P}(a,b)$ has upper bound of chains, let $((F_1^i,I_1^i),(F_2^i,I_2^)) \in \mathscr{P}(a,b), i\in \omega$ and define
\[
F_i^\infty=\bigcup_j F_i^j, \hspace{2em}I_i^\infty=\bigcup_j I_i^j, \hspace{2em}i=1,2
\]
It is not difficult to see that $((F_1^\infty, I_1^\infty),(F_2^\infty, I_2^\infty))\in \mathscr{P}(a,b)$ by proceeding as in the existence lemma for $\fus$. We then apply Zorn's lemma to get a maximal element $((\hat{F}_1,\hat{I}_1),(\hat{F}_2,\hat{I}_2))$ of $\mathscr{P}(a,b)$. It is clear that $a\in\hat{F}_1, b\notin\hat{F}_2$. We need only check that they are prime filters. Assume $c\vee c'\in \hat{F}_1$ but $c,c'\notin \hat{F}_1$, it follows that
\[
((\hat{F}_1,\hat{I}_1),(\hat{F}_2,\hat{I}_2))\subsetneq ((\langle \hat{F}_1\cup\{c\}\rangle, \hat{I}_1),(\{c\mid \exists d\in \langle \hat{F}_1\cup\{c\}\rangle (c\lRes d)\in F\},\hat{I}_2)
\]
Since $((\hat{F}_1,\hat{I}_1),(\hat{F}_2,\hat{I}_2))$  is maximal, the right-hand side of the inequality must violate either 5. or 6., which in fact amounts to the same thing, namely the existence of $f\in \hat{F}_1, d\in \hat{I}_2$ s.th. $(f\vee c)\lRes d\in F$; that is, $(f\wedge c)\in \hat{I}_1$; that is, 
$(f\vee c)\leq i$, for some $i\in \hat{I}_1$. A similar argument implies the existence of $f'\in \hat{F}_1, i'\in \hat{I}_1$ such that $f'\wedge c\leq i'$ and a contradiction follows as in the proof for $\fus$.
To show that $\hat{F}_2$ is prime is equivalent to showing that $\hat{I}_2$ is prime, and a proof totally dual to the above proof shows just that. The proof for $\rRes$ is clearly identical.
\end{proof}

\begin{proof}[Proof of Proposition \ref{prop:JonTarskiCanExt}]
Recall Diagram (\ref{diag:jontar}):
\[
\xymatrix@C=12ex
{
\LRL A\ar[r]^{\LRL\eta_A}\ar[ddd]_{\alpha}\ar[ddr]_{\eta_{LA}} & \LRL \Upsets\pf A \ar[d]^{\semRL_{\pf A}}\\
& \Upsets \TRL \pf A \ar[d]^{\Upsets(\zeta_A)}\\
& \Upsets \pf \LRL A\ar[d]^{\Upsets\pf\alpha} \\
A\ar[r]_{\eta_A} & \Upsets \pf A
}
\]
where $\zeta_A$ is the canonical model structure map whose existence we have established in Theorem \ref{thm:rightInvDelta}. Recall that by definition of $\LRL$, $\alpha$ is equivalent to a nullary and three binary maps on $\Forg A$, which we denote as $I,\alpha_\fus,\alpha_{\ulRes}$ and $\alpha_{\urRes}$ satisfying the distribution laws DL\ref{ax:distribLaw1}-DL\ref{ax:distribLaw6}.  

Similarly, $\Upsets\pf \alpha\circ \Upsets \zeta_A\circ \semRL_{\pf A}$ is equivalent to a nullary operator and three binary maps on $\Forg \Upsets\pf A=\Forg A\ce$ which we will denote by $\Upsets\pf\alpha\circ \Upsets\zeta_A\circ I',\Upsets\pf\alpha\circ \Upsets\zeta_A\circ\delta_{\fus}, \Upsets\pf\alpha\circ \Upsets\zeta_A\circ\delta_{\ulRes}$ and $\Upsets\pf\alpha\circ \Upsets\zeta_A\circ\delta_{\urRes}$ and satisfy DL\ref{ax:distribLaw1}-DL\ref{ax:distribLaw6}. By commutativity of the above diagram these operators are extensions of $I,\alpha_\fus,\alpha_{\ulRes}$ and $\alpha_{\urRes}$ respectively. We want to show that they are in fact their unique canonical extensions. The treatment of the nullary operator is trivial. For the binary operators, we will show that they are smooth and thus equal to the unique canonical extensions $\alpha_{\fus}\ce, \alpha_{\ulRes}\ce$ and $\alpha_{\urRes}\ce$ respectively, by Proposition \ref{prop:canExtTopology}. We start by proving the following claim which readily generalises to the $n$-ary case.

\textbf{Claim}: Let $A, B$ be DLs and let $f:\Forg A\to \Forg B$. Assume that $\tilde{f}:\Forg A\ce\to \Forg B\ce$ is an extension of $f$ that (anti-)preserves all binary joins or (anti-)preserves binary meets, then $\tilde{f}$ is $(\sigma,\gamma)$-continuous and thus smooth.

\textbf{Proof}:  Since $\tilde{f}$ preserve all binary joins, its restriction $f$ preserves binary joins, and is thus smooth. Moreover, we also know 
that the canonical extension $f\ce=f\ced$ preserves all non-empty joins. If $\tilde{f}$ preserves all non-empty joins, then in particular it preserves all up-directed ones, and $f$ is consequently 
$(\gamma\eup,\gamma\eup)$-continuous. Thus we need only show that it is $(\sigma,\gamma\edown)$-continuous too. 
In fact, we show the stronger statement that $\tilde{f}$ is $(\sigma\edown,\sigma\edown)$-continuous. To see this, note first that since $\tilde{f}$ extends $f$ and preserves non-empty joins we have for every $u\in O(A)$
\begin{align*}
\tilde{f}(u)&=\tilde{f}(\bigvee\{a\in A\mid a\leq u\})\\
&=\bigvee\{\tilde{f}(a)\mid A\ni a\leq u\}\\
&=\bigvee\{f(a)\mid A\ni a\leq u\}=f\ced(u)=f\ce(u) . 
\end{align*}
That is, $\tilde{f}$ agrees with $f\ce$ on open elements. Now we use the fact that since $f\ce$ preserves non-empty joins, 
it is $(\sigma\edown, \sigma\edown)$-continuous (see Proposition \ref{prop:presPropAndTopol}); that is, that for any open $v\in O(B)$, there exists $u\in O(A)$ such that 
$f\ce(u)=v$ and thus $(f\ce)\inv(\downarrow v)=\downarrow u$. But since $\tilde{f}$ and $f\ce$ coincide on open elements, 
this also means that $\tilde{f}(u)=v$; that is, that $(\tilde{f})\inv(\downarrow v)=\downarrow u$; that is, $\tilde{f}$ is $(\sigma\edown, \sigma\edown)$-continuous.

The proof for the other preservation properties are very similar. Assume for example that $\tilde{f}$ preserves non-empty meets, it preserve down-directed ones and is thus $(\gamma\edown,\gamma\edown)$ continuous . Moreover $f\ce=f\ced$ preserves 
non-empty meets and $\tilde{f}$ agrees with $f\ce$ on all closed elements. Since $f\ce$ preserve non-empty meets, it is 
$(\sigma\eup,\sigma\eup)$-continuous, and thus so is $\tilde{f}$ by definition of $\sigma\eup$ and the fact that $\tilde{f}$ 
and $f\ce$ agree on closed elements. The proof for the anti-preservation properties are similar, with $(\sigma\edown,\sigma\eup)$ and $(\sigma\eup,\sigma\edown)$-continuity 
being shown for anti-join preservation and anti-meet preservation respectively.

This having been established, we can now return to our proof. Since $\Upsets\pf \alpha$ and $\Upsets \zeta_A$ are inverse images, they preserve any meet and any join, and in particular down-directed meets and up-directed joins. They are therefore $(\gamma,\gamma)$-continuous. Note that this concept makes sense for maps between any DLs, not just canonical extensions. All that needs to be shown now is that $\delta_{\fus},\delta_{\ulRes}$ and $\delta_{\urRes}$ have one of the preservation properties of proposition \ref{prop:presPropAndTopol}. We start with $\delta_\fus$: for any $u_i,v\in \Upsets\pf A, i\in I$ we have
\begin{align*}
\delta_\fus(\bigvee_{i\in I} u_i,v)&=\{t\in \TRL\pf A\mid \exists (x,y)\in \pi_2(t) \exists i\in I, x\in u_i, y\in v\}\\
&=\bigcup_{i\in I}\{t\in \TRL\pf A\mid \exists (x,y)\in \pi_2(t) \hspace{1ex} x\in u_i, y\in v\}\\
&=\bigvee_{i\in I}\delta_\fus(u_i,v)
\end{align*}
and similarly for the second argument. Thus $\delta_\fus$ preserves all non-empty joins, and by Proposition \ref{prop:presPropAndTopol} it is therefore $(\sigma^2,\gamma)$-continuous. Since  $\Upsets\pf \alpha$ and  $\Upsets \zeta_A$ are $(\gamma,\gamma)$-continuous, we get that $\Upsets\pf \alpha\circ \Upsets \zeta_A\circ \delta_\fus$ is $(\sigma^2,\gamma)$-continuous and thus
\[
\alpha_\fus\ce=\Upsets\pf \alpha\circ \Upsets \zeta_A\circ \delta_\fus
\]
by Proposition \ref{prop:canExtTopology}.

We can similarly show that $\delta_/$ (resp. $\delta_{\ulRes})$ preserves all non-empty meets in its first (resp. second) argument and anti-preserves 
non-empty joins in its second (resp. first) argument. As an illustration,
\begin{align*}
\delta_{\ulRes}(\bigvee_{i\in I} u_i,v)&=\{t\in  \TRL\pf A\mid\forall (x,y)\in \pi_3(t)\hspace{1ex}x\in \bigvee_{i\in I} u_i\Rightarrow y\in v\}\\
&=\{t\in  \TRL\pf A\mid\forall (x,y)\in \pi_3(t)\hspace{1ex}x\notin \bigvee_{i\in I} u_i\text{ or } y\in v\}\\
&=\bigcap_{i\in I}\{t\in  \TRL\pf A\mid\forall (x,y)\in \pi_3(t)\hspace{1ex} x\notin u_i\text{ or }y\in v\}\\
&=\bigwedge_{i\in I}\delta_{\ulRes}(u_i,v)
\end{align*}

In consequence we also get that
\[
\alpha_{\ulRes}\ce=\Upsets\pf \alpha\circ \Upsets \zeta_A\circ \delta_{\ulRes}
\]
and similarly for $\rRes$, which concludes the proof.
\end{proof}

\end{document}